%
\documentclass[runningheads]{llncs}
\makeatletter
\let\c@proposition\c@theorem
\let\c@corollary\c@theorem
\let\c@lemma\c@theorem
\let\c@definition\c@theorem
\let\c@example\c@theorem
\let\c@remark\c@theorem
\def\@listI{\leftmargin\leftmargini
            \parsep 0\p@ \@plus1\p@ \@minus\p@
            \topsep 1\p@ \@plus1\p@ \@minus\p@
            \itemsep0\p@}
\let\@listi\@listI
\@listi
\makeatother

\usepackage[T1]{fontenc}
\usepackage{lmodern}
%
\usepackage{graphicx}
%
%

\usepackage{adjustbox}
\usepackage{proofzilla}
\usepackage{virginialake}
\vlnostructuressyntax
\vlnosmallleftlabels
\odframefalse
\odbackgroundtrue
\odbackgroundfirstfalse
\odframefirstfalse

\usepackage{amsfonts}
\usepackage{amsmath}
\usepackage{mathtools}
\usepackage{hyperref}
\usepackage{cleveref}
\usepackage{thm-restate}
\usepackage{xspace}

\usepackage{stmaryrd}
\usepackage{arydshln}



\def\set#1{\{#1\}}

\def\tuple#1{\langle#1\rangle}
\usepackage{scalerel}

\renewcommand{\emptyset}{\varnothing}

\def\defn#1{{\itshape\bfseries\boldmath #1}}

\def\resp#1{(resp.~{#1})\xspace}

\mathchardef\mhyphen="2D

\newcommand{\mydot}{\;.}

\newcommand{\quand}{\quad\mbox{and}\quad}

\def\IH{\mathsf{IH}}


\newcommand{\nodn}[4]{\vlinf{#2}{#4}{#3}{#1}}
\newcommand{\nodt}[4]{\vlidf{#2}{#4}{#3}{#1}}

\newcommand{\oodh}[1]{\od{\odh{#1}}}
\newcommand{\vlhtrl}[5]{\vltrl{#1}{#2}{#3}{#4}{\vlhy{~}}{\vlhy{#5}}{\vlhy{~}}}





\newcommand{\atomset}{\mathcal A}

\def\feq{\equiv}
\def\hfeq{\hat\equiv}


\newcommand{\ctx}[1][\chole]{[#1]}
\newcommand{\Ctx}[1][]{\left[{#1}\right]}

\newcommand{\mathca}{}
\newcommand{\cC}[1][\chole]{\mathca C\!\Ctx[#1]}
\newcommand{\cP}[1][\chole]{\mathca P\!\Ctx[#1]}
\newcommand{\cN}[1][\chole]{\mathca N\!\Ctx[#1]}

\newcommand{\cPp}[1][\chole]{\mathca P'\Ctx[#1]}


\newcommand{\proves}[1][]{\mathord{\vdash_{#1}\,}}
\def\dD{\mathcal D}
\def\dDp{\mathcal{P}}
\def\dDn{\mathcal{N}}
\def\dDq{\mathcal{Q}}
\def\dDm{\mathcal{M}}

\def\MLL{\mathsf{MLL}}

\def\MLLx{\mathsf{MLL}_{\mixr}}
\def\BV{\mathsf{BV}}
\def\hBV{\widehat{\mathsf{BV}}}
\def\BVm{\mathsf{BV^-}}
\def\MiLL{\mathsf{IMLL}}
\def\IMLL{\mathsf{IMLL}}

\def\BiV{\mathsf{IBV}}
\def\IBV{\mathsf{IBV}}
\def\IBVm{\mathsf{IBV^-}}
\def\SBiV{\mathsf{SIBV}}
\def\SIBV{\mathsf{SIBV}}
\def\pomset{\mathsf{pomset}}
\def\sS{\mathsf{S}}

\def\XS{\mathsf{X}}

\def\NML{\mathsf{NML}}
\def\NMLm{\mathsf{NML^-}}
\def\NMLp{\mathsf{NML^\star}}

\def\NiML{\mathsf{INML}}
\def\INML{\mathsf{INML}}

\def\INMLp{\mathsf{INML^\star}}

\def\LJ{\mathsf{LJ}}

\def\NLJ{\mathsf{NLJ}}
\def\BLJ{\mathsf{BLJ}}

\defedgetype{DOT}{densely dotted}{}
\defedgetype{DDOT}{>=stealth,->,draw,densely dotted}{}

\def\axrule{\mathsf {ax}}
\def\AXrule{\mathsf {AX}}
\def\cutr{\mathsf {cut}}
\def\mixr{\mathsf{mix}}
\def\airule{\mathsf{ai}}
\def\irule{\mathsf{i}}
\def\swir{\mathsf{s}}
\def\sqir{\mathsf{sq}}
\def\qrule{\mathsf{q}}
\def\rrule{\mathsf{r}}
\def\urule{\mathsf{u}}

\def\comrule{\mathsf{com}}

\def\refrule{\mathsf{ref}}
\def\curryrule{\mathsf{cur}}
\def\defr{\mathsf{def}}

\def\curry{\mathsf{cur}}
\def\uncurry{\mathsf{ruc}}

\def\axrule{\mathsf{ax}}
\def\llimp{\mathsf{\limp_L}}
\def\rlimp{\mathsf{\limp_R}}
\def\lltens{\mathsf{\ltens_L}}
\def\rltens{\mathsf{\ltens_R}}
\def\lunitr{\lunit_R}
\def\lunitl{\lunit_L}

\def\idr{\irule\mathord{\downarrow}}

\def\aidr{\airule\mathord{\downarrow}}
\def\aiur{\airule\mathord{\uparrow}}
\def\qdr{\qrule\mathord{\downarrow}}

\def\rur{\rrule\mathord{\uparrow}}
\def\udr{\urule\mathord{\downarrow}}

\def\squr{\mathsf{sq_R}}
\def\squl{\mathsf{sq_L}}


\def\waidr{\mathsf{ai}_\downarrow^\circ}
\def\baidr{\mathsf{ai}_\downarrow^\bullet}
\def\baiur{\mathsf{ai}_\uparrow^\bullet}
\def\widr{\mathsf{i}_\downarrow^\circ}
\def\biur{\mathsf{i}_\uparrow^\bullet}

\def\tudr{\mathsf{u}_\downarrow^{\ltens}}
\def\ludr{\mathsf{u}_\downarrow^{\limp}}
\def\tuur{\mathsf{u}_\uparrow^{\ltens}}
\def\luur{\mathsf{u}_\uparrow^{\limp}}

\def\wsudr{\mathsf{u}_\downarrow^\lseq}
\def\wcudr{\mathsf{u}_\downarrow^\lcoseq}
\def\bsudr{\mathsf{u}_\uparrow^\lseq}
\def\bcudr{\mathsf{u}_\uparrow^\lcoseq}

\def\tassr{\mathsf{asso}^{\ltens}}
\def\sassr{\mathsf{asso_R^{\lseq}}}
\def\sassl{\mathsf{asso_L^{\lseq}}}

\def\wlsr{\mathsf{s_L^\circ}}
\def\wtsr{\mathsf{s_R^\circ}}
\def\blsr{\mathsf{s_L^\bullet}}
\def\btsr{\mathsf{s_R^\bullet}}
\def\wlsq{\mathsf{sq_R^\circ}}
\def\wrsq{\mathsf{sq_L^\circ}}
\def\blsq{\mathsf{sq_R^\bullet}}
\def\brsq{\mathsf{sq_L^\bullet}}
\def\wrr{\refrule^\circ}
\def\brr{\refrule^\bullet}

\def\wqdr{\mathsf{q}_\downarrow^\circ}
\def\bqdr{\mathsf{q}_\downarrow^\bullet}
\def\wqur{\mathsf{q}_\uparrow^\circ}
\def\bqur{\mathsf{q}_\uparrow^\bullet}

\def\tcomr{\comrule^{\ltens}}









\def\cneg#1{{#1}^\lbot}
\def\cnegneg#1{{#1}^{\lbot\lbot}}

\def\wsymb{\circ}
\def\bsymb{\bullet}

\def\wrule{\!\!\circ}
\def\brule{\!\!\bullet}
\def\isw#1{{#1}^\circ}
\def\isb#1{{#1}^\bullet}

\def\wA{\isw A}
\def\wB{\isw B}
\def\bA{\isb A}
\def\bB{\isb B}

\def\wC{\isw C}

\renewcommand{\lunit}{\mathbb{I}}

\newcommand{\lneg}{^\bot}
\newcommand{\emb}[1]{#1^{\flat}}

\newcommand{\odpr}[2][]{\od{\odp{#1}{#2}{}}}
\newcommand{\oddr}[3][]{\od{\odd{\odh{#2}}{#1}{#3}{}}}
\newcommand{\oddb}[3][]{\od{\odd{\odh{#2}}{#1}{#3}{\hskip-.5em\bullet}}}
\newcommand{\oddrS}[4][]{\od{\odd{\odh{#2}}{#1}{#3}{#4}}}
\newcommand{\oddbS}[4][]{\od{\odd{\odh{#2}}{#1}{#3}{\hskip-.5em\bullet \; #4}}}
\newcommand{\odprP}[2][]{\left(\od{\odp{#1}{#2}{}}\right)}
\newcommand{\oddrP}[3][]{\left(\od{\odd{\odh{#2}}{#1}{#3}{}}\right)}
\newcommand{\oddbP}[3][]{\left(\od{\odd{\odh{#2}}{#1}{#3}{\hskip-.5em\bullet}}\right)}
\newcommand{\odtP}[4]{\left(\odt{#1}{#2}{#3}{#4}\right)}
\newcommand{\odnP}[4]{\left(\odn{#1}{#2}{#3}{#4}\right)}

\def\acutl{\mathsf{a\mhyphen cut_L}}
\def\acutr{\mathsf{a\mhyphen cut_R}}

\begin{document}

\title{Intuitionistic BV (Extended Version)\thanks{Partially supported by the French Ministry for Europe and Foreign Affairs (MEAE), the Embassy of France in the UK, and the French Ministry of Higher Education and Research (MESR), via the PHC Sophie Germain project
``Using Formal Logic to Reduce Bias in Large Language Models''}}
\titlerunning{Intuitionistic BV}

\author{
  Matteo Acclavio\inst{1}\orcidID{0000-0002-0425-2825} 
  \and
  \\
  Lutz Stra\ss burger\inst{2}\orcidID{0000-0003-4661-6540}
}
\authorrunning{M.~Acclavio \and L.~Stra\ss burger}
%
\institute{
  Sussex University, Brighton, UK
\and
INRIA Saclay,
Palaiseau, France
}

\maketitle

\begin{abstract}
  We present the logic IBV, which is an intuitionistic version of BV, in the sense that its restriction to the MLL connectives is exactly IMLL, the intuitionistic version of MLL. For this logic we give a deep inference proof system and show cut elimination. We also show that the logic obtained from IBV by dropping the associativity of the new non-commutative seq-connective is an intuitionistic variant of the recently introduced logic NML. For this logic, called INML, we give a cut-free sequent calculus.  
\end{abstract}

\section{Introduction}

The logic $\BV$ is a conservative extension of multiplicative linear logic with mix ($\MLLx$) with a self-dual non-commutative connective ($\lseq$) called \emph{seq}.
It was introduced by Guglielmi in \cite{gug:SIS,gug:str:01} in the attempt of providing cut-free deduction system for Retoré's $\pomset$ logic \cite{retore:phd,ret:newPomset}.\footnote{
  The inclusion of $\BV$ in $\pomset$ has been known since the introduction of $\BV$~\cite{lutz:phd}. However, that this inclusion is strict has only been proven recently \cite{tito:lutz:csl22,tito:str:SIS-III}.
}
To this end, Guglielmi developed the deep inference formalism of the calculus of structures to deal with the limitations of the traditional proof systems based on Gentzen's work (sequent calculi and natural deduction).
In fact, as shown by Tui in \cite{tiu:SIS-II}, the presence of the non-commutative connective $\lseq$ makes it impossible to define a cut-free sequent calculus for $\BV$.
Nonetheless, the logic $\BV$ has found applications in the formalisation of process algebras (e.g., \cite{bru:02} and \cite{hor:tiu:tow,hor:nom}), in typing linear lambda calculus with explicit substitutions~\cite{roversi:lambda}, and in describing quantum computations through \emph{BV-categories}~\cite{blu:pan:slav:deep,sim:kiss:BV}.

In this paper, we want to define an intuitionistic variant of $\BV$. The biggest obstacle towards that goal is the fact that in $\BV$ the unit $\lunit$  is \emph{self-dual} and is shared not only by the connectives $\ltens$ and $\lpar$, but also the new \emph{self-dual} connective~$\lseq$.
It is well-known that intuitionistic multiplicative linear logic ($\MiLL$) can be obtained by polarizing formulas in multiplicative linear logic ($\MLL$).
But the self-duality of the unit $\lunit$ and the seq-connective $\lseq$ makes it difficult to extend this polarization to $\BV$.
Our solution to this problem is to make the unit  $\lunit$ only `half a unit'. That means that we no longer have $A\lseq \lunit\equiv A\equiv\lunit\lseq A$, but only $A\limp A\lseq \lunit$ and $A\limp \lunit\lseq A$.

In classical $\BV$ the triple $\tuple{\ltens,\lpar,\lunit}$ forms an \emph{isomix category} \cite{cockett:seely:97}, and the connective $\lseq$ is a \emph{degenerate linear functor} (in the sense of \cite{blu:pan:slav:deep}), that is, it validates the following implication.\footnote{The fact that the unit $\lunit$ of the $\ltens$ and $\lpar$ is also the (left and right) unit for $\lseq$ is a consequence of the definition of degenerate linear functor \cite{blu:pan:slav:deep}.}
\begin{equation}\label{eq:degfun}
  ((A\lseq B)\ltens (C\lseq D))\limp ((A\ltens C)\lseq (B\ltens D))
\end{equation}
We define \emph{intuitionistic $\BV$} ($\BiV$) by extending intuitionistic multiplicative linear logic ($\MiLL$), where the triple $\tuple{\ltens,\limp,\lunit}$ forms a symmetric monoidal closed structure, with a non-commutative connective $\lseq$ validating \Cref{eq:degfun} and the unit laws $A\limp (\lunit\lseq A)$ and $A\limp(A\lseq \lunit)$.%
\footnote{\label{fn:collapse}
  If $(\lunit \lseq A) \limp A$ and $(A \lseq \lunit) \limp A$ were valid in $\IBV$, then the connectives $\ltens$ and $\lseq$ would collapse. 
  See \Cref{prop:collapse} in \Cref{app:collapse} for details.
}

\begin{figure}[t]
  $$
    \begin{array}{c@{\qquad}c@{\quad\qquad}c}
      \vpz{2m}{\MLLx}& \vpz4{\NML} & \vpz{6}{\BV}
      \\\\
      \vpz1{\MiLL} & \vmod3{\NiML} & \vmod5{\BiV}
    \end{array}
    \Dedges{pz1/mod3,mod3/mod5,pz2m/pz4,pz4/pz6}
    \Dedges{pz1/pz2m,mod3/pz4,mod5/pz6}
  $$
  \caption{
    Proof systems discussed in this paper. The ones in the boxes are new.
  }
  \label{fig:systems}
\end{figure}

\paragraph{Contributions}
We give a deep inference proof system for $\BiV$ (Sections~\ref{sec:ibv} and~\ref{sec:properties}), and we prove cut elimination via a \emph{splitting lemma} (in Section~\ref{sec:splitting}). We argue that $\BiV$ is indeed the intuitionistic version of $\BV$, by showing 
that (i) $\BiV$ is a conservative extension of $\IMLL$ (Section~\ref{sec:IMLLtoIBV}),
and that
(ii)  the unit-free version of $\BV$ is a conservative extension of the unit-free version of $\BiV$ (Section~\ref{sec:IBVtoBV}).\footnote{
  Note that $\BV$ is not conservative over $\MLL$, because it is conservative over $\MLLx$, and $\MLLx$ is not conservative over $\MLL$.
}

Finally, in Section~\ref{sec:assoc}, we present a weaker logic, called $\INML$, in which the connective $\lseq$ is not associative. We give a cut-free sequent calculus for $\NiML$, which is obtained by considering a single-conclusion two-sided version of the sequent calculus for the \emph{non-commutative multiplicative logic} ($\NML$) from \cite{acc:man:PN,acc:man:mon:ESOP}.
We prove that $\NiML$ is another conservative extension of $\MiLL$, which can be conservatively extended to $\NML$ and to $\BiV$ (see \Cref{fig:systems}).

\section{Formulas and Inference Rules}\label{sec:ibv}

\begin{figure}[t]
  \def\myskip{-1.5ex}
  $$
  \begin{array}{c}
    \nodn{\lunit}{\waidr}{a \limp a}{\wrule}
    \qquad
    \nodn{A}{\wsudr}{\lunit \lseq A}{\wrule}
    \qquad
    \nodn{A}{\wcudr}{A \lseq \lunit}{\wrule}
    \qquad
    \nodn{A \ltens B}{\wrr}{A \lseq B}{\wrule}
    \qquad
    \nodn{A \lseq B}{\brr}{A \ltens B}{\brule}
    \\ \\[\myskip]
    \nodn{A \ltens (B\limp C)}{\wlsr}{(A\limp B) \limp C}{\wrule}
    \qquad
    \nodn{(A\limp B)\ltens C}{\wtsr}{A\limp(B\ltens C)}{\wrule}
    \qquad
    \nodn{(A\limp B) \limp C}{\blsr}{A \ltens (B\limp C)}{\brule}
    \qquad
    \nodn{A\limp(B\ltens C)}{\btsr}{(A\limp B)\ltens C}{\brule}
    \\ \\[\myskip]
    \nodn{(A\limp B)\lseq C}{\wrsq}{A\limp(B\lseq C)}{\wrule}
    \qquad
    \nodn{B \lseq (A\limp C)}{\wlsq}{A\limp(B\lseq C)}{\wrule}
    \qquad
    \nodn{(A\ltens B)\lseq C}{\brsq}{A\ltens(B\lseq C)}{\brule}
    \qquad
    \nodn{B\lseq (A\ltens C)}{\blsq}{A\ltens(B\lseq C)}{\brule}
    \\ \\[\myskip]
    \nodn{(A\limp B)\lseq(C\limp D)}{\wqdr}{(A\lseq C)\limp (B\lseq D)}{\wrule}
    \qquad\qquad\qquad
    \nodn{(A\ltens B)\lseq(C\ltens D)}{\bqdr}{(A\lseq C)\ltens (B\lseq D)}{\brule}
    \\\\[\myskip]
    \hdashline
    \\[\myskip]
    \nodt{A\ltens B}{\tcomr}{B\ltens A}{}
  \qquad
    \nodt{(A\ltens B)\ltens C}{\tassr}{A\ltens (B\ltens C)}{}
  \qquad
    \nodt{(A\lseq B)\lseq C}{\sassl}{A\lseq (B\lseq C)}{}
  \qquad
    \nodt{A\lseq (B\lseq C)}{\sassr}{(A\lseq B)\lseq C}{}
  \\\\[\myskip]
  \nodt{A}{\tudr}{\lunit \ltens A}{}
  \qquad
  \nodt{A}{\ludr}{\lunit \limp A}{}
  \qquad
  \nodt{(A\ltens B)\limp C}{\curry}{A\limp (B\limp C)}{}
  \qquad
  \nodt{A\limp (B\limp C)}{\uncurry}{(A\ltens B)\limp C}{}
  \end{array}
  $$
  \vadjust{\vskip-2ex}
  \vadjust{\vskip-2ex}
  \caption{
    Inference rules for system $\IBV$
  }
  \label{fig:IBV}
\end{figure}

\begin{figure}[t]
  \def\myskip{-1.5ex}
  $$
  \begin{array}{c}
    \nodn{a\limp a}{\baiur}{\lunit}{\brule}
    \qquad
    \nodn{\lunit\lseq A}{\bsudr}{A}{\brule}
    \qquad
    \nodn{A\lseq\lunit}{\bcudr}{A}{\brule}
    \qquad
    \nodt{\lunit\ltens A}{\tuur}{A}{}
    \qquad
    \nodt{\lunit\limp A}{\luur}{A}{}
  \\\\[\myskip]    
    \nodn{(A\lseq C)\ltens(B\lseq D)}{\wqur}{(A\ltens B)\lseq (C\ltens D)}{\wrule}
    \qquad
    \nodn{(A\lseq C)\limp (B\lseq D)}{\bqur}{(A\limp B)\lseq (C\limp D)}{\brule}
  \end{array}
  $$
  \caption{Additional rules for $\SBiV$.}
  \label{fig:SIBV}
\end{figure}

We consider \defn{formulas} generated from a countable set $\atomset=\set{a,b,c,\ldots}$ of atoms, a unit~$\lunit$, and the binary connectives implication~$\limp$, tensor~$\ltens$, and seq~$\lseq$~:
\begin{equation}\label{eq:formulas}
  A,B \coloneq a \mid \lunit \mid   A\ltens B\mid A\limp B\mid A\lseq B
  \qquad\qquad
  a \in \atomset
\end{equation}
A formula is \defn{unit-free} if it contains no occurrences of $\lunit$.
In order to define the deep inference rules for our systems, we need to define \defn{contexts}, which are formulas where one atom occurrence is replaced by a hole~$\ctx$. In the intuitionistic setting we have to distinguish between \defn{positive contexts}, denoted by $\cP$, and \defn{negative contexts}, denoted by $\cN$, depending on the position of the `hole':
\begin{equation}\label{eq:contexts}
  \hskip-1em
  \begin{array}{r@{~\coloneq~}r@{\,}r@{\,\mid\,}r@{\,\mid\,}r@{\,\mid\,}r@{\,\mid\,}r@{\,\mid\,}r}
    \cP&\ctx\mid& \cP\ltens A& A\ltens\cP& A\limp\cP&\cN\limp A& \cP\lseq A& A\lseq\cP
    \\
    \cN&&\cN\ltens A& A\ltens\cN& A\limp\cN& \cP\limp A& \cN\lseq A& A\lseq\cN
  \end{array}
\end{equation}

The inference rules for the \defn{system $\IBV$} are shown in \Cref{fig:IBV}. 
The reader familiar with classical $\BV$ might be surprised at the large number of inference rules. 
But note that 
(i) because we are in the intuitionistic setting, we need two versions of each rule: one for positive contexts and one for negative contexts,%
\footnote{This would correspond to rules on the left and rules on the right of the turnstile in the sequent calculus.} 
(ii) because the $\limp$ is not commutative, we need a left and a right version of each $\swir$-rule (called \defn{switch}), and 
(iii) because the $\lunit$ is not a proper unit of the $\lseq$, we need the different versions of the $\refrule$- and $\sqir$-rules, which would just be instances of the $\qdr$-rule in classical $\BV$.%
\footnote{See \cite{kahr:04:iscis,tito:str:SIS-III,retore:99} for unit-free versions of $\BV$, also having these rules.}

We use the $\circ$ and $\bullet$ decoration on the inference rules to indicate whether it applies in a positive or negative context, respectively. Finally, the rules below the dashed line have no such decoration, which indicates that they can be applied in positive and negative contexts. Furthermore, we use dotted lines for the rules to indicate that they correspond to what is usually given as equational theory in classical $\BV$. They comprise associativity, commutativity, unit-equations, and currying.\looseness=-1 

From the inference rules, we can now define derivations. We are going to use the \emph{open deduction} style, and again, because of the intuitionistic setting, we have \defn{positive} and \defn{negative derivations}, which are defined inductively as follows:
\begin{equation}\arraycolsep=3pt
  \scalebox{.9}{$
  \begin{array}{l@{\;}rc|c|c|c|c|c}
    \mbox{Positive:}& \dDp, \dDq \coloneq& \oodh{A} & \oodh{\dDp\ltens \dDq} & \oodh{\dDp\lseq \dDq} & \oodh{\dDn \limp \dDp} & \odn{\dDp}{\rrule}{\dDq}{\wrule}& \odt{\dDp}{\rrule}{\dDq}{}
  \\\\[-1ex]
    \mbox{Negative:~}& \dDn, \dDm \coloneq& \oodh{A} & \oodh{\dDn\ltens \dDm} & \oodh{\dDn\lseq \dDm} & \oodh{\dDp \limp \dDn} & \odn{\dDn}{\rrule}{\dDm}{\brule} & \odt{\dDn}{\rrule}{\dDm}{}
  \end{array}
  $}
\end{equation}
where $\oodh{A}$ is the identity derivation with premise $A$ and conclusion $A$. The composition by $\rrule$ is only allowed if the premise of $\rrule$ is the conclusion of the derivation $\dDp$ \resp{$\dDn$}, and the conclusion of $\rrule$ is the premise of the derivation $\dDq$ \resp{$\dDm$}. Then the premise of the resulting derivation is the premise of $\dDp$ \resp{$\dDn$} and its conclusion is the conclusion of $\dDq$ \resp{$\dDm$}. 

Let $\XS$ be a set of inference rules. We write ${\oddrS[\dDp]AB\XS}$ (resp.~${\oddbS[\dDn]AB\XS}$) for a positive derivation $\dDp$ (resp.~negative derivation~$\dDn$) whose inference rules are all from $\XS$, whose premise is $A$ and whose conclusion is~$B$.
A \defn{proof} of $A$ is a positive derivation $\dDp$ with premise $\lunit$ and conclusion $A$, which may be denoted $\od{\odp{\dDp}{A}{\XS}}$, and we write $\proves[\XS] A$ if there is a proof of $A$ in $\XS$.
In the following, we will omit $\XS$ in derivations if $\XS=\BiV$.

Consider now the inference rules in Figure~\ref{fig:SIBV}. They are the up-versions of the down-rules (i.e., the ones with a down-arrow in the name) of $\IBV$. The rules of $\IBV$ without an arrow in the name are part of both, the up- and the down-fragment. We call \defn{system $\SIBV$} the union of the rules in Figures~\ref{fig:IBV} and~\ref{fig:SIBV}. 
The `$\sS$' in $\SIBV$ stands for \emph{symmetric}, and we follow here the naming scheme that is common in deep inference \cite{gug:str:01,bru:tiu:01,roversi:lambda}.

Of course, in the next sections, we will show that the two systems $\SIBV$ and $\IBV$ are equivalent, i.e., prove the same formulas: $\proves[\SIBV] A$ iff $\proves[\IBV] A$.

\begin{remark}
  It is common in deep inference, and in particular for $\BV$, to consider formulas modulo an equivalence relation, covering associativity, commutativity, and units for the connectives. In our system, this would correspond to the rules marked with a dotted line in Figures~\ref{fig:IBV} and~\ref{fig:SIBV}. We chose not to do this in this paper because (i) the unit $\lunit$ is only half a unit for the $\lseq$ connective, and the rules $\wsudr$, $\wcudr$, $\bsudr$, $\bcudr$ would be needed anyway, (ii) it might be confusing to the reader to have implicit currying, and (iii) making everything explicit makes derivations easier to read.
\end{remark}

\section{Properties of the Systems}\label{sec:properties}

In this section we show some basic properties of $\IBV$ and $\SIBV$. For this, consider the two rules\vadjust{\vskip-1ex}
\begin{equation}\label{eq:cut}
  \nodn{\lunit}{\widr}{A \limp A}{\wrule}
  \qquad\quand\qquad
  \nodn{A\limp A}{\biur}{\lunit}{\brule}
\end{equation}
called \defn{identity} and \defn{cut}. Their atomic versions $\aidr$ and $\aiur$ have already been shown in Figures~\ref{fig:IBV} and~\ref{fig:SIBV}.

\begin{figure}[t]
  \adjustbox{max width=\textwidth}{$\begin{array}{c}
    \begin{array}{c}
      \odt{\lunit}{\ludr}{\lunit\limp\lunit}{}
    \qquad
      \odn{\lunit}{\waidr}{a\limp a}{}
    \\\\
    \od{
      \odi{
        \odi{\odh{\lunit}}{\wsudr}{
          \oddrP[\IH]{\lunit}{B\limp B}{}
          \lseq 
          \oddrP[\IH]{\lunit}{C\limp C}{}
        }{\wrule}
      }{\wqdr}
          {
        (B\lseq C) \limp (B\lseq C)
      }{\wrule}
    }
    \end{array}
    \quad
    \od{
      \odo{
        \odi{
          \odo{\odh{\lunit}}{\tudr}{
              \oddrP[\IH]{\lunit}{B\limp B}{}
              \ltens 
              \oddrP[\IH]{\lunit}{C\limp C}{}
          }{}
        }{\wlsr}{
          \left(\od{
            \odo{
              \odi{
                \odh{(B\limp B)\limp C} 
              }{\blsr}{
                B\ltens (B\limp C)
              }{\brule}
            }{\tcomr}{
              (B\limp C)\ltens B
            }{}
          }\right)
          \limp C
        }{\wrule}
      }{\curry}{
        (B\limp C) \limp (B \limp C)
      }{}
    }
    \quad
    \vlupsmash{
    \od{
      \odo{
        \odi{
          \odo{\odh{\lunit}}{\tudr}{
              \oddrP[\IH]{\lunit}{B\limp B}{}
              \ltens 
              \oddrP[\IH]{\lunit}{C\limp C}{}
          }{}
        }{\wtsr}{
          B\limp 
          \odtP{
            \odt{
              B\ltens (C\limp C)
            }{\tcomr}{
                (C\limp C) \ltens B
            }{}
          }{\curryrule}{
            C\limp 
            \odtP{C\ltens B}{\tcomr}{B\ltens C}{}
          }{}
        }{\wrule}
      }{\curryrule}{
        (B\ltens C) \limp (B \ltens C)
      }{}
    }
    }
  \end{array}$}
  \caption{Derivability of $\widr$ in $\BiV$.}
  \label{fig:genIrule}
\end{figure}
\begin{figure}[!t]
  $$
  \scalebox{.9}{$
    \od{
      \odo{\odi{\odo{\odh{P}}{\tudr}{
            P
            \ltens
            \odnP{\lunit}{\widr}{
              \odn{Q}{\bqdr}{P}{\brule}
              \limp 
              Q
            }{\wrule}
          }{}
        }{\wlsr}{
          \odnP{P\limp P}{\biur}{\lunit}{\brule}
          \limp 
          Q
        }{\wrule}
      }{\luur}{Q}{}
    }
  {\hskip4em}
    \od{
      \odo{
        \odi{\odo{\odh{R}}{\ludr}{
            \odnP{\lunit}{\widr}{
              S
              \limp 
              \odn{S}{\wqdr}{R}{\wrule}
            }{\wrule}
            \limp 
            R
          }{}
        }{\blsr}{
          S \ltens 
            \odnP{
              R
              \limp 
              R
            }{\biur}{\lunit}{\brule}
        }{\brule}
      }{\tuur}{
        S
      }{}
    }
    $}
  $$  
  \caption{Using $\biur$ to derive $\wqur$ and $\bqur$}
  \label{fig:derive-up}
\end{figure}

\begin{lemma}\label{lemma:genIrules}
  We have the following results:
  \begin{enumerate}
    \item\label{genIrules:1} $\widr$ is derivable in $\BiV$;
    \item\label{genIrules:2} $\biur$ is derivable in $\SIBV$; 
    \item\label{genIrules:3} $\baiur$, $\wqur$, and $\bqur$ are derivable in $\BiV\cup\set{\biur,\tuur,\luur}$.
  \end{enumerate}
\end{lemma}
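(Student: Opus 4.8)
The plan is to treat the three items separately: item~\ref{genIrules:1} by structural induction on the formula, item~\ref{genIrules:2} by the dual induction, and item~\ref{genIrules:3} by the familiar fact that up-rules become derivable once a generic identity and a generic cut are on hand; the concrete derivations witnessing the claims are those already drawn in \Cref{fig:genIrule,fig:derive-up}. For item~\ref{genIrules:1} I would prove that the generic identity $\nodn{\lunit}{\widr}{A\limp A}{\wrule}$ is derivable in $\BiV$ by induction on $A$. The base cases are immediate: $A=a$ is the atomic identity $\waidr$, and $A=\lunit$ comes from $\ludr$. In the inductive step I would combine the two induction hypotheses $\lunit\to B\limp B$ and $\lunit\to C\limp C$ with, for each top connective, the suitable rule: for $A=B\lseq C$ first $\wsudr$ to split $\lunit$ into $\lunit\lseq\lunit$ and then the seq-interaction rule $\wqdr$; for $A=B\ltens C$ and $A=B\limp C$ the switch rules (in both their $\circ$- and $\bullet$-versions) together with $\tudr$, commutativity $\tcomr$, and currying. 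These are exactly the derivations of \Cref{fig:genIrule}.

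For item~\ref{genIrules:2} I would observe that, by construction, $\SIBV$ contains the dual up-rule of every down-rule used in item~\ref{genIrules:1}: $\baiur$ and $\luur$ for the base cases, and $\wqur$, $\bsudr$, $\tuur$ for the inductive cases, while the self-dual switches, commutativity and currying live in both fragments. Hence the generic cut $\nodn{A\limp A}{\biur}{\lunit}{\brule}$ is derived by the very same induction read upside-down, each step being replaced by its dual; equivalently, one may invoke the up/down symmetry of $\SIBV$ directly.

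For item~\ref{genIrules:3}, $\baiur$ is merely the instance $A=a$ of $\biur$, so nothing has to be done there. For $\wqur$ and $\bqur$ I would use that, by item~\ref{genIrules:1}, the generic identity $\widr$ is derivable in $\BiV$, so together with the hypothesised $\biur$, $\tuur$, $\luur$ both a generic identity and a generic cut are available. Writing $P$ for the premise $(A\lseq C)\ltens(B\lseq D)$ and $Q$ for the conclusion $(A\ltens B)\lseq(C\ltens D)$ of $\wqur$, I would: create a unit next to $P$ with $\tudr$, expand it via $\widr$ into a copy $Q\limp Q$ of the conclusion, rewrite the antecedent of that copy by the matching down-rule $\bqdr\colon Q\to P$, rearrange with the switch $\wlsr$ so that an identity $P\limp P$ ends up in negative position, discharge it with $\biur$, and finally clear the leftover unit with $\luur$. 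The derivation of $\bqur$ is dual, using $\ludr$, $\wqdr$, the switch $\blsr$, $\biur$ and $\tuur$; both are displayed in \Cref{fig:derive-up}.

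The step I expect to be the main obstacle is the positive/negative polarity bookkeeping, since in the intuitionistic setting each rule may fire only in a context of the matching polarity. This is most delicate in item~\ref{genIrules:3}: the down-rule has to be run inside a \emph{negative} context (the antecedent of an implication) precisely so that, after the switch, the created identity $P\limp P$ sits in a negative position where the $\bullet$-rule $\biur$ is permitted to apply. Verifying that the $\circ$/$\bullet$ decorations on $\wlsr$/$\blsr$, on $\bqdr$/$\wqdr$ and on $\biur$ all match up is where the real care is required.
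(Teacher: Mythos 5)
Your proposal is correct and takes essentially the same route as the paper: the same structural induction for item~\ref{genIrules:1} (the derivations of \Cref{fig:genIrule}), the dual induction for item~\ref{genIrules:2}, and for item~\ref{genIrules:3} the standard identity/cut interderivation of up- and down-rules, step for step the derivations of \Cref{fig:derive-up}. One small naming slip: in item~\ref{genIrules:2} the dual of $\wqdr$ needed for the $\lseq$-case is $\bqur$, not $\wqur$ (which concerns $\ltens$ and cannot apply to $(B\lseq C)\limp(B\lseq C)$), but since your method is explicitly ``replace each rule by its dual'', this does not affect the argument.
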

\begin{proof}
  The first point is proven by induction on $A$ as shown in \Cref{fig:genIrule}.
  The second point is proven dually. For the third point, $\baiur$ is a special case of $\biur$, and derivability of $\wqur$ and $\bqur$ follows via the derivations shown in ~\Cref{fig:derive-up}, 
  where $P$ and $Q$ \resp{$R$ and $S$} are the premise and conclusion of $\wqur$ \resp{$\bqur$}.
  \qed
\end{proof}

\begin{lemma}\label{lemma:TFA}
  The following are equivalent:
  \begin{enumerate}
    \item $\proves[\BiV]A\limp B$;
    \item $\proves[\BiV]\cP[A]\limp \cP[B]$ for any positive context $\cP$;
    \item $\proves[\BiV]\cN[B]\limp \cN[A]$ for any negative context $\cN$;
  \end{enumerate}
\end{lemma}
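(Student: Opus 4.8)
The plan is to treat statement (1) as the hub. The implication $(2)\Rightarrow(1)$ is immediate: the empty hole $\ctx$ is a positive context, so instantiating (2) at $\cP=\ctx$ gives exactly $\proves[\BiV]A\limp B$. The real work is then $(1)\Rightarrow(2)$ together with $(1)\Rightarrow(3)$, which I would establish by a \emph{single simultaneous induction} on the structure of the context, and finally the closure of the cycle from (3). The reason the two statements must be proven together is that the context grammar is mutually recursive: the positive production $\cP=\cNp\limp D$ has a negative sub-context, and the negative production $\cN=\cPp\limp D$ has a positive sub-context, so the induction constantly crosses between positive and negative contexts.

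The inductive engine is a collection of \emph{functoriality} derivations, one per production, that transport a proof of an implication through a single connective while respecting polarity: covariant in both arguments of $\ltens$ and $\lseq$, covariant in the right argument of $\limp$, and contravariant in its left argument. The uniform recipe is, given the proof supplied by the induction hypothesis, say $\proves[\BiV]X\limp Y$, to first split the unit (via $\tudr$ for a $\ltens$-step or $\wsudr$ for a $\lseq$-step) and horizontally compose it with the identity proof $\proves[\BiV]D\limp D$ furnished by \Cref{lemma:genIrules}, producing a proof of $(X\limp Y)\ltens(D\limp D)$ or $(X\limp Y)\lseq(D\limp D)$. I then append a fixed, context-independent reshuffling derivation built only from the switch and seq rules ($\wtsr$, $\wlsr$, $\wqdr$, and the $\sqir$-rules $\wrsq$, $\wlsq$), the currying rules $\curry/\uncurry$, and commutativity $\tcomr$, to reach the target, e.g. $(X\ltens D)\limp(Y\ltens D)$ or $(D\limp X)\limp(D\limp Y)$. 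The $\lseq$-cases are the cleanest, since the combined step $(X\limp Y)\lseq(D\limp D)\to(X\lseq D)\limp(Y\lseq D)$ is literally an instance of $\wqdr$. Note that no cut is needed anywhere: every apparent "composition'' of implications is realized purely by switch rules.

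The contravariant cases are the first delicate point: there one needs an \emph{antitonicity} derivation turning $\proves[\BiV]Y\limp X$ into $\proves[\BiV](X\limp D)\limp(Y\limp D)$, which I would obtain cut-free via $\wtsr$, then $\wlsr$, then $\uncurry$, $\tcomr$, $\curry$; keeping the polarity bookkeeping correct across the mutual induction is the part most prone to error. The genuinely hard step, however, is closing the equivalence from (3) back to (1). Because $\BiV$ is a pure down-system — it has no unit-elimination or other up-rules, these living only in $\SIBV$ — there is no modus-ponens/cut available to discharge a provable antecedent, so one cannot merely append a derivation $\cN[B]\limp\cN[A]\to A\limp B$. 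I would therefore instantiate (3) at a minimal negative context such as $\cN=\ctx\limp B$, obtaining $\proves[\BiV](B\limp B)\limp(A\limp B)$, and combine it with the identity proof of $B\limp B$ from \Cref{lemma:genIrules}; carrying out this discharge while remaining inside cut-free $\BiV$ is the crux, and I expect it to be the main obstacle of the whole argument.
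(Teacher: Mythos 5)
Your inductive engine coincides with the paper's proof. The paper likewise observes that (1) is the instance $\cP=\ctx$ of (2), and proves $1\Rightarrow 2$ and $1\Rightarrow 3$ by a single simultaneous induction on the context grammar \eqref{eq:contexts}, assembling cut-free derivations from $\tudr$/$\wsudr$/$\wcudr$, the identity proofs $\widr$ of \Cref{lemma:genIrules}, the switch rules, $\wqdr$, commutativity, and (un)currying; the two cases it displays are exactly your contravariant case $\cP=\cN\limp C$ and the covariant case where the hole sits to the right of an implication. Your antitonicity derivation (tensor with $\widr$, then $\wtsr$, $\wlsr$, $\uncurry$, $\tcomr$, $\curry$) is a correct variant of the paper's, which instead applies $\wlsr$ at top level and then $\blsr$, $\tcomr$ inside the antecedent; both are legitimate $\BiV$ derivations, so this part of your proposal is sound and essentially identical in approach to the paper.

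The genuine gap is the one you flagged yourself but did not resolve: the direction $3\Rightarrow 1$. Your proposed route --- instantiate (3) at $\cN=\ctx\limp B$ to get $\proves[\BiV](B\limp B)\limp(A\limp B)$ and then discharge the provable antecedent $B\limp B$ --- is modus ponens, i.e.\ an instance of the cut $\biur$, and cut admissibility for $\IBV$ (\Cref{cor:iur}) is only established later in the paper, via splitting. So, as written, your proposal proves $1\Leftrightarrow 2$ and $1\Rightarrow 3$ but not the full equivalence. The paper takes a different route to close the cycle: it claims $2\Leftrightarrow 3$ as part of the same induction, using that at the crossing cases each instance of (2) at $\cP=\cN\limp C$ is derived from the instance of (3) at $\cN$, and vice versa. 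This does yield $2\Rightarrow 3$ cut-free, because every negative context has a crossing $\cP\limp C$ on the path to its hole (that production is the only way the $\cN$-grammar can reach the hole, which lives in the $\cP$-grammar); but for $3\Rightarrow 2$ the base case $\cP=\ctx$ --- which is precisely statement (1) --- offers no crossing at which assumption (3) could be invoked, so the paper's sketch is silent on exactly the point where you got stuck. Two honest ways to finish: note that only the directions $1\Rightarrow 2$, $1\Rightarrow 3$ and the trivial $2\Rightarrow 1$ are ever used later (\Cref{cor:deduction}, \Cref{prop:collapse}); or carry out your discharge argument after \Cref{cor:iur} is available, which is not circular (the splitting and up-rule-elimination proofs never invoke \Cref{lemma:TFA}) but does require reordering the presentation.
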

\begin{proof}
  Note that 1 is a special case of 2. We prove $1\Rightarrow 2$ and $1\Rightarrow 3$ and $2\Leftrightarrow 3$ simultaneously by induction on the context, see \eqref{eq:contexts}. We show only two cases:
  $$\adjustbox{max width=.9\textwidth}{$
  \od{
    \odo{
      \odi{
        \odh{
          \odt{\lunit}{\tudr}{
            \oddrP[\IH]{\lunit}{\cN[B]\limp \cN[A]}
            \ltens
            \odnP{\lunit}{\widr}{C\limp C}{}
          }{}
        }
      }{\wlsr}{
        \left(\od{
          \odo{
            \odi{
              \odh{(\cN[B]\limp \cN[A])\limp C} 
            }{\blsr}{
              \cN[B]\ltens (\cN[A]\limp C)
            }{\brule}
          }{\tcomr}{
            (\cN[A]\limp C)\ltens \cN[B]
          }{}
        }\right)
        \limp C
      }{\wrule}
    }{\curry}{
      (\cN[A]\limp C) \limp (\cN[B] \limp C)
    }{}
  }
  \qquad\qquad
  \od{
    \odd{\odh{\lunit}}{\IH}{
      \odt{
        \left(\od{ 
          \odo{
            \odi{
              \odo{
                \odh{\cP[A]} 
              }{}{
                \odnP{\lunit}{\widr}{C\limp C}{} \limp \cP[A]
              }{}
            }{\blsr}{
                C \ltens (C\limp \cP[A])
            }{\brule}
          }{\tcomr}{
            (C\limp \cP[A]) \ltens C
          }{}
        }\right)
        \limp
        \cP[B]
      }{\curry}{
        (C\limp \cP[A]) \limp (C\limp \cP[B])
      }{\wrule}
    }{}
  }
  $}$$
  where we have to switch between 2 and 3 because in the cases $\cN = \cP \limp C$ and $\cP = \cN \limp C$ the left-hand side of the $\limp$ has opposite polarity.
  \qed
\end{proof}

\begin{theorem}[Deduction Theorem]\label{thm:deduction}
  Let $A$ and $B$ be formulas. Then
  $$\proves[\SBiV]A\limp B  \iff \mbox{ there is a derivation } \od{\odd{\odh{A}}{\dDp}{B}{\SBiV}} \mydot$$
\end{theorem}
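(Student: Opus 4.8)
The plan is to prove the two directions separately, each time leaning on the generic identity rule $\widr$ and cut rule $\biur$ supplied by \Cref{lemma:genIrules}. For the implication from right to left, suppose a derivation $\od{\odd{\odh{A}}{\dDp}{B}{\SBiV}}$ is given. Since $A\limp\ctx$ is a positive context by \eqref{eq:contexts}, the positive derivation $\dDp$ may be inserted into the right argument of an implication, yielding a positive derivation from $A\limp A$ to $A\limp B$. Composing this with an instance of $\widr$ on top---legal because $\widr$ is derivable in $\BiV\subseteq\SIBV$ by \Cref{lemma:genIrules}(\ref{genIrules:1})---produces a proof from $\lunit$ to $A\limp B$, that is, $\proves[\SBiV]A\limp B$.

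The reverse implication is where the symmetric system is genuinely needed. Assume $\proves[\SBiV]A\limp B$, witnessed by a positive derivation $\dDq$ from $\lunit$ to $A\limp B$. I would build a positive derivation from $A$ to $B$ realising the chain
$$A \;\to\; \lunit\ltens A \;\to\; (A\limp B)\ltens A \;\to\; A\ltens(A\limp B) \;\to\; (A\limp A)\limp B \;\to\; \lunit\limp B \;\to\; B \mydot$$
The first step is $\tudr$; the second inserts the proof $\dDq$ into the positive context $\ctx\ltens A$; the third is $\tcomr$; the fourth is the white switch $\wlsr$ (instantiated so that $A\ltens(A\limp B)$ is its premise); the fifth applies the cut $\biur$, available in $\SIBV$ by \Cref{lemma:genIrules}(\ref{genIrules:2}), inside the context $\ctx\limp B$; and the last step is $\luur$. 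Intuitively this is just the deep-inference rendering of modus ponens: a copy of the proof of $A\limp B$ is created next to the hypothesis $A$ via the unit and then consumed by the cut.

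The step requiring the most care, and the main obstacle, is the polarity bookkeeping. One must check against \eqref{eq:contexts} that $\ctx\ltens A$ is positive (so that the positive derivation $\dDq$ may legitimately be plugged in at the second step) and that $\ctx\limp B$ is negative (so that the $\bullet$-decorated cut $\biur$ is applicable at the fifth step); the remaining rules $\tudr$, $\tcomr$, $\luur$ carry no decoration and hence apply in either polarity, while $\wlsr$ is a $\circ$-rule applied at the top level, which is positive. Once these polarity checks are verified, every step is a legal constructor of a positive derivation in the sense of the inductive definition of derivations, and the two constructions together establish the claimed equivalence.
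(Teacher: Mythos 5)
Your proposal is correct and takes essentially the same route as the paper: the right-to-left direction is exactly the paper's derivation that plugs $\dDp$ into the positive context $A\limp\ctx$ under an instance of $\widr$, and your left-to-right chain (unit introduction, insertion of the given proof, switch, cut $\biur$ in the negative context $\ctx\limp B$, then $\luur$) is precisely the paper's first derivation in the proof of \Cref{thm:deduction}, only spelled out with an explicit $\tcomr$ step. As a minor point in your favour, the switch step is indeed an instance of $\wlsr$ as you say (premise $A\ltens(A\limp B)$, conclusion $(A\limp A)\limp B$), whereas the paper's displayed derivation labels it $\wtsr$, which appears to be a typo.
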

\begin{proof}
  Via the following two derivations:
  \begin{equation}\label{eq:deduction}
    \tag*{\normalsize\qed}
    \scriptsize
    \od{
      \odo{
        \odi{
          \odo{\odh{A}}{\tudr}{
            A\ltens 
            \left(\od{\odd{\odh{\lunit}}{}{A\limp B}{\SBiV}}\right)
            }{}
        }{\wtsr}{
          \odnP{A\limp A}{\biur}{\lunit}{\brule}
          \limp B
        }{\wrule}
      }{\luur}{
        B
      }{}
    }
    \qquad\qquad
    \od{
      \odi{\odh{\lunit}}{\widr}{
        A\limp \left(\od{\odd{\odh{A}}{}{B}{\SBiV}}\right)
      }{\wrule}
    }
  \end{equation}
\end{proof}
\begin{corollary}\label{cor:deduction}
  If $\proves[\SBiV]A\limp B$, then 
  \begin{itemize}
    \item there is a derivation in $\SBiV$ with premise $\cP[A]$ and conclusion $\cP[B]$ for any positive context $\cP$;
    \item there is a derivation in $\SBiV$ with premise $\cN[B]$ and conclusion $\cN[A]$ for any negative context $\cN$;
  \end{itemize}
\end{corollary}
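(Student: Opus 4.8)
The plan is to reduce the statement to \Cref{lemma:TFA} by way of the Deduction Theorem. By \Cref{thm:deduction}, for any formulas $X,Y$ there is a derivation in $\SBiV$ with premise $X$ and conclusion $Y$ if and only if $\proves[\SBiV]X\limp Y$. Hence the two bullets of the corollary are equivalent to the provability statements $\proves[\SBiV]\cP[A]\limp\cP[B]$ and $\proves[\SBiV]\cN[B]\limp\cN[A]$, and what remains is exactly to show that these two follow from the hypothesis $\proves[\SBiV]A\limp B$. In particular note that both required derivations are oriented so as to match the implication they encode: a positive context keeps the direction ($\cP[A]$ to $\cP[B]$), while a negative context, being contravariant, reverses it ($\cN[B]$ to $\cN[A]$).

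These two implications are precisely $1\Rightarrow 2$ and $1\Rightarrow 3$ of \Cref{lemma:TFA}, but read over $\SBiV$ instead of over $\BiV$. So the core observation I would make is that the proof of \Cref{lemma:TFA} relativises from $\BiV$ to $\SBiV$. This is immediate once one checks that that proof proceeds by wrapping a \emph{given} proof of $A\limp B$ inside finitely many inference rules — associativity, commutativity, the switch and currying rules, together with the generalised identity $\widr$, whose derivability in $\BiV$ is established in \Cref{lemma:genIrules} — all of which belong to the down-fragment $\BiV$. Since $\BiV\subseteq\SBiV$, the very same constructions applied to a $\SBiV$-proof of $A\limp B$ produce $\SBiV$-proofs of $\cP[A]\limp\cP[B]$ and of $\cN[B]\limp\cN[A]$, the two statements being obtained together by a single simultaneous induction on the context. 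Feeding these back through \Cref{thm:deduction} then yields the required derivations.

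I expect the only genuine obstacle to lie in the two context cases with a polarity flip, namely $\cN=\cP\limp C$ and $\cP=\cN\limp C$, where the left argument of $\limp$ is contravariant and the induction must swap between the positive and the negative statement. This is exactly the difficulty that the proof of \Cref{lemma:TFA} already dispatches with its two displayed currying/switch derivations, so nothing new should be needed here beyond transcribing those constructions at the level of $\SBiV$. A secondary, purely bookkeeping point that I would make explicit is that none of the up-rules of \Cref{fig:SIBV} are ever invoked in these constructions: it is this fact that makes the passage from $\BiV$ to $\SBiV$ a triviality and, crucially, avoids any appeal to the equivalence of $\IBV$ and $\SIBV$, which is only established later via cut elimination.
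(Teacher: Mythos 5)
Your proposal is correct and takes essentially the same route as the paper, whose entire proof of this corollary reads ``Consequence of the Deduction Theorem and \Cref{lemma:TFA}.'' The only difference is that you make explicit the relativisation of \Cref{lemma:TFA} from $\BiV$ to $\SBiV$ (justified by observing that its proof only wraps the given derivation in down-fragment rules), a step the paper leaves implicit but which is indeed needed here, since the equivalence of $\BiV$ and $\SBiV$ is only established later via cut elimination.
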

\begin{proof}
  Consequence of the Deduction Theorem and \Cref{lemma:TFA}.
  \qed
\end{proof}

We conclude this section by proving that the \emph{Modus Ponens} is a valid logical inference in $\SBiV$.
\begin{corollary}[Modus Ponens]\label{cor:MP}
  Let $A$ and $B$ be formulas.
  If $\proves[\SBiV]A\limp B$ and $\proves[\SBiV]A$, then $\proves[\SBiV] B$.
\end{corollary}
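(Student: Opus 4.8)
The plan is to derive Modus Ponens directly from the Deduction Theorem (\Cref{thm:deduction}) together with the compositional nature of derivations in the open-deduction formalism. I assume as hypotheses that $\proves[\SBiV]A\limp B$ and $\proves[\SBiV]A$, meaning there is a positive proof of $A\limp B$ (with premise $\lunit$) and a positive proof of $A$ (with premise $\lunit$), both in $\SBiV$.

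First I would apply the Deduction Theorem to the hypothesis $\proves[\SBiV]A\limp B$. The right-to-left direction is not what I want; rather the left-to-right direction of \Cref{thm:deduction} gives me a derivation $\od{\odd{\odh{A}}{\dDp}{B}{\SBiV}}$ whose premise is $A$ and whose conclusion is $B$. So from a proof of the implication I extract a derivation that ``transforms'' $A$ into $B$.

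Next I would take the proof of $A$, which is a positive derivation $\od{\odd{\odh{\lunit}}{}{A}{\SBiV}}$ with premise $\lunit$ and conclusion $A$, and compose it with the derivation from $A$ to $B$ obtained in the previous step. Because composition along the vertical axis is a primitive operation of the open-deduction calculus (the $\rrule$-composition described in the excerpt: the conclusion of the first derivation is the premise of the second), stacking the proof of $A$ on top of the derivation from $A$ to $B$ yields a single positive derivation with premise $\lunit$ and conclusion $B$. That is exactly a proof of $B$ in $\SBiV$, so $\proves[\SBiV]B$, which is the desired conclusion.

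I expect no serious obstacle here; the work has all been front-loaded into the Deduction Theorem. The only point requiring a little care is to make sure the composition is of positive derivations with positive derivations, so that the result is again a \emph{positive} derivation and hence a genuine proof (premise $\lunit$); since both the proof of $A$ and the derivation from $A$ to $B$ live in the positive fragment, this is immediate. Thus the proof is essentially a one-line consequence: apply the Deduction Theorem to $\proves[\SBiV]A\limp B$ to obtain a derivation from $A$ to $B$, then vertically compose it below the given proof of $A$ to obtain a proof of $B$.
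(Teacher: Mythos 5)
Your proposal is correct and is essentially identical to the paper's own proof: both apply the Deduction Theorem to $\proves[\SBiV]A\limp B$ to extract a derivation in $\SBiV$ with premise $A$ and conclusion $B$, and then vertically compose the given proof of $A$ on top of it to obtain a proof of $B$. Your added remark about positivity being preserved under composition is a fine (if implicit in the paper) point of care, but nothing in substance differs.
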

\begin{proof}
  By hypothesis, we have a proof $\dDp_A$ with conclusion $A$, and the Deduction Theorem ensures the existence of a derivation $\dDp_{A\limp B}$ with premise $A$ and conclusion $B$.
  We conclude by composing `vertically' the two derivations.
\qed
\end{proof}

\section{Cut Elimination}\label{sec:splitting}

In this section we show that the cut rule $\biur$, given in \eqref{eq:cut}, is admissible for $\IBV$.
To prove this, we will first show that the two systems $\SBiV$ and $\BiV$ are equivalent, that is, any formula provable in $\SBiV$ is also provable in $\BiV$.
In other words, we will show that all up-rules shown in Figure~\ref{fig:SIBV} are admissible for $\IBV$.
Then the admissibility of cut will follow from Lemma~\ref{lemma:genIrules}.

\begin{restatable}{theorem}{uprules}\label{thm:uprules}
  The rules 
  $\bsudr$, $\bcudr$, $\tuur$, $\luur$,
  $\baiur$, 
  $\wqur$, and $\bqur$
  are admissible for $\IBV$.
\end{restatable}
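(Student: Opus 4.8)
The plan is to prove admissibility of the up-rules via the \emph{splitting} technique announced in the section title, which is the standard route to cut elimination for deep-inference systems like $\BV$. The high-level strategy is to show that each up-rule can be permuted upward through any $\IBV$-derivation until it disappears, and the engine that makes this work is a splitting lemma. First I would isolate the cut rule $\biur$ as the genuinely hard case and observe that several of the listed rules reduce to easier work: by \Cref{lemma:genIrules}, once $\biur$ is available in $\IBV$ (not just $\SIBV$), the rules $\baiur$, $\wqur$, and $\bqur$ become derivable, so admissibility of these three follows from admissibility of $\biur$ together with the derivations in \Cref{fig:derive-up}. Likewise $\baiur$ is just the atomic instance of $\biur$. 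So the core of the theorem is the admissibility of the four `shrinking' rules $\bsudr$, $\bcudr$, $\tuur$, $\luur$ together with the cut $\biur$.

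The central tool I expect to need is a splitting lemma of the following shape: if $\proves[\IBV] \cP[A \ltens B]$ (more generally, if a provable formula contains a $\ltens$, $\lseq$, or $\limp$ at a specified position), then the proof can be reorganized so that the two immediate subformulas are proved more or less independently, with the context distributed between them in a controlled way. Concretely, for the principal case driving cut elimination, I would aim to show that a proof of a formula whose top connective (in the relevant positive position) is $A \ltens B$ decomposes into proofs that separately establish the pieces, modulo the switch, refocusing, and seq-reassociation rules; and analogous statements for $\limp$ and $\lseq$. The polarity discipline of \eqref{eq:contexts} means I must state and prove the splitting lemma in a polarized form, tracking positive and negative contexts separately, which roughly doubles the case analysis relative to classical $\BV$.

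With splitting in hand, the admissibility proof proceeds by a top-down permutation argument. Take a derivation in $\IBV \cup \{\text{one up-rule}\}$ in which the topmost up-rule instance $\rho$ is lowest (or by an appropriate well-founded measure); the premise above $\rho$ is a proof in $\IBV$ of a formula in which $\rho$ acts on a specific (co-)structure. I would then apply the splitting lemma to that $\IBV$-proof to expose the active connectives of $\rho$, rewrite the local configuration so that $\rho$ is replaced by $\IBV$-rules acting on strictly smaller formulas (for the cut, this is where the $\aidr$/$\baiur$ and $\qdr$-style rules recombine the split pieces), and conclude by an induction on formula size (for $\biur$, the size of the cut formula $A$) with an inner induction supplied by splitting. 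For the unit-shrinking rules $\bsudr$, $\bcudr$, $\tuur$, $\luur$ the same scheme applies but is lighter: these only remove a half-unit or a unit, so splitting need only expose the $\lseq$, $\ltens$, or $\limp$ immediately above the unit, after which the $\wsudr$, $\wcudr$, $\tudr$, $\ludr$ rules and the equational (dotted) rules absorb the step.

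The hard part will be proving the splitting lemma itself, and in particular getting its polarized statement right so that the inductive cases close. The delicate points are: the non-commutativity of $\limp$ forces left/right switch variants, so when splitting a $\limp$ I must keep track of which argument sits in a positive versus negative context and invoke the correct $\swir$-rule; the seq-connective $\lseq$ is non-commutative and its unit is only `half a unit', so the usual $\BV$ splitting for $\lseq$ must be reworked using the $\refrule$- and $\sqir$-rules rather than a single $\qdr$-rule, and care is needed to ensure the asymmetric unit laws $A \limp (\lunit \lseq A)$ and $A \limp (A \lseq \lunit)$ suffice in each direction; and the whole argument must be organized around a termination measure (plausibly a lexicographic combination of cut-formula size and derivation size) that strictly decreases at each permutation step even when splitting introduces new, smaller up-rule instances. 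I would structure the write-up by first stating and proving the splitting lemma(s), then deriving the theorem as a corollary via the permutation/induction argument sketched above.
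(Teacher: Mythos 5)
Your plan for the four unit rules $\bsudr$, $\bcudr$, $\tuur$, $\luur$ matches the paper's proof: a polarized splitting lemma (\Cref{lemma:splitting}) together with a context reduction lemma (\Cref{lemma:contextReduction}), applied to the $\IBV$ proof sitting above a topmost up-rule instance, after which the pieces are reassembled using down-rules only. But your treatment of the remaining three rules has a genuine gap. You invert the paper's reduction: you propose to prove admissibility of the \emph{general} cut $\biur$ directly by splitting, with an induction on the size of the cut formula, and then to obtain $\baiur$, $\wqur$, $\bqur$ from \Cref{lemma:genIrules}.3. The problem is the inductive step. After context reduction and splitting, a cut on $A$ leaves you with proofs of $K_1\limp A$ and $A\limp K_2$ and the task of producing a proof of $K_1\limp K_2$; but composing two implications is itself a cut on $A$, of exactly the same size, so your measure does not decrease. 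Repairing this would require a separate transitivity-elimination lemma proved by induction on the structure of $A$ (splitting both proofs, with polarity-indexed statements to handle subformulas occurring on the left of $\limp$, plus a base case $A=\lunit$ that the splitting lemma does not cover at all). None of this is in your proposal, and it is precisely the difficulty the paper is engineered to avoid.

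What the paper does instead: it never eliminates the general cut directly. Each of the seven listed rules is eliminated by one global reconstruction (not a rule-permutation argument, and with no termination measure beyond the number of up-rule instances): context reduction plus one application of splitting for the unit rules, and two nested applications for $\baiur$, $\wqur$, $\bqur$, followed by reassembly as in \Cref{fig:upElim}. The tool you are missing is the \emph{Atomic} Splitting lemma (\Cref{lemma:atomicSplitting}): for $\baiur$, the proofs of $K_L\limp a$ and $a\limp K_R$ produced by splitting are converted into genuine derivations with premise $a$ (a negative one with conclusion $K_L$, a positive one with conclusion $K_R$), and these compose cut-free by plugging them into the two sides of the conclusion $a\limp a$ of a $\waidr$ instance; this is why the argument works at the atomic level and only there. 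For $\wqur$ and $\bqur$, two rounds of splitting expose all four subformulas and the conclusion is rebuilt from down-rules; no induction on formula size occurs anywhere. Admissibility of the general cut $\biur$ (\Cref{cor:iur}) is then a \emph{consequence} of \Cref{thm:uprules}, via the derivability of $\biur$ in $\SIBV$ (\Cref{lemma:genIrules}.2) --- the opposite direction of your reduction. Your appeal to \Cref{lemma:genIrules}.3 is sound in principle, but it trades the seven eliminations that splitting can deliver for the one elimination that splitting alone cannot.
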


\begin{corollary}
  The systems  $\SBiV$ and $\BiV$ are equivalent.
\end{corollary}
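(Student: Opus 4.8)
The plan is to read \Cref{thm:uprules} as a cut-elimination statement and to prove it \emph{via splitting}, as the paper announces. Note first that the theorem is really the substance of the subsequent corollary: since $\SBiV=\IBV\cup\set{\bsudr,\bcudr,\tuur,\luur,\baiur,\wqur,\bqur}$ by definition, admissibility of this set for $\IBV$ says exactly $\proves[\SBiV]A\Rightarrow\proves[\BiV]A$, and the reverse implication is free from $\BiV\subseteq\SBiV$. So all the work lives in showing the listed up-rules can be removed from $\IBV$-proofs.

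First I would shrink the list of rules to a workable core. By the third item of \Cref{lemma:genIrules}, each of $\baiur$, $\wqur$, $\bqur$ is already derivable in $\BiV\cup\set{\biur,\tuur,\luur}$ (the derivations of \Cref{fig:derive-up}). Hence, given any $\SBiV$-proof, I would replace every occurrence of $\baiur$, $\wqur$, $\bqur$ by those derivations, landing in $\IBV\cup\set{\bsudr,\bcudr,\tuur,\luur,\biur}$. It then suffices to show that the \emph{general cut} $\biur$ of \eqref{eq:cut}, together with the four unit up-rules $\bsudr,\bcudr,\tuur,\luur$, is admissible for $\IBV$; the admissibility of the full list follows by composing this with derivability. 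One caveat to build in from the start: these five rules should be eliminated \emph{simultaneously}, by a single well-founded induction, because removing a cut can create fresh unit-up-steps and conversely.

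The heart of the argument is a \textbf{Splitting Lemma}, supported by a \textbf{Context Reduction Lemma}. Context reduction says that to erase an up-rule sitting inside an arbitrary context it is enough to treat it at the root of a proof; in the intuitionistic setting this reshaping is precisely what \Cref{lemma:TFA} and \Cref{cor:deduction} provide, so that a cut at a negative position $\cN[A\limp A]$ can be transported to a root cut. Splitting then inspects how the redex was assembled by the cut-free $\IBV$-rules above it and returns independent sub-derivations for the two copies of $A$ that can be plugged into one another. With splitting in hand I would eliminate a topmost $\biur$ by induction on the cut-formula $A$: when $A$ is $A_1\ltens A_2$, $A_1\limp A_2$, or $A_1\lseq A_2$, splitting reduces the cut to cuts on the strictly smaller $A_1,A_2$, the $\IBV$ down-rules rearranging the surrounding structure; the base cases are the atomic cut $\baiur$ on $a\limp a$, annihilated against the $\aidr$ that introduced the atoms, and the unit cases, absorbed by $\tuur,\luur,\bsudr,\bcudr$.

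The main obstacle is proving the Splitting Lemma itself. It needs an induction (on proof size, with formula size as a secondary measure) whose case analysis runs over \emph{every} down-rule of \Cref{fig:IBV}, and the intuitionistic discipline forces each case to be verified in both a positive and a negative context — the price of having doubled every rule with a $\circ/\bullet$ decoration. Two features make the bookkeeping genuinely delicate: the non-commutativity of both $\limp$ and $\lseq$ prevents free permutation of switches and seq-rules, so the left/right variants $\wlsr,\wtsr,\wrsq,\wlsq$ and their $\bullet$-duals each spawn distinct subcases; and the fact that $\lunit$ is only `half a unit' for $\lseq$ means the seq-behaviour is governed by the separate $\refrule$- and $\sqir$-rules rather than a single $\qdr$, so the seq part of splitting does not mirror the tensor part symmetrically. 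Once splitting and context reduction are in place, assembling them into the simultaneous elimination of $\biur,\bsudr,\bcudr,\tuur,\luur$, and thence the admissibility of the whole list in \Cref{thm:uprules}, is routine.
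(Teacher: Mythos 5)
Your reduction of the corollary to \Cref{thm:uprules} and the overall splitting-plus-context-reduction strategy match the paper, but there are two genuine gaps. The most serious one: you claim that the context-reduction step ``is precisely what \Cref{lemma:TFA} and \Cref{cor:deduction} provide''. It is not. \Cref{cor:deduction} is a statement about $\SBiV$, and its proof (via the Deduction Theorem) uses the cut rule $\biur$ itself, so invoking it inside a cut-elimination argument is circular. \Cref{lemma:TFA} is cut-free but only transports implications into contexts: from $\proves[\BiV]\lunit\limp(A\limp A)$ it yields $\proves[\BiV]\cN[A\limp A]\limp\cN[\lunit]$, and to combine this with the proof of $\cN[A\limp A]$ sitting above your topmost cut you would need Modus Ponens --- that is, another cut. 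What is actually needed is the paper's Context Reduction Lemma (\Cref{lemma:contextReduction}): from a cut-free proof of $\cN[A\limp A]$ one extracts a formula $K$ with $\proves[\BiV](A\limp A)\limp K$ together with a derivation from $X\limp K$ to $\cN[X]$ uniform in $X$. That lemma is itself proved by an induction on contexts that repeatedly uses splitting; it is a substantial piece of the paper's proof which your proposal treats as already available.

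Second, your route for $\baiur$, $\wqur$, $\bqur$ genuinely differs from the paper's, and as described it is incomplete. The paper eliminates each of the seven up-rules directly, one topmost instance at a time, by applying context reduction once and splitting once or twice and then reassembling the pieces (\Cref{fig:upElim}); no induction on a cut formula occurs anywhere, and the general cut is never eliminated directly --- its admissibility (\Cref{cor:iur}) is deduced afterwards from \Cref{lemma:genIrules}. You go the opposite way: convert $\baiur$, $\wqur$, $\bqur$ into general cuts and then induct on the cut formula. This could be made to work, but your base case for a cut on $\lunit$ is ``absorbed by $\tuur,\luur,\bsudr,\bcudr$'', while you never give any elimination mechanism for those four rules --- they are merely listed as members of the ``simultaneous'' induction. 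Since a cut on a unit reduces to exactly those rules, the induction as stated is circular unless you either (i) eliminate $\tuur,\luur,\bsudr,\bcudr$ directly by splitting, as the paper does, (ii) prove a separate unit-splitting lemma, or (iii) exhibit an explicit well-founded measure. A smaller point in the same direction: \Cref{lemma:splitting} has no case for $(A\ltens B)\limp K$, so the tensor case of your cut-formula induction needs a currying detour and two nested appeals to the inductive hypothesis, which your sketch does not mention.
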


\begin{corollary}\label{cor:iur}
  The rule  $\biur$ is admissible for $\IBV$.
\end{corollary}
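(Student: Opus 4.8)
The plan is to obtain \Cref{cor:iur} as an immediate consequence of the two facts established just above: that the general cut $\biur$ is derivable in $\SIBV$ (point~\ref{genIrules:2} of \Cref{lemma:genIrules}), and that $\SIBV$ and $\IBV$ prove the same formulas (the preceding corollary, itself a consequence of the admissibility of all up-rules in \Cref{thm:uprules}). Unfolding the definition, saying that $\biur$ is admissible for $\IBV$ means that $\proves[\IBV\cup\set{\biur}]A$ already implies $\proves[\IBV]A$. So I would start from an arbitrary proof of $A$ that may use the general cut and eliminate the cut in two stages.

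First I would push the proof into $\SIBV$. Since $\IBV\subseteq\SIBV$, every rule instance of the given proof other than $\biur$ is already an $\SIBV$ step. For the remaining cut instances, point~\ref{genIrules:2} of \Cref{lemma:genIrules} supplies, for each formula $A$, a derivation $\od{\odd{\odh{A\limp A}}{}{\lunit}{\SBiV}}$ witnessing the derivability of $\biur$ in $\SIBV$. Replacing each $\biur$-instance by this witnessing derivation turns the original proof into a proof using only $\SIBV$-rules, so $\proves[\SIBV]A$. The second stage is then trivial: by the equivalence of $\SBiV$ and $\BiV$ we obtain $\proves[\BiV]A$, which is exactly what admissibility requires. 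Chaining the two stages gives $\proves[\IBV\cup\set{\biur}]A\implies\proves[\IBV]A$, and hence $\biur$ is admissible for $\IBV$.

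The only point that I expect to need an explicit sentence of justification, rather than being purely mechanical, is the substitution step inside an arbitrary context. An instance of $\biur$ in a proof does not rewrite the bare premise to the bare conclusion but rather $\cN[A\limp A]$ to $\cN[\lunit]$ for some negative context $\cN$, so I am replacing its contextual closure. This is legitimate because $\biur$ is a negative rule and the witnessing derivation of point~\ref{genIrules:2} of \Cref{lemma:genIrules} (built dually to \Cref{fig:genIrule}) is a \emph{negative} derivation, which by the open-deduction grammar can be plugged into any negative context $\cN$ to yield a derivation from $\cN[A\limp A]$ to $\cN[\lunit]$. Apart from this bookkeeping, all of the genuine difficulty has already been absorbed by the splitting argument behind \Cref{thm:uprules}, so the corollary itself presents no further obstacle.
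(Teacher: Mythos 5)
Your proposal is correct and follows exactly the paper's intended argument: the paper also obtains \Cref{cor:iur} by combining the derivability of $\biur$ in $\SIBV$ (point~\ref{genIrules:2} of \Cref{lemma:genIrules}) with the equivalence of $\SBiV$ and $\BiV$ that follows from \Cref{thm:uprules}. Your extra remark about plugging the negative witnessing derivation into the negative context $\cN$ is the right bookkeeping, fully justified by the open-deduction grammar, and only makes explicit what the paper leaves implicit.
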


It therefore remains to prove Theorem~\ref{thm:uprules}. For this, we will use a \emph{splitting lemma} as common for deep inference systems~\cite{gug:SIS,SIS-V,lutz:phd,acc:FSCD22}. But to our knowledge, this is the first version of a splitting lemma for an intuitionistic system, and there are certain subtleties, for example, the number of cases doubles, for the same reason as the number of inference rules doubles. Here is the statement of the general splitting lemma for $\IBV$:

\begin{restatable}[Splitting]{lemma}{splitting}\label{lemma:splitting}
  Let $A$, $B$ and $K$ be formulas.
  \begin{itemize}
    \item 
    If $\proves[\BiV]K \limp (A\ltens B)$, then 
    there are formulas $K_A$ and $K_B$ such that
    $$
    \vlsmash{\oddb{K_A \ltens K_B}K}
    \quand
    \odpr{K_A \limp A}
    \quand
    \odpr{K_B \limp B}
    \mydot
    $$
    \item 
    If $\proves[\BiV](A\limp B) \limp K$, then 
    there are formulas $K_A$ and $K_B$ such that 
    $$
    \vlsmash{\oddr{K_A \limp K_B}K}
    \quand
    \odpr{K_A \limp A}
    \quand
    \odpr{B\limp K_B }
    \mydot
    $$
    \item 
    If $\proves[\BiV]K \limp (A\lseq B)$,  then 
    there are formulas $K_A$ and $K_B$ such that
    $$
    \vlsmash{\oddb{K_A \lseq K_B}K}
    \quand
    \odpr{K_A \limp A}
    \quand
    \odpr{K_B \limp B}
    \mydot
    $$

    \item 
    If $\proves[\BiV](A\lseq B) \limp K$, then 
    there are formulas $K_A$ and $K_B$ such that
    $$
    \vlsmash{\oddr{K_A \lseq K_B}K}
    \quand
    \odpr{A\limp K_A}
    \quand
    \odpr{B\limp K_B}
    \mydot
    $$
  \end{itemize}
\end{restatable}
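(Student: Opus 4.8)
The plan is to prove the four statements simultaneously by induction on the number of rule instances in the given $\BiV$-proof. In each case the proof is a positive derivation $\dDp$ from $\lunit$ to the stated conclusion (e.g.\ $K \limp (A\ltens B)$ in the first item), and I would examine the bottom-most rule $\rho$ of $\dDp$, that is, the rule whose conclusion is the whole formula. Removing $\rho$ yields a strictly shorter proof, and every recursive appeal below is to a sub-derivation of $\dDp$, so the simultaneous induction is well founded. The cases divide into \emph{non-principal} ones, in which the redex of $\rho$ lies entirely inside one of the components $K$, $A$, or $B$, and \emph{principal} ones, in which $\rho$ acts on the displayed top-level structure.

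In a non-principal case the premise of $\rho$ has the same top-level shape, say $K'\limp(A\ltens B)$ with $\rho$ rewriting $K'$ into $K$ (and symmetrically for redexes inside $A$ or $B$). I would apply the induction hypothesis to the shorter proof of $K'\limp(A\ltens B)$, obtaining witnesses $K_A,K_B$ together with a negative derivation from $K_A\ltens K_B$ to $K'$ and proofs of $K_A\limp A$ and $K_B\limp B$. It then remains to extend the negative derivation from $K'$ down to $K$; since here $\rho$ transports $K'$ to $K$ inside a negative context, this is exactly the content of \Cref{lemma:TFA}, which lets provable implications be pushed through positive and negative contexts. The subcases for redexes inside $A$ or $B$ are handled the same way, adjusting the corresponding residual proof.

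The principal cases are the core of the argument. For each item I would enumerate the (few) rules that can produce the displayed main connective: for the tensor item these are essentially the right switch $\wtsr$, the unit rule $\ludr$, and the rules that build the tensor, namely $\bqdr$ and the seq-to-tensor rules $\brsq$ and $\blsq$; for the seq items the relevant rules are $\wrsq$, $\wlsq$, $\wqdr$, together with the reflexivity rules $\wrr$ and $\brr$, which convert between $\lseq$ and $\ltens$ and thereby make the four statements genuinely interdependent. In each such case the premise of $\rho$ already displays a smaller top-level connective, so I would apply the appropriate item of the induction hypothesis to it and then reassemble the split, reading the witnesses $K_A,K_B$ off the recursively obtained ones and producing the residual implications by composing the recursive proofs with $\rho$; \Cref{lemma:TFA} relocates these implications into the correct context, and \Cref{lemma:genIrules} supplies an identity $\widr$ wherever one is needed.

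I expect the principal switch and seq cases to be the main obstacle. This is where the intuitionistic bookkeeping bites: the positive/negative polarity of each component must be tracked exactly, the $\refrule$- and $\qdr$-style rules force recursion across different items of the lemma (for instance, splitting a positive $\lseq$ through $\wrr$ reduces to splitting a positive $\ltens$), and in each instance one must check that the reconstructed derivation to $K$ really is negative and that the residual proofs have the orientation demanded by the statement. Discharging the large number of rule/polarity combinations while keeping the induction measure strictly decreasing across these cross-references is the bulk of the work; by contrast, once \Cref{lemma:TFA} is in hand the non-principal cases are routine.
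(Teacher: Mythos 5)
Your overall skeleton---induction on the derivation, case analysis on the bottom-most rule instance, trivial cases when the redex lies inside $K$, $A$, or $B$, and reassembly via the induction hypothesis---is indeed the paper's skeleton, but two of your key steps fail as stated. First, your case dichotomy ignores the dotted rules ($\tcomr$, $\tassr$, $\sassl$, $\sassr$, $\tudr$, $\ludr$, $\curry$, $\uncurry$), which may perfectly well be the last rule of the proof. If the proof ends, say, with $\uncurry$ at the root, the conclusion $(K_1\ltens K_2)\limp(A\ltens B)$ comes from a premise $K_1\limp(K_2\limp(A\ltens B))$ that matches none of the four patterns of the lemma and exhibits no ``smaller top-level connective'', so neither of your two case-kinds applies and the induction stalls. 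This is precisely why the paper's case analysis (\Cref{fig:splitting}) takes as its unit the bottom-most \emph{non-dotted} rule together with the whole trailing block of dotted rules: in its Case 1, for instance, the $\wtsr$ instance sits inside the context $K_1\limp\ctx$ and the currying below it reshapes $K_1\limp(K_2\limp\cdots)$ into $(K_1\ltens K_2)\limp\cdots$. Relatedly, your enumeration of principal rules is polarity-incorrect: $\bqdr$, $\brsq$, $\blsq$ are $\bullet$-rules and can only fire in negative contexts, so they can never create the tensor of $K\limp(A\ltens B)$, which sits in positive position; conversely, you never enumerate the rules that actually matter for the second item, namely $\wlsr$, $\blsr$, and $\btsr$ (the paper's Cases 2--4).

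Second, your well-foundedness claim---that ``every recursive appeal below is to a sub-derivation of $\dDp$''---is incompatible with the reassembly that the principal cases require, and the witnesses cannot simply be ``read off'' one application of the induction hypothesis. In the paper's Case 1, applying the IH to the premise splits only the \emph{inner} tensor, yielding proofs of $K_R\limp(A_1\ltens B_1)$ and $(K_L\ltens K_2)\limp(A_2\ltens B_2)$; these proofs must then be split \emph{again} (the final witnesses are $K_A=K_{A_1}\ltens K_{A_2}$ and $K_B=K_{B_1}\ltens K_{B_2}$), and they are outputs of the IH, not sub-derivations of the original proof. So either the statement must be strengthened to control the size of the output derivations, or the induction measure must be changed; your measure as stated does not support the nested applications that the argument needs. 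A smaller point: in the non-principal cases you neither need nor can use \Cref{lemma:TFA}, since it only yields \emph{provability} of implications, not derivations (and routing through the Deduction Theorem would land you in $\SBiV$ rather than the cut-free $\BiV$); the correct and much simpler step is that the bottom-most rule instance is itself a one-step negative (resp.\ positive) derivation, which composes directly with the derivation provided by the induction hypothesis.
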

\begin{proof}
  The proof is by induction on the derivation, considering the bottom-most (non-dotted) rule instance in the derivation.
  In \Cref{fig:splitting} we show the most general forms of the non-trivial cases to take into account. In that figure, we abuse the dotted line notation to indicate that there might be more than one application of a `dotted line' inference rule from~\Cref{fig:IBV}. 
  More details are provided in \Cref{app:splitting}.
  \qed
\end{proof}

\begin{figure}[t]
  \adjustbox{max width=\textwidth}{$\begin{array}{cc}
    \odt{
      K_1\limp
        \odnP{
          (K_2\limp(A_2\ltens B_2))\ltens (A_1 \ltens B_1)
        }{\wtsr}{
          K_2\limp((A_2\ltens B_2)\ltens (A_1 \ltens B_1))
        }{\wrule}
    }{}{
      (K_1\ltens K_2)\limp((A_1\ltens A_2)\ltens (B_1 \ltens B_2))
    }{}
  &
    \odt{
      K_1\limp 
        \odnP{
          (A_1\ltens B_1)\ltens((A_2\ltens B_2)\limp K_2)
        }{\wlsr}{
          ((A_1\ltens B_1)\limp(A_2\ltens B_2))\limp K_2
        }{\wrule}
    }{}{
      ((A_1\ltens A_2)\limp(B_1\ltens B_2))\limp (K_1\ltens K_2)
    }{}
  \\[20pt]
    \text{Case 1}&  \text{Case 2}
  \\\\
    \odt{
      \odnP{
        ((K_1\limp (A_1\ltens B_1))\limp (A_2\limp B_2))
      }{\blsr}{
        (K_1\ltens ((A_1\ltens B_1)\limp (A_2\limp B_2))) 
      }{\brule}
      \limp K_2
    }{}{
      (A_1\ltens A_2)\limp(B_1\limp B_2)\limp (K_1\limp K_2)
    }{}
   &
    \odt{
      \odnP{
        (A_1\ltens B_1)\limp ((A_2\limp B_2)\ltens K_2)
      }{\btsr}{
        ((A_1\ltens B_1)\limp (A_2\limp B_2))\ltens K_2
      }{\brule}
      \limp K_1
    }{}{
      ((A_1\ltens A_2)\limp(B_1\limp B_2))\limp (K_2\limp K_1)
    }{}
  \\[20pt]
    \text{Case 3}&  \text{Case 4}
  \\\\
    \odt{
      K_1
      \limp 
      \odnP{
        (K_2\limp A_1)\lseq(K_3\limp( A_3\lseq B))
      }{\wqdr}{
        (K_2\lseq K_3)\limp(A_1\lseq(A_2\lseq B))
      }{\wrule}
    }{}{
      (K_1\ltens(K_2\lseq K_3))\limp((A_1\lseq A_2)\lseq B)
    }{}
  &
    \odt{
      K_1
      \limp 
      \odnP{
        (K_2\limp (A\lseq B_1))\lseq(K_3\limp B_2)
      }{\wqdr}{
        (K_2\lseq K_3)\limp((A\lseq B_1)\lseq B_2)
      }{\wrule}
    }{}{
      (K_1\ltens(K_2\lseq K_3))\limp(A\lseq (B_1\lseq B_2))
    }{}
  \\[20pt]
    \text{Case 5 (left associativity)}
    &
    \text{Case 5 (right associativity)}
  \\\\
    \odt{
      \odnP{
        (A_1\ltens K_1)\lseq ((A_2\lseq B)\ltens K_2)
      }{\bqdr}{
        (A_1\lseq(A_2\lseq B))\limp( K_1\lseq K_2)
      }{\brule}
      \limp K_3
    }{}{
      ((A_1\lseq A_2)\lseq B)\limp ((K_1\lseq K_2)\limp K_3)
    }{}
  &
    \odt{
      \odnP{
        ((A\lseq B_1)\ltens K_1)\lseq (B_2 \ltens K_2)
      }{\bqdr}{
        (A\lseq (B_1\lseq B_2))\limp( K_1\lseq K_2)
      }{\brule}
      \limp K_3
    }{}{
      (A\lseq (B_1\lseq B_2))\limp ((K_1\lseq K_2)\limp K_3)
    }{}
  \\[20pt]
  \text{Case 6 (left associativity)}
  &
  \text{Case 6 (right associativity)}
  \\\\
    \odt{
      K_1\limp 
      \odnP{
        (A_1\limp K_2)\lseq((A_2\lseq B)\limp K_3)
      }{\wqdr}{
        ((A_1\lseq A_2)\lseq B) \limp (K_2\lseq K_3)
      }{\wrule}
    }{}{
      ((A_1\lseq A_2)\lseq B) \limp (K_1\limp  (K_2\lseq K_3))
    }{}
  &
    \odt{
      K_1\limp
      \odnP{
        ((A\lseq B_1)\limp K_2)\lseq(B_2 \limp K_3)
      }{\wqdr}{
        (A\lseq (B_1\lseq B_2)) \limp (K_2\lseq K_3)
      }{\wrule}
    }{}{
      (A\lseq (B_1\lseq B_2)) \limp (K_1\limp  (K_2\lseq K_3))
    }{}
  \\[20pt]
    \text{Case 7 (left associativity)}
    &
    \text{Case 7 (right associativity)}
  \end{array}$}
  \caption{Cases for the splitting lemma.}
  \label{fig:splitting}
\end{figure}

We also need a version of splitting for the atoms.

\begin{lemma}[Atomic Splitting]\label{lemma:atomicSplitting}
  Let $a$ be an atom and $K$ a formula.
  Then:
  \begin{enumerate}
    \item\label{atomicSplitting:1} if $\proves[\BiV]K\limp a$, then there is a negative derivation ${\oddb aK}$;
    \item\label{atomicSplitting:2} if $\proves[\BiV]a\limp K$, then there is a positive derivation ${\oddr aK}$.
  \end{enumerate}
\end{lemma}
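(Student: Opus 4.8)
The plan is to prove the Atomic Splitting Lemma by induction on the given proof in $\BiV$, analyzing the bottom-most non-dotted rule instance, much as in the general Splitting Lemma (\Cref{lemma:splitting}). Both parts are proven simultaneously, since the polarity-switching behaviour of $\limp$ will force us to move between the positive and negative statements, just as in the proof of \Cref{lemma:TFA}. I will focus on part~\ref{atomicSplitting:1}; part~\ref{atomicSplitting:2} is handled dually.

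For part~\ref{atomicSplitting:1}, suppose $\proves[\BiV] K \limp a$. The key observation is that, since $a$ is an atom, the only rules that can produce an atom on the right of the principal implication (i.e.\ in the conclusion position $a$) are very limited: the atom $a$ cannot be decomposed by any of the structural rules $\swir$, $\sqir$, $\refrule$, $\qrule$, or the unit rules, since all of those require the conclusion's right subformula to have a specific connective as its main connective. The first step is therefore to catalogue which bottom-most rule instances are compatible with a conclusion of the shape $K \limp a$. The generating rule for the atom is $\waidr$, which introduces $a \limp a$ from $\lunit$; more generally the bottom rule acts inside $K$ (the left of the implication), leaving $a$ untouched, and then we apply the induction hypothesis to the immediate subderivation whose conclusion is $K' \limp a$ for a smaller $K'$, prepending the single rule instance to the resulting negative derivation ${\oddb a{K'}}$ to build ${\oddb aK}$.

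The heart of the argument is the base case where the atom $a$ is actually created: this must ultimately trace back to an instance of $\waidr$ producing $a \limp a$, so that $K$ reduces (via the surrounding derivation) to something that proves $a$, yielding the required negative derivation from $K$ to $a$. The slightly delicate point is that each inference rule of \Cref{fig:IBV} that acts on the left argument of the principal $\limp$ and leaves the right argument equal to the atom $a$ must be matched with a corresponding negative rule (or its dual) so that the reconstructed derivation ${\oddb aK}$ is legal as a \emph{negative} derivation. This is exactly the point where the positive/negative bookkeeping of contexts from \eqref{eq:contexts} is used: when the bottom rule is one of the switch or seq-switch rules whose left subformula changes polarity, we invoke the induction hypothesis in its dual form (part~\ref{atomicSplitting:2}), then translate the resulting positive derivation back into a negative one. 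The main obstacle I anticipate is precisely this case analysis over the many rules of $\BiV$ and verifying, rule by rule, that the reconstructed derivation respects the positive/negative discipline — the combinatorics are routine but voluminous, since (as the authors note) the number of cases doubles compared to the classical setting. Once every compatible bottom rule has been checked to preserve the invariant and to reduce the size of $K$ toward the $\waidr$ base case, the induction closes and both parts follow.
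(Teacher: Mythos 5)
There is a genuine gap: your case analysis of the bottom-most rule instance is incomplete, and the cases you omit are exactly the ones that carry the whole difficulty. You assume that any rule instance other than the $\waidr$ base case acts \emph{inside} $K$, leaving the principal implication and the atom $a$ untouched, so that the premise again has the shape $K'\limp a$ (or $a\limp K'$) and the induction hypothesis applies after ``prepending the single rule instance''. But the switch and seq-switch rules can be applied \emph{at the root}, with $a$ instantiating a schematic variable of the rule rather than being decomposed: $\wlsr$ can have conclusion $(K_1\limp K_2)\limp a$ with premise $K_1\ltens(K_2\limp a)$, and likewise $\wtsr$, $\wrsq$, $\wlsq$ yield conclusions $a\limp(K_1\ltens K_2)$ and $a\limp(K_1\lseq K_2)$ from premises $(a\limp K_1)\ltens K_2$, $(a\limp K_1)\lseq K_2$, and $K_1\lseq(a\limp K_2)$ respectively. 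In all of these the atom is moved across the principal implication, and the premise is a $\ltens$- or $\lseq$-rooted formula, hence of \emph{neither} of the two shapes covered by the lemma; your induction hypothesis (in either polarity) cannot be applied to it, so the induction stalls precisely there. Your opening claim that no structural rule can have an atom ``on the right of the principal implication'' in its conclusion is simply false for these four rules.

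The missing idea --- and the way the paper proceeds --- is to invoke the general Splitting Lemma (\Cref{lemma:splitting}) on the provable premise: from $\proves[\BiV]K_1\ltens(K_2\limp a)$ (and analogously in the $\lseq$-cases) one extracts $\proves[\BiV]K_1$ and $\proves[\BiV]K_2\limp a$, and only then does the inductive hypothesis apply, to the strictly smaller proof of $K_2\limp a$; the required negative (resp.\ positive) derivation is then assembled from the proof of $K_1$ and the derivation given by the induction hypothesis, using the unit rules $\ludr$, $\tudr$, $\wsudr$, $\wcudr$. Note also that, contrary to your plan, no mutual dependency between the two parts arises: in the trivial cases the rule instance composes directly with the derivation provided by the same part of the statement, and in the four interface cases each part calls only its own inductive hypothesis. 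The essential dependency is not between part~\ref{atomicSplitting:1} and part~\ref{atomicSplitting:2}, but on the general Splitting Lemma, which your proposal never uses and without which the argument cannot be completed.
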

\begin{proof}
  Again, by induction on the derivation and a case analysis on the
  bottom-most rule instance. We have the following non-trivial cases:
  $$
  \adjustbox{max width=\textwidth}{$
    \odn{K_1\ltens (K_2\limp a)}{\wlsr}{(K_1\limp K_2)\limp a}{\wrule}
  \quad
    \odn{(a\limp K_1)\ltens K_2}{\wtsr}{a\limp (K_1\ltens K_2)}{\wrule}
  \quad
    \odn{(a\limp K_1)\lseq K_2}{\wrsq}{a\limp (K_1\lseq K_2)}{\wrule}
  \quad
  \odn{K_1\limp (a\limp K_2)}{\wlsq}{a\limp (K_1\lseq K_2)}{\wrule}
  $}
  $$
  We conclude by applying the inductive hypothesis on the premise of the rule after remarking that, 
  because of splitting, if $\proves[\BiV] A \ltens B$ or $\proves[\BiV] A\lseq B$, then $\proves[\BiV]A$ and $\proves[\BiV]B$ must hold.
  The derivations to prove the statement are recursively defined as follows:
  \begin{equation}
    \tag*{\qed}
  \adjustbox{max width=\textwidth}{$
  \begin{array}{c|c}
    Point~\ref{atomicSplitting:1} & Point~\ref{atomicSplitting:2}
  \\\hline
    \odt{a}{\ludr}{
      \oddrP{\lunit}{K_1}
      \limp 
      \oddbP[\IH]{a}{K_2}
    }{}
  \quad&\quad
    \odt{a}{\tudr}{
      \oddrP[\IH]{a}{K_1}
      \ltens
      \oddrP{\lunit}{K_2}
    }{}
  \quad
    \odn{a}{\wsudr}{
      \oddrP{\lunit}{K_1}
      \lseq
      \oddrP[\IH]{a}{K_2}
    }{\wrule}
  \quad
    \odn{a}{\wcudr}{
      \oddrP[\IH]{a}{K_1}
      \lseq
      \oddrP{\lunit}{K_2}
    }{\wrule}
  \end{array}
  $}
  \end{equation}
\end{proof}

In order to use the splitting lemmas for proving Theorem~\ref{thm:uprules}, we need to be able to employ them in arbitrary contexts. For this, we use the context reduction lemma.

\begin{restatable}[Context Reduction]{lemma}{contextReduction}\label{lemma:contextReduction}
  Let $A$ be a formula, $\cP$ a positive context, and $\cN$ a negative context.
  \begin{enumerate}
    \item\label{contextReduction:1} 
    If $\proves[\BiV]\cP[A]$, then there is a formula $K$ and derivations
    $$
    \vlsmash{\oddr[\dDp_X]{K\limp X}{\cP[X]}} \quand \odpr[\dDp_A]{K\limp A}
    \quad \mbox{for any formula $X$.}
    $$

    \item\label{contextReduction:2} 
    If $\proves[\BiV]\cN[A]$, then there is a formula $K$ and derivations
    $$
    \vlsmash{\oddr[\dDp_X]{X\limp K}{\cN[X]}} \quand \odpr[\dDp_A]{A\limp K}
    \quad \mbox{for any formula $X$.}
    $$
  \end{enumerate}
\end{restatable}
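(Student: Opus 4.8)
The plan is to prove the two items simultaneously by induction on the structure of the context. This is forced: the positive and negative contexts of \eqref{eq:contexts} are defined by mutual recursion (as soon as the hole moves to the left of an $\limp$, a positive subcontext turns negative and vice versa), so the statement for $\cP$ and the statement for $\cN$ must be strengthened into a single induction, switching between them exactly as \Cref{lemma:TFA} does. The engine of the argument is the splitting lemma (\Cref{lemma:splitting}): it is what lets me turn a proof of $\cP[A]$ into proofs of the immediate subformulas, and hence peel off one layer of the context and call the induction hypothesis on the strictly smaller subcontext. I would measure the induction by the number of connectives of the context, so that every recursive call is on a context with fewer layers around the hole.

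In the base case $\cP=\ctx$ I take $K=\lunit$: the required proof of $K\limp A$ is obtained from $\proves[\BiV]A$ by one application of $\ludr$. The required family of derivations then has the shape $\lunit\limp X\to X$, and this is precisely the one place where the half-unit is visible. For the layers where the hole is merely accompanied by a side formula $C$, namely $\cP=\cP'\ltens C$, $\cP=C\ltens\cP'$, $\cP=\cP'\lseq C$ and $\cP=C\lseq\cP'$, I first use the splitting lemma to obtain both $\proves[\BiV]\cP'[A]$ and $\proves[\BiV]C$ (the very same observation used inside the proof of \Cref{lemma:atomicSplitting}). The induction hypothesis on $\cP'$ supplies a formula $K$, a derivation from $K\limp X$ to $\cP'[X]$, and a proof of $K\limp A$. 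I keep the same $K$ and the same proof of $K\limp A$, and recover $\cP[X]$ from $\cP'[X]$ by reintroducing $C$: a step of $\tudr$ (then $\tcomr$ if $C$ must appear on the right), or of $\wsudr$/$\wcudr$ for the seq layers, creates a fresh $\lunit$ into which I plug the proof of $C$. All of these padding steps are genuine $\BiV$ derivations.

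The delicate cases are those in which the hole lies under an implication, i.e. $\cP=C\limp\cP'$ and $\cP=\cN'\limp C$ together with their negative-context analogues. Here the argument of the outer $\limp$ changes polarity, so I pass between the positive and the negative statement of the lemma. When the antecedent of the outer implication is itself an $\limp$ or a $\lseq$, the left-hand bullets of \Cref{lemma:splitting} apply and decompose the proof directly; when it is a tensor, I first rewrite the nested tensor into nested implications using $\curry$/$\uncurry$ so that splitting becomes applicable. In each case splitting returns a provable \emph{implication} $K'\limp\cP''[A]$ rather than a bare $\proves[\BiV]\cP''[A]$, which I feed back as a strictly smaller instance of the lemma (the context below the hole having lost a connective), and I then reassemble the derivation from $K\limp X$ to $\cP[X]$ using the contravariance of $\limp$ in its first argument and further $\curry$/$\uncurry$ steps.

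The hard part will be exactly these implication layers. Recursing under an $\limp$ does not preserve plain provability of the enclosed subformula—one only ever obtains provability of an implication—so the induction cannot be a naive structural recursion assuming $\proves[\BiV]\cP'[A]$; it must thread the accompanying antecedent through the splitting lemma while tracking the polarity flip between positive and negative contexts, and must keep every produced derivation inside $\BiV$, handling the half-unit only through $\tudr$, $\ludr$, $\wsudr$ and $\wcudr$ rather than through genuine unit cancellation. Getting the construction of $K$ right across the positive/negative switch, and reconciling it with the unit step that appears already in the base case, is where I expect essentially all the real work of the proof to concentrate.
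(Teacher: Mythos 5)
Your overall architecture is the paper's: a simultaneous induction on positive and negative contexts, one application of the splitting lemma (\Cref{lemma:splitting}) per layer, the induction hypothesis applied to a context that absorbs the split-off antecedent (so that it is the depth of the hole, not plain provability of a subformula, that drives the recursion), and reassembly using contravariance of $\limp$ together with $\curry$/$\uncurry$; you even correctly identified that this threading of implications is the crux. The genuine gap is your base case. For $\cP=\ctx$ you take $K=\lunit$ and claim a family of $\BiV$-derivations from $\lunit\limp X$ to $X$. No such derivation exists: that is literally the up-rule $\luur$, one of the very rules whose admissibility this lemma is meant to establish (\Cref{thm:uprules}), so assuming it is circular --- and in fact false. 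Concretely, take $X$ to be a fresh atom $c$: no inference rule of $\IBV$ has a bare atom as its conclusion, so the only derivation whose conclusion is $c$ is the identity derivation, and its premise $c$ is not $\lunit\limp c$. Your gloss that this is ``the one place where the half-unit is visible'' has it exactly backwards: the half-unit property means only the direction $X\to\lunit\limp X$ (rule $\ludr$) is available in $\BiV$, never its converse, and that is precisely why $\lunit$ cannot serve as $K$.

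The error is structural rather than locally repairable: by the same fresh-atom argument \emph{no} choice of $K$ serves the bare hole, and the damage propagates through your recursion, since your tensor/seq layers reuse the $K$ returned by the recursive call and your implication layers recurse down to the bare hole. So, for instance, $\cP=\ctx\ltens C$ and even $\cP=H\limp\ctx$ come out broken (for the former, note that a $\lunit$ occurring in negative position can never be consumed by any $\IBV$ rule --- only $\waidr$ consumes a unit, and only in positive position --- so there is no derivation from $\lunit\limp X$ to $X\ltens C$ when $X\ltens C$ is unit-free). The paper avoids all of this by never letting the hole stand bare: its base cases are $\cP=H\limp\ctx$ and $\cN=\ctx\limp H$, where one takes $K=H$ and $\dDp_X$ the identity derivation, and every inductive case (splitting once, then the induction hypothesis) hands the IH a context that is again of this guarded shape, e.g.\ $H_P\limp\cPp$, so that the $K$ produced is always the accumulated antecedent and never $\lunit$. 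Rebuilding your induction around that invariant --- the hole always sits under an implication whose antecedent becomes $K$ --- is the fix, and it is exactly the paper's proof.
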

\begin{proof}
  By induction on the structure of (positive and negative) contexts, see \eqref{eq:contexts}.
  Except for the trivial cases $\cP=H\limp \ctx$ and $\cN=\ctx\limp H$, the discussion of each case requires to apply the splitting lemma once to get derivations to be used, together with the inductive case, to construct the derivations $\dDp_X$ and $\dDn_X$.
  More details are provided in \Cref{app:splitting}.
  \qed
\end{proof}

\begin{proof}[for Theorem~\ref{thm:uprules}]
  We can remove all instances of the up-rules in a derivation, starting with a topmost one. Given such an instance $\rur$, we apply context reduction and splitting to the premise or $\rur$ (once for $\bsudr$, $\bcudr$, $\tuur$, $\luur$, and twice for $\baiur$, $\bqur$, $\wqur$). The derivations we obtained 
  are then assembled to construct a derivation of the conclusion of $\rur$. For $\baiur$ and $\wqur$, these derivations are shown in \Cref{fig:upElim}.
  More details are given in \Cref{app:splitting}.
  \qed
\end{proof}

\begin{figure}[t!]
  \adjustbox{max width=\textwidth}{$\begin{array}{c|c}
    \od{
      \odo{
        \odi{\odh{\lunit}}{\waidr}{
          \oddr[\dD_K]{
            {\oddb[\dD_L]{a}{K_L}
            \limp 
            \oddr[\dD_R]{a}{K_R}}
          }{K}
        }{\wrule}
      }{\ludr}{
        \oddr[\dD_\lunit]{\lunit\limp K}{\cN[\lunit]}
      }{}
    }
  \quad&\quad
    \od{
      \odd{
        \odo{\odi{
          \odo{\odh{\lunit}}{\wsudr}{
            \oddrP[\dD_A]{\lunit}{K_A\limp A}\lseq \oddrP[\dD_C]{\lunit}{K_C\limp C}
          }{}
        }{\wqdr}{
          (K_A\lseq K_C) 
          \limp
          \odnP{
            \left(
              \od{\odi{\odo{\odh A}{\tudr}{\oddrP[\dD_B]{\lunit}{K_B \limp B} \ltens A}{}
                }{\wtsr}{
                  K_B \limp (B\ltens A)
                }{\wrule}
              }
            \right)
            \lseq 
            \left(
              \od{\odi{\odo{\odh C}{\tudr}{\oddrP[\dD_D]{\lunit}{K_D \limp D} \ltens C}{}
                }{\wtsr}{
                  K_D \limp (D\ltens C)
                }{\wrule}
              }
            \right)
          }{\wqdr}{
            (K_B\lseq K_D)\limp ((B\ltens A)\lseq(D\ltens C))
          }{\wrule}
        }{\wrule}}{\curryrule}{
          \oddbP[\dD_K]{
            \oddbP{K_A\lseq K_C}{K_L}
            \ltens 
            \oddbP{K_B\lseq K_D}{K_R}
          }{K}
          \limp 
          \left(
            \odtP{B\ltens A}{\tcomr}{A \ltens B}{}
            \lseq
            \odtP{D\ltens C}{\tcomr}{C \ltens D}{}
          \right)
        }{}
      }{\dDp_{(A\ltens B)\lseq(C\ltens D)}}{
        \cP[(A\ltens B)\lseq(C\ltens D)]
      }{}
    }
  \end{array}$}
  \caption{
    Derivations for admissibility of $\baiur$ and $\wqur$. The case for $\bqur$ is similar. 
    All subderivations exist by splitting and context reduction lemmas.
  }
  \label{fig:upElim}
\end{figure}

\section{\boldmath From $\IMLL$ to $\IBV$}\label{sec:IMLLtoIBV}

In this section we show that $\BiV$ is a conservative extension of intuitionistic multiplicative linear logic ($\MiLL$): 
every $\IMLL$-formula that is provable in $\IBV$ is also provable in $\IMLL$.

We assume the reader to be familiar with sequent calculi and cut-elimination.
A \defn{sequent} is a pair $\Gamma\vdash A$, where $\Gamma$ is a (possibly empty) set of occurrences of formulas, and $A$ is a formula.
\Cref{fig:IMLL} shows the inference rules of the sequent calculus for $\MiLL$, and it's well known that it admits cut elimination.
\begin{theorem}
  The rule $\AXrule$ is derivable in $\MiLL$.
\end{theorem}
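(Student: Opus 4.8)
The plan is to prove this by a routine induction on the structure of the formula $A$, constructing a cut-free derivation of the sequent $A\vdash A$ out of the atomic axioms $\axrule$ and the logical rules of \Cref{fig:IMLL}. The statement $\AXrule$ is exactly the generalized identity $\frac{}{A\vdash A}$ for arbitrary $A$, and the point is to reduce it to the atomic case.

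For the base cases, first I would observe that when $A=a$ is an atom the desired rule instance is literally an instance of $\axrule$, so nothing is to be done. When $A=\lunit$, I would apply $\lunitr$ to obtain $\vdash\lunit$ and then $\lunitl$ to this to get $\lunit\vdash\lunit$. For the inductive step I would treat the two binary connectives. If $A=B\ltens C$, then by the induction hypothesis I have derivations of $B\vdash B$ and $C\vdash C$; applying $\rltens$ to these yields $B,C\vdash B\ltens C$, and a final $\lltens$ gives $B\ltens C\vdash B\ltens C$. If $A=B\limp C$, then again by the induction hypothesis I have $B\vdash B$ and $C\vdash C$; applying $\llimp$ produces $B\limp C, B\vdash C$, and applying $\rlimp$ to discharge $B$ on the left yields $B\limp C\vdash B\limp C$.

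I do not expect any genuine obstacle, since this is the standard expansion of identity axioms and the cut rule is never invoked. The only point requiring mild attention is the single-succedent intuitionistic discipline of \Cref{fig:IMLL}: in the implication case one cannot appeal to a self-dual argument as in classical or one-sided systems, but must use the left and right implication rules in tandem while respecting the splitting of the context between the two premises of $\llimp$. Checking that the multiset of antecedent occurrences matches up correctly in each case is the bookkeeping that completes the argument.
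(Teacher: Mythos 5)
Your proof is correct: the paper states this theorem without giving a proof (treating it as the standard identity-expansion fact), and your structural induction --- atoms via $\axrule$, the unit via $\lunitr$ followed by $\lunitl$, tensor via $\rltens$ followed by $\lltens$, and implication via $\llimp$ followed by $\rlimp$ --- is exactly the canonical argument one would write. All your rule instances respect the single-succedent discipline and the context splitting of the rules in \Cref{fig:IMLL}, and since $\MiLL$-formulas contain no $\lseq$, no further cases are needed.
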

\begin{theorem}\label{thm:cutElimIMLL}
  The rule $\cutr$ is admissible $\MiLL$.
\end{theorem}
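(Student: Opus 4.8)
The plan is to give the standard Gentzen-style cut-elimination argument, specialised to the single-conclusion multiplicative linear setting. I would prove admissibility of $\cutr$ by a \emph{principal} induction on the size of the cut formula $A$, with a \emph{secondary} induction on the sum of the heights of the derivations of the two premises $\Gamma\vdash A$ and $\Delta,A\vdash C$. It suffices to show how to eliminate a single topmost cut, i.e.\ a cut whose two premises are derived without any cut: I then proceed by case analysis on the bottom-most rule instances of those two derivations, in each case replacing the cut either by nothing or by cut(s) that are strictly smaller in the induction order.

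First I would treat the axiom and commutative cases. If either premise is an instance of $\axrule$, the cut disappears and the remaining premise is exactly the desired conclusion. Otherwise, if the cut formula $A$ is not principal in the last rule of one of the two premises, I permute the cut upward past that rule; this leaves $A$ unchanged while strictly decreasing the total height, so the secondary induction hypothesis applies. A pleasant feature of the linear calculus is that these permutations are entirely duplication-free: since there is no weakening or contraction, when the cut is pushed past a two-premise rule ($\rltens$ or $\llimp$) the cut formula occurs in exactly one of the two premises, so the commutation is unambiguous and no subderivation is copied.

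It remains to handle the principal cases, where $A$ is principal on both sides; these are indexed by the topmost connective of $A$. For $A=A_1\ltens A_2$, a cut between $\rltens$ and $\lltens$ is replaced by two nested cuts on $A_1$ and on $A_2$; for $A=A_1\limp A_2$, a cut between $\rlimp$ and $\llimp$ is replaced by cuts on $A_1$ and $A_2$; and for $A=\lunit$, the cut between $\lunitr$ and $\lunitl$ simply vanishes. In each subcase the new cuts are on proper subformulas of $A$, so the principal induction hypothesis applies and closes the case. The only real bookkeeping---and hence the main (if minor) obstacle---is keeping track of the multiset contexts through the binary rules so that linearity is respected and the conclusion carries exactly the expected context $\Gamma,\Delta$; once this is done carefully, the argument is routine, which is why the statement is flagged as well known.
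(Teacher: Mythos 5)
Your proof is correct, and there is nothing in the paper to compare it against: the paper states this theorem without proof, appealing to the fact that cut elimination for $\MiLL$ is well known, and what you have written out is precisely that standard Gentzen-style argument. Your case analysis is complete for this calculus (axiom, unit, permutation, and the principal cases for $\ltens$ and $\limp$), and your induction order is sound --- in particular the nested cuts produced in the principal cases are justified by the primary induction on the cut formula alone, independently of derivation heights, and your observation that the absence of weakening and contraction makes the commutative cases duplication-free is exactly the feature that keeps the argument routine here.
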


\begin{figure}[!t]
  \adjustbox{max width=\textwidth}{$
  \begin{array}{c@{\qquad}c@{\qquad}c|c}
    \vlinf{\axrule}{}{a \vdash a}{}
  &
    \vlinf{\rlimp}{}{\Gamma \vdash A\limp B}{ \Gamma, A\vdash B}
  &
    \vliinf{\llimp}{}{ \Gamma, A\limp B, \Delta \vdash C}{\Gamma
    \vdash A}{B,\Delta\vdash C}
  \quad&\quad
    \vliinf{\cutr}{}{\Gamma, \Delta \vdash  C}{ \Gamma\vdash A}{A,\Delta\vdash C}
  \\[2.5ex]
    \vlinf{\lunitr}{}{\vdash \lunit}{}
    \qquad
    \vlinf{\lunitl}{}{\Gamma,\lunit\vdash A}{\Gamma\vdash A}
  &
    \vlinf{\lltens}{}{\Gamma, A\ltens B \vdash C}{ \Gamma, A, B \vdash C}
  &
    \vliinf{\rltens}{}{ \Gamma, \Delta\vdash A\ltens B}{\Gamma \vdash A}{\Delta \vdash B}
  &
    \vlinf{\AXrule}{}{A \vdash A}{}
  \end{array}
  $}
  \caption{
    {\bf Left:} Rules for $\MiLL$
    \qquad 
    {\bf Right:} The $\cutr$ rule and the generalized axiom
  }
  \label{fig:IMLL}
\end{figure}

A \defn{formula interpretation} of a sequent $B_1,\ldots,B_n \vdash A$ is a formula of the shape $(B_1\ltens \cdots\ltens B_n)\limp A$ for any bracketing of the $\ltens$. If $n=0$ then the unique \defn{formula interpretation} is just $A$. 
Note that any two formula interpretations of a sequent are interderivable in $\IBV$ using the rules $\tcomr$ and $\tassr$. This allows us to define 
a sequent $\Gamma \vdash B$ to be \defn{derivable} in $\BiV$ if any of its formula interpretations is derivable in $\BiV$.

\begin{lemma}\label{lemma:IMLLtoIBV}
  Let $\rrule$ be an inference rule of $\MiLL$.
  If every premise of and instance of $\rrule$ is derivable in $\BiV$, then so is its conclusion.
\end{lemma}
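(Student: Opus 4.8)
The plan is to proceed by case analysis on the rule $\rrule$, in each case transforming a conveniently chosen formula interpretation of the premise(s) into a formula interpretation of the conclusion. Two facts set the stage. First, since any two formula interpretations of a sequent are interderivable via $\tcomr$ and $\tassr$, I am free to fix whichever bracketings are convenient and to silently reassociate. Second, since $\SBiV$ and $\BiV$ prove the same formulas, I may freely invoke the Deduction Theorem (\Cref{thm:deduction}), \Cref{cor:deduction}, \Cref{lemma:TFA}, and the derivations in~\eqref{eq:deduction}, even though some are stated for $\SBiV$: the hypotheses and conclusion of the lemma concern only $\BiV$-provability of formula interpretations, so I can pass to $\SBiV$, build the derivation there, and read the result back.

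The axiom and unit-introduction rules are immediate: the interpretation $a\limp a$ of the conclusion of $\axrule$ is proved by the rule $\waidr$, and the interpretation $\lunit$ of the conclusion of $\lunitr$ is the identity derivation on $\lunit$. For the unary rules the conclusion interpretation differs from the premise interpretation only by a $\BiV$-provable rewriting at the root. For $\rlimp$, a premise interpretation $(\Gamma\ltens A)\limp B$ becomes the conclusion interpretation $\Gamma\limp(A\limp B)$ by appending a single instance of $\curry$ to the given proof. For $\lltens$, the interpretations $(\Gamma\ltens A\ltens B)\limp C$ and $(\Gamma\ltens(A\ltens B))\limp C$ coincide up to $\tassr$, so nothing beyond reassociation is needed. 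For $\lunitl$, the conclusion interpretation $(\Gamma\ltens\lunit)\limp C$ is obtained from the premise interpretation $\Gamma\limp C$ by rewriting $\Gamma\ltens\lunit$ into $\Gamma$ with $\tuur$ (modulo $\tcomr$) in the negative context $\ctx\limp C$.

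The two binary rules carry the real content. For $\rltens$, from the premise proofs of $\Gamma\limp A$ and $\Delta\limp B$ I apply \Cref{lemma:TFA} in the positive contexts $\ctx\ltens\Delta$ and $A\ltens\ctx$ to get $\proves[\BiV](\Gamma\ltens\Delta)\limp(A\ltens\Delta)$ and $\proves[\BiV](A\ltens\Delta)\limp(A\ltens B)$, then chain them by transitivity of $\limp$: read both as derivations via \Cref{thm:deduction}, compose them vertically, and read the total back as a proof of $(\Gamma\ltens\Delta)\limp(A\ltens B)$. For $\llimp$, the conclusion interpretation is $(\Gamma\ltens(A\limp B)\ltens\Delta)\limp C$. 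I first use \Cref{cor:deduction} with the premise proof of $\Gamma\limp A$ in the positive context $\ctx\ltens(A\limp B)\ltens\Delta$ to rewrite $\Gamma$ to $A$, yielding a derivation $\Gamma\ltens(A\limp B)\ltens\Delta\to A\ltens(A\limp B)\ltens\Delta$. The key step is the \emph{evaluation} $A\ltens(A\limp B)\to B$, a derivation of which is exactly the left-hand derivation in~\eqref{eq:deduction} (it uses only $\wlsr$, the cut $\biur$, and $\luur$, all available in $\SBiV$); applied in the positive context $\ctx\ltens\Delta$ it gives a derivation down to $B\ltens\Delta$. Finally I append the premise proof of $(B\ltens\Delta)\limp C$, read as a derivation via \Cref{thm:deduction}, to reach $C$, and compose everything vertically, reading the result back through \Cref{thm:deduction}.

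The main obstacle is the $\llimp$ case: it is the only one that must splice together two independent premise derivations around the internalized evaluation step, and care is required to keep the non-commutative bookkeeping and the positive/negative context discipline consistent, so that each intermediate rewriting sits in a context of the correct polarity for \Cref{lemma:TFA} and \Cref{cor:deduction} to apply. Once the evaluation derivation is isolated and justified by~\eqref{eq:deduction}, the remaining assembly is a routine vertical composition.
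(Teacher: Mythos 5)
Your proof is correct, but it takes a genuinely different route from the paper's. The paper proves the lemma entirely inside the cut-free system $\BiV$: for the binary rules $\llimp$ and $\rltens$ it tensors the two premise proofs together (starting from $\lunit$ via $\tudr$) and then massages the resulting formula into a formula interpretation of the conclusion using only rules of $\IBV$ --- the switches $\wlsr$, $\wtsr$, the negative-context rules $\blsr$, $\btsr$, and $\curry$/$\uncurry$/$\tcomr$. In particular, the ``evaluation'' step $A\ltens(A\limp B)\to B$ that you isolate never occurs: the switch rules let the hypothesis $\Gamma\limp A$ be absorbed into a negative position of an implication rather than cut against, so neither $\biur$ nor the Deduction Theorem is needed. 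Your route instead assembles everything in $\SBiV$ (via \Cref{lemma:TFA}, \Cref{thm:deduction}, \Cref{cor:deduction} and the cut $\biur$ for transitivity and evaluation) and then invokes the equivalence of $\SBiV$ and $\BiV$ following \Cref{thm:uprules}. Since that equivalence is established in \Cref{sec:splitting}, before the present lemma, and its proof does not depend on this lemma, there is no circularity and your argument is sound. What the paper's approach buys: the explicit derivations are small, stay in the down-fragment, and visibly avoid $\sassl$ and $\sassr$ --- exactly the property that is reused later in \Cref{thm:BiVextensNiML} for the associativity-free system $\IBV\setminus\set{\sassr,\sassl}$, where your argument would not transfer, since cut elimination is only established for full $\IBV$ and could a priori reintroduce associativity steps. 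What your approach buys: it avoids having to find the switch-rule derivations by hand, reducing all cases to general-purpose lemmas already proved.

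One small slip worth fixing: in the $\lunitl$ case, the rule you need inside the negative context $\ctx\limp C$ is $\tudr$ (premise $\Gamma$, conclusion $\lunit\ltens\Gamma$), not $\tuur$. A rule instance placed in a negative context still has its premise on top, so $\tuur$ would yield a derivation from $(\lunit\ltens\Gamma)\limp C$ \emph{down to} $\Gamma\limp C$, the wrong direction. Using $\tudr$ followed by $\tcomr$ the step works as you intend, and it has the added benefit of staying within $\BiV$.
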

\begin{proof}
  For the rules $\lunit$ and $\lltens$ this is trivial.
  For  $\axrule$ it follows by a single instance of $\aidr$.
  For $\rlimp$, observe that a formula interpretation of its conclusion can be obtained via  the $\curryrule$-rule from a formula interpretation of its premise.
  Finally, for the rules $\llimp$ and $\rltens$, formula interpretations of their conclusions are derived via the following derivations:
  $$
  \adjustbox{max width=\textwidth}{$
  \od{\odp{\dD'}{
    \od{
      \odo{
        \odi{\odh{
            (\Gamma\limp A) 
            \ltens 
            \odtP{
              (\Delta\ltens B) \limp C
            }{\curryrule}{
              B\limp (\Delta\limp C)
            }{}
          }
        }{\wlsr}{
          \odnP{(\Gamma\limp A)\limp (B \ltens \Delta)}{\btsr}{
            \odnP{(\Gamma\limp A)\limp B}{\blsr}{
              \Gamma\ltens (A\limp B)
            }{\brule}
            \ltens \Delta
          }{\brule}
          \limp C
        }{\wrule}
      }{\curry}{
        (\Gamma \ltens \Delta \ltens (A\limp B)) \limp C
      }{}
    }
  }{}}
  \quand
  \od{\odp{\dD'}{
    \od{\odo{\odi{
      \odh{(\Gamma \limp A)\ltens (\Delta \limp B)}
    }{\wtsr}{
      \Gamma 
      \limp 
      \left(
        \od{
          \odi{
            \odo{\odh{A\ltens (\Delta \limp B)}}{\tcomr}{
              (\Delta \limp B)\ltens A
            }{}
          }{\wtsr}{\Delta\limp (A\ltens B)}{\wrule}}
      \right)
    }{\wrule}}{\uncurry}{
      (\Gamma\ltens \Delta) \limp (A\ltens B)
    }{}}
    }{}}
  $}
  $$
  where $\dD'$ are derivations of formula interpretations of their premises.
  \qed
\end{proof}

\begin{lemma}\label{lemma:IBVtoIMLL}
  (i) Let $\rrule\in\BiV$, and let $A$ and $B$ be formulas not containing any occurrence of~$\lseq$. If $\nodn{A}{\rrule}{B}{\wrule}$ then $A\vdash B$ is derivable in $\MiLL$, and if $\nodn{A}{\rrule}{B}{\brule}$ then $B\vdash A$ is derivable in $\MiLL$.
  (ii) Let $\cP$ and $\cN$ be a positive and a negative $\lseq$-free context, respectively. If $A\vdash B$ is derivable in $\MiLL$, then so are $\cP[A]\vdash\cP[B]$ and $\cN[B]\vdash\cN[A]$.
\end{lemma}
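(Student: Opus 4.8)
The plan is to prove the two parts by different methods: part (i) by a finite case analysis over the inference rules of $\BiV$, and part (ii) by a simultaneous induction on the structure of the two kinds of $\lseq$-free context.

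For part (i), the first observation is that the hypothesis that \emph{both} $A$ and $B$ are $\lseq$-free makes most rules of \Cref{fig:IBV} vacuous: every rule mentioning $\lseq$ (namely $\wsudr$, $\wcudr$, $\wrr$, $\brr$, the four $\sqir$-rules $\wrsq$, $\wlsq$, $\brsq$, $\blsq$, the rules $\wqdr$, $\bqdr$, and the seq-associativities $\sassl$, $\sassr$) has a premise or a conclusion in which $\lseq$ occurs, so no instance of it can have both $A$ and $B$ $\lseq$-free. The rules that remain to be checked are therefore $\waidr$, the four switch rules $\wlsr$, $\wtsr$, $\blsr$, $\btsr$, and the dotted rules $\tcomr$, $\tassr$, $\tudr$, $\ludr$, $\curry$, $\uncurry$. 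For each I would simply exhibit the corresponding cut-free $\MiLL$ derivation: for instance $\lunit \vdash a\limp a$ is obtained from $\axrule$ by $\rlimp$ and $\lunitl$, while $A\ltens(B\limp C)\vdash (A\limp B)\limp C$ (the $\wlsr$ case) follows from two applications of $\llimp$ built over axioms, then $\lltens$ and $\rlimp$. Since a $\brule$-rule is just the converse reading of a $\wrule$-rule, the sequent required for $\blsr$ (resp.\ $\btsr$) is literally the one already proved for $\wlsr$ (resp.\ $\wtsr$); and the dotted rules denote equivalences, so both directions of the turnstile are needed but each is a routine $\MiLL$ derivation using the generalized axiom, the unit rules, and currying.

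For part (ii), I would prove the positive statement ($\cP[A]\vdash\cP[B]$) and the negative statement ($\cN[B]\vdash\cN[A]$) \emph{simultaneously} by induction on the context, following the $\lseq$-free fragment of the grammar~\eqref{eq:contexts}. The base case $\cP=\ctx$ is exactly the hypothesis $A\vdash B$, and $\cN$ has no base case because the hole is inherently positive. The closure cases $\cP=\cP'\ltens C$, $\cP=C\ltens\cP'$ and $\cP=C\limp\cP'$ (and their negative analogues) use the generalized axiom $C\vdash C$, derivable by the earlier theorem that $\AXrule$ is admissible in $\MiLL$, together with a single application of $\rltens$/$\lltens$ or $\llimp$/$\rlimp$ to push the induction hypothesis of the \emph{same} polarity through. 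The only delicate cases are $\cP=\cN\limp C$ and $\cN=\cP\limp C$: here the hole sits to the left of an implication, so its polarity is reversed, and one must invoke the induction hypothesis of the \emph{opposite} polarity. Concretely, for $\cP=\cN\limp C$ I would take the negative hypothesis $\cN[B]\vdash\cN[A]$, apply $\llimp$ against $C\vdash C$ to obtain $\cN[B],\ \cN[A]\limp C\vdash C$, and close with $\rlimp$ to get $\cN[A]\limp C\vdash\cN[B]\limp C$.

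The main obstacle is precisely this polarity inversion at the left of an implication: it is what forces the two statements of part (ii) to be proved together rather than in isolation, and it requires careful bookkeeping of which side of the turnstile each context instance lands on. Once the simultaneous induction is set up correctly, every individual step is a single sequent-calculus rule applied to the generalized axiom and the appropriate induction hypothesis, so no $\cutr$ is introduced and all derivations stay inside cut-free $\MiLL$.
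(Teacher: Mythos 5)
Your proposal is correct and takes essentially the same route as the paper: part (i) by exhibiting cut-free $\MiLL$ derivations for the non-vacuous rules (the $\lseq$-mentioning rules being excluded by the $\lseq$-freeness hypothesis, and each $\bsymb$-rule requiring the same sequent as its $\wsymb$-counterpart), and part (ii) by a simultaneous induction on positive and negative contexts. The paper gives no details for part (ii) beyond naming the simultaneous induction, and your explicit handling of the polarity-reversing cases $\cP = \cN\limp C$ and $\cN = \cP\limp C$ via the opposite-polarity induction hypothesis, $\llimp$ against the generalized axiom, and $\rlimp$ is exactly the argument that induction requires.
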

\begin{proof}
  (i)
  Cases for the rule $\tudr$, $\ludr$, $\tcomr$, and $\tassr$ are the standard derivations to prove that associativity and commutativity of $\ltens$, and the fact that $\lunit$ is the unit for $\ltens$ and the left-unit of $\limp$.
  Cases for rules $\waidr$ and $\curryrule$ are the following:
  $$
  \adjustbox{max width=\textwidth}{$
  \vlderivation{
    \vlin{\llimp}{}{\lunit \vdash a\limp a}{
      \vlin{\lunitl}{}{\lunit,a\vdash a}{
        \vlin{\axrule}{}{a\vdash a}{\vlhy{}}
      }
    }
  }
  \qquad
  \vlderivation{
    \vliq{\rlimp+\lltens}{}{
      A\limp (B\limp C)\vdash (A\ltens B)\limp C
    }{
      \vliiiq{2\times\llimp}{}{A\limp (B\limp C), A,B\vdash C}{
        \vlin{\axrule}{}{A\vdash A}{\vlhy{}}
      }{
        \vlin{\axrule}{}{B\vdash B}{\vlhy{}}
      }{
        \vlin{\axrule}{}{C\vdash C}{\vlhy{}}
      }
    }
  }
  \qquad
  \vlderivation{
    \vliq{2\times \rlimp}{}{
      (A\ltens B)\limp C\vdash A\limp (B\limp C)
    }{
      \vliiiq{\llimp+\rltens}{}{(A\ltens B)\limp C, A,B\vdash C}{
        \vlin{\axrule}{}{A\vdash A}{\vlhy{}}
      }{
        \vlin{\axrule}{}{B\vdash B}{\vlhy{}}
      }{
        \vlin{\axrule}{}{C\vdash C}{\vlhy{}}
      }
    }
  }
  $}
  $$
  If $\rrule\in\set{\wlsr,\wtsr,\btsr,\blsr}$, then we have either the following derivations
  \begin{equation}\label{eq:srulesIMLL}
  \small
  \vlderivation{
    \vliq{\rlimp+\lltens}{}{
      (A\limp B)\ltens C \vdash A\limp(B\ltens C)
    }{
      \vliiiq{\llimp+\rltens}{}{
        (A\limp B),C, A\vdash B\ltens C
      }{
        \vlin{\axrule}{}{A\vdash A}{\vlhy{}}
      }{
        \vlin{\axrule}{}{B\vdash B}{\vlhy{}}
      }{
        \vlin{\axrule}{}{C\vdash C}{\vlhy{}}
      }
    }
  }
  \qquad\quad
  \vlderivation{
    \vliq{\rlimp+\lltens}{}{
      A \ltens (B\limp C) \vdash (A\limp B) \limp C
    }{
      \vliiiq{2\times\llimp}{}{
        A , B\limp C , A\limp B \vdash C
      }{
        \vlin{\axrule}{}{A\vdash A}{\vlhy{}}
      }{
        \vlin{\axrule}{}{B\vdash B}{\vlhy{}}
      }{
        \vlin{\axrule}{}{C\vdash C}{\vlhy{}}
      }
    }
  }
  \end{equation}
  (ii) is shown by induction on the context, where both statements are proved simultaneously.
  \qed
\end{proof}

\begin{theorem}\label{thm:BiVextensMiLL}
  $\BiV$ is a conservative extension of $\MiLL$.
\end{theorem}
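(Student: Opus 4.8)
The plan is to unfold ``conservative extension'' into its two halves and dispatch each with one of the two preceding lemmas. First I would prove the \emph{extension} part: every $\MiLL$-provable formula is $\BiV$-provable. A formula $A$ is provable in $\MiLL$ exactly when the sequent $\mathord{\vdash}A$ is derivable, and the formula interpretation of $\mathord{\vdash}A$ is $A$ itself. I would argue by induction on the height of a $\MiLL$-derivation that every derivable sequent is derivable in $\BiV$, in the sense of the formula-interpretation convention fixed just before Lemma~\ref{lemma:IMLLtoIBV}. There is nothing to add beyond that lemma: applied at each rule instance (including the axioms $\axrule$ and $\lunitr$, which have no premises and so are immediate base cases), it sends $\BiV$-derivable premises to a $\BiV$-derivable conclusion. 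Taking the last sequent $\mathord{\vdash}A$ yields $\proves[\BiV]A$.

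The harder \emph{conservativity} part requires that a $\lseq$-free $A$ with $\proves[\BiV]A$ be provable in $\MiLL$. Since $\BiV$ has only down-rules, a proof is literally a positive derivation with premise $\lunit$ and conclusion $A$, so, in contrast with the usual situation, no cut-elimination for $\BiV$ is needed here. I would introduce the homomorphic translation $t$ that replaces every $\lseq$ by $\ltens$ and is the identity on atoms, on $\lunit$, and on $\ltens$ and $\limp$. By inspection of the context grammar \eqref{eq:contexts}, $t$ sends positive (resp.\ negative) contexts to positive (resp.\ negative) $\lseq$-free contexts. The core is the following claim, proved by induction on the open-deduction structure of $\BiV$-derivations: for a positive derivation with premise $C$ and conclusion $D$ the sequent $t(C)\vdash t(D)$ is derivable in $\MiLL$, and for a negative one $t(D)\vdash t(C)$ is. Applying this to the proof of $A$ gives $\lunit\vdash A$, and a cut against $\lunitr$ (eliminable by Theorem~\ref{thm:cutElimIMLL}) produces $\mathord{\vdash}A$, since $t(A)=A$ and $t(\lunit)=\lunit$.

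For the claim, the base case (the identity derivation) is the generalized axiom $\AXrule$, derivable in $\MiLL$. The horizontal compositions $\dDp\ltens\dDq$ and $\dDp\lseq\dDq$ both translate to a tensor of two translated sequents and are handled by $\rltens$ and $\lltens$ (this is exactly the content of Lemma~\ref{lemma:IBVtoIMLL}(ii)); the composition $\dDn\limp\dDp$ uses the \emph{negative} induction hypothesis on $\dDn$, so that its translated sequent is oriented correctly to be consumed by $\llimp$ and a cut before $\rlimp$ rebuilds the implication. For the vertical compositions I must check that each rule instance, read through $t$, is a $\MiLL$-derivable implication oriented according to its polarity: for the $\lseq$-free rules ($\waidr$, the four switches, $\tcomr$, $\tassr$, $\curry$, $\uncurry$, $\tudr$, $\ludr$) this is precisely Lemma~\ref{lemma:IBVtoIMLL}(i), and for the genuinely $\lseq$-touching rules it is a routine parametric check: $\wrr$ and $\brr$ collapse to identities, $\wsudr$ and $\wcudr$ become the half-unit laws for $\ltens$, the $\sqir$-rules become instances of the switch, and $\wqdr,\bqdr$ become the fixed $\MiLL$-derivable rearrangement $(X\limp Y)\ltens(Z\limp W)\vdash(X\ltens Z)\limp(Y\ltens W)$ and a permutation of tensors, respectively. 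These translated implications are then chained by cut.

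The main obstacle is the bookkeeping in this translation claim rather than any single hard step: one must keep both polarities in play at once, so that the $\limp$ case always has a correctly-oriented hypothesis, and one must confirm rule by rule that $t$ turns each $\lseq$-rule into a $\MiLL$-derivable sequent with exactly the orientation dictated by its $\circ/\bullet$ decoration. This is the precise point where the asymmetry between the positive and negative versions of the $\sqir$- and $\qdr$-rules must be matched against the direction of the corresponding $\MiLL$ implication. Everything else reduces to the already-established Lemmas~\ref{lemma:IMLLtoIBV} and~\ref{lemma:IBVtoIMLL} together with cut-elimination for $\MiLL$ (Theorem~\ref{thm:cutElimIMLL}).
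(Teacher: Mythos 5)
Your proposal is correct, and its extension half is exactly the paper's: induction over the $\MiLL$ proof, using Lemma~\ref{lemma:IMLLtoIBV} at each rule instance. For the conservativity half, however, you take a genuinely different route. The paper inducts on the number of rule instances in the $\BiV$ proof: it peels off a bottommost instance with conclusion $\cC[B]$ and premise $\cC[B']$, gets $\cC[B']\vdash\cC[B]$ from Lemma~\ref{lemma:IBVtoIMLL} (parts (i) and (ii)), cuts this against the proof of $\vdash\cC[B']$ provided by the induction hypothesis, and removes the cut by Theorem~\ref{thm:cutElimIMLL}. That argument silently relies on an invariant: every formula occurring in a $\BiV$ proof of a $\lseq$-free formula must itself be $\lseq$-free, since both the induction hypothesis and Lemma~\ref{lemma:IBVtoIMLL} (stated only for $\lseq$-free formulas) are applied to the premise $\cC[B']$. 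The invariant does hold --- reading a derivation downwards, positive $\lseq$-occurrences can only be created, never destroyed, and negative ones only arise via $\wqdr$ or $\bqdr$ from already existing ones, so a proof of a $\lseq$-free conclusion never sees a $\lseq$ at all --- but the paper leaves this unstated. Your translation $t$ collapsing $\lseq$ to $\ltens$ sidesteps the issue entirely: you never need intermediate formulas to be $\lseq$-free, only that each $\BiV$ rule, read through $t$ and oriented by its $\circ/\bullet$ decoration, is an $\MiLL$-derivable implication, which is the routine check you outline. The price is that you must verify the $\lseq$-touching rules yourself (the paper's Lemma~\ref{lemma:IBVtoIMLL} covers only the $\lseq$-free ones), and your enumeration there omits the dotted rules $\sassl$ and $\sassr$; under $t$ they become associativity of $\ltens$, derivable in both orientations, so nothing breaks, but they should be listed. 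Two further cosmetic points: in the $\dDn\limp\dDp$ case the rule $\llimp$ alone suffices, the extra cut you invoke being admissible but unnecessary; and your preservation-of-polarity remark about contexts is not actually used, since your induction is on the derivation structure rather than on contexts.
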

\begin{proof}
  Given a proof $\pi$ in $\MiLL$, we can construct by induction a proof in $\BiV$ of the formula interpretation of its conclusion using \Cref{lemma:IMLLtoIBV}.

  Conversely, assume we have proof $\dDp_A$ in $\BiV$ of a formula $A$ in $\BiV$.
  We construct a proof $\pi_A$ in $\MiLL$ with the same conclusion by induction on the number of rules in $\dDp_A$.
  We consider as base case $\aidr$, whose conclusion derivable in $\MiLL$ using $\axrule$ and~$\rlimp$.
  For the inductive case, consider a bottommost rule instance $\rrule$ in $\dDp_A$, as shown on the left below. The desired cut-free derivation in $\MiLL$ is obtained by applying cut-elimination (\Cref{thm:cutElimIMLL}) to the derivation in the middle below:
  \begin{equation}\label{eq:de-deeping}
    \dDp_A=
    \od{
      \odp{}{\cC[\odn{B'}{\rrule}{B}{}]}{\BiV}
    }
  \;\rightsquigarrow\;
    \vlderivation{
      \vliin{\cutr}{}{\vdash \cC[B]}{
        \vlhtrl{\pi_1}{}{}{\cC[B']}{~~} 
      }{
        \vlhtrl{\pi_2}{}{}{\cC[B']\vdash \cC[B]}{~~} 
      }
    }
  \;\xRightarrow{\text{cut-elim.}}\;
    \vlderivation{\vlhtrl{\pi_A}{}{}{\vdash \cC[B]}{~~}}
  \end{equation}
  where $\cC$ is positive if $\rrule$ is a $\wsymb$-rule and negative if $\rrule$ is a $\bsymb$-rule.
  Then $\pi_1$ exists by induction hypothesis and the existence of $\pi_2$ is guaranteed by \Cref{lemma:IBVtoIMLL}.
  \qed
\end{proof}

\section{\boldmath From $\IBV$ to $\BV$}\label{sec:IBVtoBV}

\begin{figure}[!t]
  $$
  \vlinf{\waidr}{\wrule}{a\limp a}{}
  \quad
  \vlinf{\waidr}{\wrule}{(a\limp a)\ltens B}{B}
  \quad
  \vlinf{\baidr}{\brule}{(a\limp a)\limp B}{B}
  \quad
  \vlinf{\waidr}{\wrule}{(a\limp a)\lseq B}{B}
  \quad
  \vlinf{\waidr}{\wrule}{B\lseq(a\limp a)}{B}
  $$
  \caption{$\aidr$-rules for $\IBVm$.}
  \label{fig:IBVm}
\end{figure}

One might expect that also $\BV$ is a conservative extension of $\IBV$, in the same sense as $\MLL$ is a conservative extension of $\MiLL$. 
Unfortunately, this is not true because of the collapse of the units in $\BV$.
However, if we look at the unit-free versions of the two systems, we can get our expected result. 
The unit-free version of $\IBV$, denoted by \defn{$\IBVm$}, is obtained from $\IBV$ (see \Cref{fig:IBV}) by removing all versions of $\udr$ and by replacing the $\aidr$ by the rules shown in \Cref{fig:IBVm}.

Similarly, the \defn{$\BVm$-formulas} are the $\BV$ formulas without the unit, i.e., 
they are generated from the countable set $\atomset$ of atoms and the binary connectives par ($\lpar$), tensor ($\ltens$), and seq ($\lseq$), using the following grammar:
\begin{equation}\label{eq:BVformulas} 
  A,B \coloneq a \mid \cneg a \mid   A\ltens B\mid A\lpar B\mid A\lseq B
  \qquad\qquad
  a \in \atomset
\end{equation}
We can define the negation $\cneg{(\cdot)}$ for all $\BVm$-formulas via DeMorgan duality:
$$
\cnegneg a = a \quad
\cneg{(A\ltens B)}=\cneg A\lpar \cneg B
\quad \cneg{(A\lpar B)}=\cneg A\ltens \cneg B
\quad \cneg{(A\lseq B)}=\cneg A\lseq \cneg B
$$
The inference rules for \defn{system $\BVm$} are shown in \Cref{fig:BVm}. It is the unit-free version of $\BV$ described in~\cite{kahr:04:iscis,tito:lutz:csl22,tito:str:SIS-III}, with the mix-rule removed. The rule $\feq$ is defined by the equations capturing associativity of the connectives $\ltens$, $\lseq$, and $\lpar$, and 
commutativity of $\ltens$ and $\lpar$.

\begin{figure}[t]
  \adjustbox{max width=\textwidth}{$
  \begin{array}{c}
      \nodt{A}{\feq}{B}{}
    \qquad
      \nodn{}{\airule_\downarrow^{\emptyset}}{a\lpar\cneg a}{}
    \qquad
      \nodn{B}{\airule_\downarrow^{\lseq}}{(a\lpar\cneg a)\lseq B}{}
    \qquad 
      \nodn{B}{\airule_\downarrow^{\lcoseq}}{B\lseq(a\lpar\cneg a)}{}
    \qquad
      \nodn{B}{\airule_\downarrow^{\ltens}}{(a\lpar\cneg a)\ltens B}{}
    \qquad
      \nodn{A\ltens B}{\mixr}{A\lpar B}{}
    \\ \\[-.5ex]
      \nodn{A\ltens(B\lpar C)}{\swir}{(A\ltens B) \lpar C}{}
    \qquad
      \nodn{(A\lpar B)\lseq C}{\squr}{A\lpar (B\lseq C)}{}
    \qquad
      \nodn{A\lseq (B\lpar C)}{\squl}{(A\lseq C)\lpar B}{}
    \qquad
      \nodn{(A\lpar B)\lseq (C\lpar D)}{\qdr}{(A\lseq C)\lpar(B\lseq D)}{}
    \qquad
      \nodn{A\lseq B}{\refrule}{A\lpar B}{}
  \\\\\hdashline\\
    \begin{array}{c@{\qquad}c@{\qquad}c}
      A\ltens (B\ltens C) \feq (A\ltens B)\ltens C
      &
      A\ltens B \feq B\ltens A
      \\[-1ex]
      &&
      A\lseq (B\lseq C) \feq (A\lseq B)\lseq C 
      \\[-1ex]
      A\lpar (B\lpar C) \feq (A\lpar B)\lpar C
      &
      A\lpar B \feq B\lpar A 
    \end{array}
  \end{array}
  $}
  \caption{
    Rules for the system $\BVm$. The side condition for the $\feq$-rule is that $A$ and $B$ are equivalent modulo the equivalence relation generated by the relation $\feq$, defined below the dashed line.
  }
  \label{fig:BVm}
\end{figure}

We define an embedding $\emb{(\cdot)}$ from unit-free formulas into $\BVm$-formulas as follows:
\begin{equation}\label{eq:IBVemb}
  \emb{a}=a
  \quad
  \emb{(A \ltens B)}=\emb A\ltens\emb B
  \quad
  \emb{(A \limp B)}={(\emb A)\lneg}\lpar\emb B
  \quad
  \emb{(A \lseq B)}=\emb A\lseq\emb B
\end{equation}
This allows us to state the main theorem of this section.

\begin{restatable}{theorem}{thmBVm}\label{thm:BVm}
  Let $A$ be a unit-free formula. 
  We have~ $\proves[\IBVm]A$ ~iff~ $\proves[\BVm]{\emb A}$~.
\end{restatable}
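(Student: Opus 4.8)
The plan is to prove the two directions of the biconditional separately, using the compositional embedding $\emb{(\cdot)}$ of \eqref{eq:IBVemb} as a dictionary between the polarised discipline of $\IBVm$ and the one-sided classical system $\BVm$. For the forward direction, $\proves[\IBVm]A\Rightarrow\proves[\BVm]\emb A$, I would argue by induction on an $\IBVm$-proof, translating it rule by rule in the spirit of \Cref{lemma:IBVtoIMLL}(i). The core is a translation lemma: for every positive rule with premise $X$ and conclusion $Y$ there is a $\BVm$-derivation from $\emb X$ to $\emb Y$, and for every negative rule with premise $X$ and conclusion $Y$ there is a $\BVm$-derivation from $\emb Y$ to $\emb X$; the swapped polarity matches the contravariance of $\emb{(\cdot)}$ on the left of $\limp$ (recall $\emb{(A\limp B)}=\cneg{\emb A}\lpar\emb B$). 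Since $\BVm$ is a deep-inference system, these local derivations may be plugged into arbitrary contexts, and because $\emb{(\cdot)}$ commutes with the connectives, a positive (resp.\ negative) $\IBVm$-context embeds to a $\BVm$-context whose hole sits in a non-negated (resp.\ negated) position; the translations therefore compose into a $\BVm$-proof of $\emb A$. Most rules are immediate: the switch rules $\wlsr,\wtsr,\blsr,\btsr$ map to $\swir$, the $\sqir$- and $\refrule$-rules to $\squl,\squr,\refrule$, the seq-rules $\wqdr,\bqdr$ to $\qdr$, and the atomic $\aidr$-rules of \Cref{fig:IBVm} to the atomic rules of \Cref{fig:BVm}. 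The only genuinely interesting step is the reflexivity rule $\wrr$ (and dually $\brr$), whose embedding demands a $\BVm$-derivation turning a $\ltens$ into a $\lseq$ in a non-negated position; this is available precisely because $\qdr$ together with $\refrule$ already let $\BVm$ pass from a tensor to a seq.

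For the backward direction, $\proves[\BVm]\emb A\Rightarrow\proves[\IBVm]A$, I would proceed by induction on the formula $A$, using the splitting lemmas for $\BVm$ (available for unit-free $\BV$). Since $\emb{(\cdot)}$ is compositional, the principal connective of $\emb A$ is determined by that of $A$: an atom stays an atom, $\ltens$ and $\lseq$ are preserved, and $A=B\limp C$ gives $\emb A=\cneg{\emb B}\lpar\emb C$. In the $\ltens$- and $\lseq$-cases one applies $\BVm$-splitting to the proof of $\emb A$ to obtain proofs of $\emb B$ and $\emb C$, invokes the induction hypothesis to get $\IBVm$-proofs of $B$ and $C$, and recombines them with the corresponding $\IBVm$-constructors; atoms are dispatched by the analogue of atomic splitting (\Cref{lemma:atomicSplitting}). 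The $\limp$-case is handled by splitting the top $\lpar$ of $\cneg{\emb B}\lpar\emb C$ and reading the two sides back as a hypothesis $B$ and a conclusion $C$, reconstructing the $\limp$-introduction of $\IBVm$.

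The main obstacle is exactly this $\limp$/$\lpar$ case. Splitting a classical $\lpar$ in $\BVm$ carries no intrinsic orientation, so one must show that a proof of $\cneg{\emb B}\lpar\emb C$ genuinely respects the intuitionistic polarity, i.e.\ that the negated part $\cneg{\emb B}$ behaves as a \emph{single} negative hypothesis that can be pulled back to an $\IBVm$-derivation from $B$ to $C$. This is where the gap between $\BVm$ and $\IBVm$ lives, and closing it requires careful bookkeeping of the polarity of every occurrence, via a $\BVm$-analogue of context reduction (\Cref{lemma:contextReduction}) guaranteeing that each decomposition stays within the image of $\emb{(\cdot)}$. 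One must also be attentive to the mix-style reasoning made available by $\qdr$ and $\refrule$, so that the pull-back tracks polarities of atom occurrences directly rather than appealing to any connectedness or mix-freeness property of $\BVm$-proofs.
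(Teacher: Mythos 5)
Your forward direction is sound and is essentially the paper's: apply $\emb{(\cdot)}$ to every formula of the $\IBVm$ derivation and check that each positive (resp.\ negative) rule instance becomes a valid $\BVm$ rule instance, the polarity swap on negative rules being absorbed by the De~Morgan negation that $\emb{(\cdot)}$ introduces on the left of $\limp$. The genuine gap is in the backward direction, and it sits exactly at the case you flag as ``the main obstacle''. Your induction on $A$ cannot get past $A=B\limp C$: there $\emb A=\cneg{(\emb B)}\lpar\emb C$ has $\lpar$ at the root, and splitting in unit-free $\BV$ analyses a $\ltens$ or a $\lseq$ (or an atom) against a par-context --- it says nothing about a root-level $\lpar$. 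In particular, a $\BVm$ proof of $\cneg{(\emb B)}\lpar\emb C$ does \emph{not} decompose into proofs of the two disjuncts ($\cneg{(\emb B)}$ alone is in general unprovable), so the inductive hypothesis, which speaks of provability of $\emb B$ and of $\emb C$, is simply not applicable. What you actually need at this case --- that every $\BVm$ proof of $\cneg{(\emb B)}\lpar\emb C$ can be read back as an $\IBVm$ proof of $B\limp C$ --- is an instance of the theorem itself, not a statement about smaller formulas, so the induction does not close. Generalising the statement to sequent-shaped formulas $\cneg{(\emb{B_1})}\lpar\cdots\lpar\cneg{(\emb{B_n})}\lpar\emb C$ is possible, but then splitting no longer drives the recursion, and the real content --- that $\BVm$ proofs of such formulas never leave the image of $\emb{(\cdot)}$ --- still has to be established by other means; your appeal to ``a $\BVm$-analogue of context reduction'' names this missing argument rather than supplying it.

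The idea you are missing is a \emph{global} invariant on $\BVm$ derivations, which is how the paper proves the converse without any splitting, context reduction, or induction on formulas. Using the positive/negative/unpolarisable classification of \eqref{eq:polarity}, one checks rule by rule that every rule of $\BVm$ either preserves positivity of the \emph{whole} formula or breaks polarisability, and that no rule can turn an unpolarisable formula back into a positive one; since $\emb A$ is positive (\Cref{lem:emb}) and proofs start from a positive atomic interaction, every formula occurring in a $\BVm$ proof of $\emb A$ must be positive. Then, enumerating all positivity-preserving instances of the rules of \Cref{fig:BVm} yields exactly the embeddings of the rules of $\IBVm$, so the entire $\BVm$ proof pulls back along $\emb{(\cdot)}$ to an $\IBVm$ proof of $A$. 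Your closing remark about tracking ``polarities of atom occurrences'' gestures in this direction, but the invariant has to be stated at the level of whole formulas, where it becomes a short, checkable case analysis. (The paper's appendix also gives a second proof that avoids the issue differently, routing through the cut-free sequent calculus for $\NMLm$ extended with the associativity rules $\sassr,\sassl$, where the analogous polarity argument is \Cref{thm:NMLextensNiML}.)
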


For proving it, we define two subsets of $\BVm$-formulas, called \defn{positive formulas} and \defn{negative formulas}: 
\begin{equation}
  \label{eq:polarity}
  \hskip-1em
  \begin{array}{l@{\quad}c@{\;\coloneq\;}c@{\,\mid\,}c@{\,\mid\,}c@{\,\mid\,}c}
    \mbox{positive:}& \wA, \wB  & a & \wA\ltens \wB & \bA \lpar \wB & \wA\lseq \wB 
  \\[1ex]
    \mbox{negative:}& \bA, \bB & \cneg a & \isw A \ltens \isb B & \bA\lpar \bB & \bA\lseq \bB
  \end{array}
\end{equation}
with $a\in\atomset$.
Clearly, a $\BVm$-formula can either be positive (e.g., $a\ltens a$) or negative (e.g., $\cneg a\lpar \cneg a$) or neither (e.g., $\cneg a\ltens\cneg a$), but never both. We call a $\BVm$-formula that is neither positive nor negative \defn{unpolarisable}.

\begin{lemma}\label{lem:emb}
  A $\BVm$-formula $A$ is positive iff it is in the image of $\emb{(\cdot)}$.
\end{lemma}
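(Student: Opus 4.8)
The plan is to prove both directions of the biconditional by structural induction on the $\BVm$-formula $A$. Recall the definitions: positive formulas are built by $\wA, \wB \coloneq a \mid \wA\ltens \wB \mid \bA \lpar \wB \mid \wA\lseq \wB$, and the embedding sends $\limp$ to a par whose left argument is negated, so that $\emb{(A\limp B)} = (\emb A)\lneg\lpar \emb B$. The key observation tying these together is that negation swaps positivity and negativity: I first establish the auxiliary claim that a $\BVm$-formula $C$ is negative if and only if $\cneg C$ is positive (equivalently, $C$ is positive iff $\cneg C$ is negative). This follows by a routine induction using the DeMorgan clauses, matching each clause of \eqref{eq:polarity} against its dual, and it is exactly what is needed to handle the $\lpar$ clause of the positive grammar, since $\bA\lpar\wB$ requires the left conjunct to be negative.

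For the forward direction ($A$ positive $\Rightarrow$ $A$ in the image of $\emb{(\cdot)}$), I would proceed by induction on the derivation of $A$ as a positive formula. In the atom case $A=a$ we have $\emb a = a$. In the $\ltens$ case $A=\wA\ltens\wB$, both conjuncts are positive, so by induction each equals $\emb{A'}$, $\emb{B'}$ for unit-free formulas $A',B'$, whence $A = \emb{(A'\ltens B')}$. The $\lseq$ case is identical. The interesting case is $A = \bA\lpar\wB$: here $\bA$ is negative, so by the auxiliary claim $\cneg{\bA}$ is positive, and by induction $\cneg{\bA} = \emb{A'}$ for some unit-free $A'$; then $\bA = \cnegneg{\bA} = (\emb{A'})\lneg$, and with $\wB = \emb{B'}$ by induction, we get $A = (\emb{A'})\lneg\lpar\emb{B'} = \emb{(A'\limp B')}$, which is in the image.

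For the backward direction (every formula in the image is positive), I would prove by induction on the unit-free formula $B$ that $\emb B$ is positive, simultaneously proving that $(\emb B)\lneg = \cneg{(\emb B)}$ is negative (again using the auxiliary claim, or folding it into a strengthened induction hypothesis). The atom, $\ltens$, and $\lseq$ cases are immediate from the corresponding grammar clauses for positive formulas applied to the inductive hypotheses. For $B = B_1\limp B_2$ we have $\emb B = (\emb{B_1})\lneg\lpar\emb{B_2}$; by the induction hypothesis $\emb{B_1}$ is positive, so $(\emb{B_1})\lneg$ is negative by the auxiliary claim, and $\emb{B_2}$ is positive, matching exactly the $\bA\lpar\wB$ clause of \eqref{eq:polarity}, so $\emb B$ is positive.

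I expect the main subtlety to lie not in any single case but in correctly threading the negation/polarity-swap lemma through the $\limp$ clause, since this is the only place where the grammar's asymmetry between the left and right arguments of $\lpar$ interacts with the embedding's negation on the left of $\limp$. The cleanest way to avoid circularity is to prove the auxiliary polarity-swap claim as a standalone preliminary induction, and then use it freely in both directions of the main induction; alternatively one can strengthen the induction hypothesis of the backward direction to assert both ``$\emb B$ is positive'' and ``$\cneg{(\emb B)}$ is negative'' at once. The remaining cases are entirely mechanical matches of grammar clauses to embedding clauses.
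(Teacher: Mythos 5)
Your proposal is correct and is essentially the paper's proof: the paper's one-line argument is a single structural induction on $A$ that simultaneously proves ``$A$ is negative iff $\cneg A$ is in the image of $\emb{(\cdot)}$'', which is precisely the strengthened-induction-hypothesis packaging you yourself suggest as an alternative. The only caveat with your preferred factorization (standalone polarity-swap lemma, then the main induction) is that in the case $A=\bA\lpar\wB$ you apply the induction hypothesis to $\cneg{\bA}$, which is not a subformula of $A$, so the induction must be taken on formula size (which DeMorgan negation preserves) rather than literally on the derivation of positivity as you state it.
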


\begin{proof}
  By induction on $A$, proving at the same time that $A$ is negative iff $\cneg A$ is in the image of $\emb{(\cdot)}$.
  \qed
\end{proof}

\begin{proof}[for Theorem~\ref{thm:BVm}]
  Let us first assume we have a derivation of $A$ in $\IBVm$. We can apply the mapping $\emb{(\cdot)}$ to every formula in that derivation and obtain a valid $\BVm$ derivation. For the converse, it suffices to observe that all rules of $\BVm$ either preserve the positive polarity of the whole formula (every proof starts with a positive atomic interaction), or break it. And no inference rule of $\BVm$ can transform an unpolarisable formula back into one with positive polarity. Furthermore, enumerating all possible polarity-preserving assignments to the rules in \Cref{fig:BVm} yields the rules of $\IBVm$.\footnote{This is, in fact, the method by which we found the inference rules for $\IBV$.}
  Now the result follows from Lemma~\ref{lem:emb}. (See \Cref{app:IBVtoBV} for more details)
  \qed
\end{proof}

\section{Associative or not associative, that is the question}
\label{sec:assoc}
\def\Nform{$\NiML$-formula\xspace}
\def\Nforms{$\NiML$-formulas\xspace}
\def\Nseq{$\NiML$-sequent\xspace}
\def\Nseqs{$\NiML$-sequents\xspace}
\def\lprec{\lseq}
\def\preseq#1{[\![#1]\!]^\lprec_{\lseq}}

\begin{figure}[!t]
  $$
  \begin{array}{c}
      \vlinf{\axrule}{}{\vdash a,\cneg a}{}
  \quad
      \vlinf{\lpar}{}{\vdash \Gamma, A\lpar B}{\vdash \Gamma, A, B}
  \quad
      \vliinf{\ltens}{}{\vdash  \Gamma, \Delta, A\ltens B}{\vdash \Gamma,A}{\vdash \Delta, B}
  \quad
      \vliinf{\lprec}{n\geq 0}{
        \vdash \Gamma, \Delta, A_1\lprec B_1, \ldots, A_n\lprec B_n
      }{
        \vdash \Gamma , A_1, \ldots,A_n
      }{
        \vdash \Delta, B_1, \ldots,B_n
      }
  \end{array}
  $$
  \caption{Sequent calculus for unit-free $\NMLm$.}
  \label{fig:NML}
\end{figure}

The argument of Tiu~\cite{tiu:SIS-II} that there is no shallow proof system for $\BV$ makes crucial use of the associativity of~$\lseq$. 
This inspired~\cite{acc:man:PN} to design the logic $\NML$, which is a conservative extension of $\MLL$ with a non-commutative and \emph{non-associative} self-dual connective, and which has a cut-free sequent calculus.

In this section, we investigate the intuitionistic version of that logic, that we call \defn{$\INML$}, and which is obtained from the sequent calculus of $\IMLL$ (shown in Figure~\ref{fig:IMLL}) extended by the rule~$\lseq$ shown below: 
\begin{equation}\label{eq:precRule}
  \vliinf{\lprec}{n\geq 0}{\Gamma,\Delta, A_1\lprec B_1, \ldots, A_n\lprec B_n \vdash A\lprec B}{
    \Gamma,A_1, \ldots, A_n\vdash A
  }{
    \Delta, B_1, \ldots, B_n \vdash B
  }
\end{equation}
\begin{theorem}
  The rule $\AXrule$ is derivable in $\NiML$.
\end{theorem}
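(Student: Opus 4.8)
The plan is to prove the statement by structural induction on the formula $A$, generalising the analogous result for $\MiLL$ (which is the same statement with the $\lseq$-case absent): I show that for every formula $A$, built according to the grammar \eqref{eq:formulas}, the generalised axiom $A\vdash A$ is derivable in $\NiML$, treating one case per syntactic constructor.

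For the base cases: when $A=a$ is an atom, $a\vdash a$ is an instance of $\axrule$; when $A=\lunit$, I first apply $\lunitr$ to obtain $\vdash\lunit$ and then $\lunitl$ to obtain $\lunit\vdash\lunit$. The multiplicative cases are the standard $\MiLL$ constructions and are handled exactly as in the $\MiLL$ version of the theorem. For $A=B\ltens C$, I combine the induction hypotheses $B\vdash B$ and $C\vdash C$ by $\rltens$ into $B,C\vdash B\ltens C$, and then collapse the antecedent with $\lltens$ to get $B\ltens C\vdash B\ltens C$. For $A=B\limp C$, I apply $\llimp$ (with principal formula $B\limp C$, empty side-contexts) to the induction hypotheses $B\vdash B$ and $C\vdash C$, obtaining $B,\,B\limp C\vdash C$, and then $\rlimp$ to move $B$ back to the right, yielding $B\limp C\vdash B\limp C$.

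The only genuinely new case is $A=B\lseq C$, which is where the rule $\lprec$ from \eqref{eq:precRule} enters. Here I instantiate that rule with $n=1$ and empty contexts $\Gamma=\Delta=\emptyset$, taking the single $\lseq$-component to be $B$ on the left and $C$ on the right. Feeding the induction hypotheses $B\vdash B$ and $C\vdash C$ as the two premises then produces the conclusion $B\lseq C\vdash B\lseq C$ in a single step.

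I do not expect a real obstacle: the induction is entirely routine, and each constructor is discharged by a single pair (or, for $\lseq$, a single instance) of inference rules. The only thing to keep track of is the exact bookkeeping of side-contexts in the $\llimp$- and $\lprec$-rules, so that the antecedent formulas are consumed in the right places; once the $n=1$ instantiation of $\lprec$ is identified, the seq-case is as immediate as the others, and the whole argument is a direct extension of the corresponding theorem for $\MiLL$.
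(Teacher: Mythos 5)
Your proof is correct: the paper states this theorem without proof (treating it as routine), and your structural induction is exactly the intended argument, with the new $\lseq$-case discharged by the $n=1$, empty-context instance of the rule in \eqref{eq:precRule} and the remaining cases inherited from the standard $\MiLL$ derivations. The only nitpick is that in the $\limp$-case the left side-context of $\llimp$ is $\{B\}$ rather than empty, but the sequent you write down, $B,\,B\limp C\vdash C$, is the correct instance, so nothing is affected.
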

\begin{theorem}
  The rule $\cutr$ is admissible in $\NiML$.
\end{theorem}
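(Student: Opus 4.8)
The plan is to prove cut admissibility by the standard double induction, adapted to the new $\lseq$-rule from \eqref{eq:precRule}. Concretely, I would show that a single topmost instance of $\cutr$ with cut-free premises can be eliminated, proceeding by induction on the lexicographically ordered pair $(\sizeof{A},\,h_1+h_2)$, where $A$ is the cut formula and $h_1,h_2$ are the heights of the derivations of the two premises $\Gamma\vdash A$ and $A,\Delta\vdash C$. Since $\NiML$ is $\MiLL$ extended by a single rule, every case not involving the $\lseq$-rule is exactly one of those treated in \Cref{thm:cutElimIMLL}; so I would only spell out the cases in which the $\lseq$-rule is the last rule applied in one of the two premises, together with the base cases where one premise is $\axrule$ or $\lunitr$, which are immediate.

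For the commutative cases, observe first that whenever the left premise $\Gamma\vdash A$ ends in the $\lseq$-rule, the cut formula $A$ is necessarily the unique succedent formula $R\lseq S$ introduced by that rule, hence $A$ is principal; thus on the left the $\lseq$-rule never produces a genuine commutative case. On the right, if $A,\Delta\vdash C$ ends in the $\lseq$-rule and $A$ is not one of the antecedent seq-formulas introduced there, then $A$ lies in exactly one of the two context blocks of that rule, since the contexts split multiplicatively. I would permute the cut into the corresponding subpremise, strictly decreasing $h_2$ and invoking the inner induction hypothesis.

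The principal case is the heart of the argument. Suppose $A = P\lseq Q$, introduced in the succedent of the left premise and in the antecedent of the right premise, both by the $\lseq$-rule. The left premise comes from $\pi_P$ of $\Gamma',A_1,\ldots,A_n\vdash P$ and $\pi_Q$ of $\Delta',B_1,\ldots,B_n\vdash Q$; the right premise is the conclusion of a $\lseq$-rule with succedent $R\lseq S$, among whose antecedent seq-formulas is $P\lseq Q=C_j\lseq D_j$, so it arises from a subpremise $\rho_P$ containing $P$ and concluding $R$ and a subpremise $\rho_Q$ containing $Q$ and concluding $S$. I would then cut $\pi_P$ against $\rho_P$ and $\pi_Q$ against $\rho_Q$: these are two new cuts on $P$ and on $Q$, which are strict subformulas of $P\lseq Q$, so the outer induction hypothesis eliminates both. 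Reassembling the two resulting derivations with a single $\lseq$-rule, pairing each $A_i$ with $B_i$, each remaining $C_i$ with $D_i$, and keeping $\Gamma',\Delta'$ together with the leftover contexts as the split free context, yields a cut-free derivation whose end-sequent is precisely $\Gamma,\Delta\vdash R\lseq S$, as required.

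I expect the main obstacle to be the bookkeeping in this principal $\lseq$-case: verifying that after the two subformula-cuts the antecedent formulas can be repartitioned so that one application of the $\lseq$-rule, with its $n+m-1$ antecedent seq-formulas and its split context, reconstructs exactly $\Gamma,\Delta\vdash R\lseq S$, and checking that this works uniformly, including the degenerate case $n=0$. A secondary point is to confirm that the multiplicative context splitting of the $\lseq$-rule makes each commutative permutation unambiguous, so that the height measure genuinely decreases and the induction is well founded.
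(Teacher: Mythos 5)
Your proposal is correct and follows essentially the same route as the paper: the paper's proof also reuses the $\MiLL$ cut-elimination argument verbatim and adds exactly one new reduction, namely the principal $\lseq$-versus-$\lseq$ case, where the cut on $P\lseq Q$ is replaced by two cuts on the subformulas $P$ and $Q$ and the results are reassembled by a single $\lseq$-rule whose seq-pairs are the union of those of the two premises. Your additional bookkeeping (the lexicographic measure, the observation that the $\lseq$-rule is always principal on the left, and the right-commutative permutations justified by the multiplicative context split) is precisely what the paper leaves implicit, and it checks out.
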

\begin{proof}
  The proof is the same as for $\IMLL$, with one cut-elimination step added for the $\lseq$-rule:
  $$\adjustbox{max width=\textwidth}{$
    \vlderivation{
      \vliin{\cutr}{}{
        \Gamma,\Delta \vdash C\lprec D
      }{
        \vliin{\lprec}{n}{\Gamma\vdash A\lprec B}{
          \vlhy{\Gamma_1\vdash A}
        }{
          \vlhy{\Gamma_2\vdash B}
        }
      }{
        \vliin{\lprec}{m+1}{A\lprec B, \Delta \vdash C\lprec D}{
          \vlhy{A,\Delta_1 \vdash C}
        }{
          \vlhy{B,\Delta_2 \vdash D}
        }
      }
    }
  \;\rightsquigarrow\;
    \vlderivation{
      \vliin{\lprec}{n+m}{
        \Gamma,\Delta \vdash C\lprec D
      }{
        \vliin{\cutr}{}{\Gamma_1,\Delta_1\vdash C}{
          \vlhy{\Gamma_1\vdash A}
        }{
          \vlhy{A,\Delta_1 \vdash C}
        }
      }{
        \vliin{\cutr}{}{\Gamma_2, \Delta_2 \vdash D}{
          \vlhy{\Gamma_2\vdash B}
        }{
          \vlhy{B,\Delta_2 \vdash D}
        }
      }
    }
    $}$$
  where $\Gamma$ (resp.~$\Delta$) can contain formulas of the shape $E\lseq F$, which are decomposed in $\Gamma_1$ and $\Gamma_2$ (resp.~$\Delta_1$ and $\Delta_2$).
  \qed
\end{proof}
The logic $\INML$ is indeed the same logic as we would obtain by removing associativity of $\lseq$ from $\IBV$.

\begin{restatable}{theorem}{BiVextensNiML}\label{thm:BiVextensNiML}
  Let $A$ be a formula.
  Then ~$\proves[\NiML]A$ ~iff~ $\proves[\IBV\setminus\set{\sassr,\sassl}]A$~.
\end{restatable}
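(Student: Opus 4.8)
The plan is to establish both inclusions by the same method used for \Cref{thm:BiVextensMiLL}, namely the two translations of \Cref{lemma:IMLLtoIBV,lemma:IBVtoIMLL} together with the formula interpretation of sequents, now extended to cover $\lseq$; the whole difficulty lies in the $\lseq$-cases and in checking that no required derivation uses $\sassr$ or $\sassl$. Throughout I use the same formula interpretation $(B_1\ltens\cdots\ltens B_n)\limp A$ of a sequent $B_1,\dots,B_n\vdash A$ (the $B_i$ and $A$ may now contain $\lseq$), which is well defined up to $\tcomr$ and $\tassr$ exactly as before, and I use that $\cutr$ is admissible in $\NiML$ (proved above in this section).

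For the direction $\proves[\NiML]A\Rightarrow\proves[\IBV\setminus\set{\sassr,\sassl}]A$, I add to \Cref{lemma:IMLLtoIBV} a single case for the rule $\lseq$ of \eqref{eq:precRule}. Writing $\vec A=A_1\ltens\cdots\ltens A_n$ and $\vec B=B_1\ltens\cdots\ltens B_n$, I am given $\IBV\setminus\set{\sassr,\sassl}$-proofs of the premise interpretations $(\Gamma\ltens\vec A)\limp A$ and $(\Delta\ltens\vec B)\limp B$, and must build a proof of the conclusion interpretation $(\Gamma\ltens\Delta\ltens(A_1\lseq B_1)\ltens\cdots\ltens(A_n\lseq B_n))\limp(A\lseq B)$. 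First I compose the two premise proofs with $\lseq$ (prefixing $\wsudr$ to turn $\lunit$ into $\lunit\lseq\lunit$) to get a proof of $((\Gamma\ltens\vec A)\limp A)\lseq((\Delta\ltens\vec B)\limp B)$, then apply $\wqdr$ to obtain a proof of $((\Gamma\ltens\vec A)\lseq(\Delta\ltens\vec B))\limp(A\lseq B)$. The antecedent now sits in a negative position, so I may act on it with the $\bullet$-rules $\brsq,\blsq,\bqdr$ directly: $\brsq$ pulls $\Gamma$ out of the left component of the seq, $\blsq$ pulls $\Delta$ out of the right component, and iterated $\bqdr$ (with $\tassr$) splits $\vec A\lseq\vec B$ into $(A_1\lseq B_1)\ltens\cdots\ltens(A_n\lseq B_n)$. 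This $\bullet$-derivation rewrites the antecedent into exactly the conclusion interpretation, and none of $\brsq,\blsq,\bqdr,\wqdr,\wsudr,\tassr$ is $\sassr$ or $\sassl$. The base case $\axrule$ is discharged by $\waidr$, and every $\MiLL$-rule is inherited verbatim from \Cref{lemma:IMLLtoIBV}, whose derivations treat subformulas opaquely and only involve the $\ltens/\limp$-machinery.

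For the direction $\proves[\IBV\setminus\set{\sassr,\sassl}]A\Rightarrow\proves[\NiML]A$, I prove the $\NiML$-analogue of \Cref{lemma:IBVtoIMLL}: for each rule $\nodn{A'}{\rrule}{B'}{\wrule}$ of $\IBV\setminus\set{\sassr,\sassl}$ the sequent $A'\vdash B'$ is derivable in $\NiML$ (and $B'\vdash A'$ for the $\bullet$-rules), and this is closed under arbitrary contexts, now including $\lseq$. The $\ltens/\limp$-rules are handled as in \Cref{lemma:IBVtoIMLL}; each seq-rule is a one-line application of the $\lseq$-rule \eqref{eq:precRule} over axioms, using $n\in\set{0,1,2}$: e.g.\ $\wrr$ and $\brr$ both reduce to $A\ltens B\vdash A\lseq B$ (rule $\lseq$ with $n=0$, then $\lltens$), $\wsudr$ and $\wcudr$ use a $\lunitr$-premise, $\wqdr$ follows from the two premises $A\limp B,A\vdash B$ and $C\limp D,C\vdash D$ by $\lseq$ with $n=2$ and $\rlimp$, and analogously for $\bqdr,\wrsq,\wlsq,\brsq,\blsq$; the context-closure cases for $\cP\lseq C$, $C\lseq\cP$ and their negative duals use $\lseq$ with $n=1$ and an axiom on the passive component. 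Because $\sassr$ and $\sassl$ have been deleted, they are never translated. With this lemma, the de-deepening argument of \eqref{eq:de-deeping} carries over unchanged: pick the bottom-most rule instance, cut against the sequent the lemma provides, and eliminate the cut using cut-admissibility for $\NiML$; the base case $\waidr$ gives $\vdash a\limp a$ from $\axrule$ and $\rlimp$.

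The step I expect to be the crux is the negative rewriting of the antecedent in the first direction: one must route the passive contexts $\Gamma,\Delta$ into the single top-level $\lseq$ produced by $\wqdr$ using only $\bqdr$ (the degenerate linear functor implication \eqref{eq:degfun} in its negative reading), $\brsq$, $\blsq$ and $\ltens$-reshuffling, so that no already-formed nested seq is ever reassociated. This is precisely where the non-associativity of $\lseq$ in $\NiML$ is matched on the $\IBV$ side, and it is also the reason the statement is about $\IBV\setminus\set{\sassr,\sassl}$ rather than $\IBV$: the rules $\sassr,\sassl$ are not derivable in $\NiML$, so retaining them would destroy the right-to-left inclusion.
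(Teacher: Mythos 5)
Your proposal is correct and follows essentially the same route as the paper: both directions extend \Cref{lemma:IMLLtoIBV} (with a case for the rule $\lseq$ of \eqref{eq:precRule}, built from the two premise proofs joined by $\lseq$ under a $\wsudr$ prefix and then $\wqdr$ plus negative rewriting of the antecedent) and \Cref{lemma:IBVtoIMLL} (with $\NiML$-sequent derivations for the $\refrule$-, $\sqir$-, and $\qdr$-rules and seq-context closure), concluding via the de-deeping cut argument of \eqref{eq:de-deeping} and cut admissibility in $\NiML$. The only cosmetic difference is how $\Gamma$ and $\Delta$ are routed in the left-to-right direction: you pull them out of the antecedent with $\brsq$ and $\blsq$, whereas the paper curries them and uses a second $\wqdr$ followed by $\brr$ --- both avoid $\sassr$ and $\sassl$ equally well.
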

\begin{proof}
  Most of the work has already been done in Lemmas~\ref{lemma:IMLLtoIBV} and~\ref{lemma:IBVtoIMLL}. It remains to show that (i) Lemma~\ref{lemma:IMLLtoIBV} also applies to the rule $\lseq$ in~\eqref{eq:precRule}, and that $\sassr$, $\sassl$ are not needed for this, and (ii) that Lemma~\ref{lemma:IBVtoIMLL} also applies to the $\qdr$-, $\sqir$-, and $\refrule$-rules, when we use $\INML$ instead of $\IMLL$. 
  Details in \Cref{app:assoc}.
  \qed
\end{proof}

Unsurprisingly, we have for $\INML$ the same conservativity results as we proved for $\IBV$ in the previous two sections.

\begin{theorem}
  $\NiML$ is a conservative extension of $\MiLL$.
\end{theorem}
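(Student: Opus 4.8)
The plan is to obtain this as a corollary of the two conservativity results already in hand, \Cref{thm:BiVextensNiML} and \Cref{thm:BiVextensMiLL}. First I would pin down the relevant class of formulas: an $\MiLL$-formula is precisely a $\lseq$-free formula built from atoms, $\lunit$, $\ltens$, and $\limp$, and these are exactly the formulas common to the three systems $\MiLL$, $\NiML$, and $\IBV$.

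The easy inclusion is immediate. Since $\NiML$ is the sequent calculus of $\MiLL$ (\Cref{fig:IMLL}) extended only by the rule $\lseq$ of~\eqref{eq:precRule}, every $\MiLL$-proof is already an $\NiML$-proof, so $\proves[\MiLL]A$ implies $\proves[\NiML]A$ for every formula $A$.

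For the substantive direction I would start from an $\MiLL$-formula $A$ with $\proves[\NiML]A$. \Cref{thm:BiVextensNiML} gives $\proves[\IBV\setminus\set{\sassr,\sassl}]A$, and since dropping inference rules can only shrink the set of provable formulas, this upgrades to $\proves[\IBV]A$. As $A$ is $\lseq$-free, hence an $\MiLL$-formula, the conservativity of $\BiV$ over $\MiLL$ (\Cref{thm:BiVextensMiLL}) then yields $\proves[\MiLL]A$, completing the argument.

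I do not expect a genuine obstacle here: the statement is essentially a bookkeeping corollary of the earlier theorems, and the only thing to verify carefully is that the notions of ``$\lseq$-free formula'' coincide across the systems, which is immediate from the grammars. As a self-contained alternative avoiding the detour through deep inference, I would argue directly in the sequent calculus: by cut admissibility a provable $\lseq$-free sequent has a cut-free $\NiML$-derivation, and every $\NiML$-rule enjoys the subformula property, so such a derivation contains only $\lseq$-free formulas; but the conclusion of the $\lseq$-rule necessarily displays an occurrence of $\lseq$, so that rule can never fire, and the derivation lies entirely within $\MiLL$. The only mild check in this alternative is the subformula property of each rule, which is routine.
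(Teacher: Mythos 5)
Your proposal is correct, but your primary route is genuinely different from the paper's proof: the paper proves this theorem in one line by exactly the argument you offer only as a ``self-contained alternative'' --- every rule of $\NiML$ has the subformula property, and since the conclusion of the $\lseq$-rule always exhibits an occurrence of $\lseq$, a derivation of a $\lseq$-free formula can only use $\MiLL$-rules. (Note that in the paper $\NiML$ is defined without a cut rule, cut being admissible, so even your appeal to cut admissibility is not needed; the derivation is cut-free by definition.) Your main argument instead takes a detour through deep inference: $\proves[\NiML]A$ gives $\proves[\IBV\setminus\set{\sassr,\sassl}]A$ by \Cref{thm:BiVextensNiML}, hence $\proves[\IBV]A$ since the restricted system is a subsystem, and then \Cref{thm:BiVextensMiLL} yields $\proves[\MiLL]A$ as $A$ is $\lseq$-free; the converse inclusion is trivial since $\NiML$ literally extends the rule set of $\MiLL$. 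This chain is sound and non-circular (both cited theorems are established earlier and independently of the present statement), but it is much heavier than necessary: it pulls in the full machinery behind \Cref{thm:BiVextensNiML} and \Cref{thm:BiVextensMiLL}, including cut elimination for $\MiLL$ and the simulation lemmas, whereas the paper's argument is purely syntactic inspection of the sequent rules and moreover shows something slightly stronger --- that a cut-free $\NiML$ proof of a $\lseq$-free formula literally \emph{is} an $\MiLL$ proof, not merely that some $\MiLL$ proof exists.
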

\begin{proof}
  It suffices to remark that all rules in $\NiML$ have the subformula property, so any proof of a $\lseq$-free formula in $\NiML$ can only use rules in $\MiLL$.
  \qed
\end{proof}

We now would like to show that $\NML$ is conservative over $\INML$, but we encounter the same problems with the units as in $\BV$. We therefore consider only unit-free formulas, and the unit-free version of $\NML$, that we call \defn{$\NMLm$}, and which is shown in~\Cref{fig:NML}.
\begin{restatable}{theorem}{NMLextensNiML}\label{thm:NMLextensNiML}
  Let $A$ be a unit-free formula.
  We have~ $\proves[\INML]A$ ~iff~ $\proves[\NMLm]{\emb{A}}$~.
\end{restatable}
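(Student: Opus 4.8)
The plan is to prove the two implications by translating back and forth between the two-sided single-conclusion calculus $\INML$ and the one-sided calculus $\NMLm$ of \Cref{fig:NML}, using the embedding $\emb{(\cdot)}$ together with the De~Morgan reading of a two-sided sequent. Concretely, I read a two-sided sequent $C_1,\ldots,C_m\vdash D$ as the one-sided sequent $\vdash\cneg{\emb{C_1}},\ldots,\cneg{\emb{C_m}},\emb{D}$, so that hypotheses become negated embeddings in the succedent and the conclusion becomes its embedding. Since $\emb{(\cdot)}$ is defined only on unit-free formulas, I first observe that, by cut elimination for $\INML$ and the subformula property, any $\INML$-proof of a unit-free $A$ is itself unit-free (the rules $\lunitr$ and $\lunitl$ can never occur), so the translation applies.

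For the forward implication $\proves[\INML]A\Rightarrow\proves[\NMLm]{\emb A}$, I take a cut-free, unit-free $\INML$-proof of $\vdash A$ and translate every sequent as above. It then suffices to check that each $\INML$-rule becomes a valid $\NMLm$-derivation: $\axrule$ maps to $\axrule$; $\rlimp$ and $\lltens$ both map to the $\lpar$-rule (using $\emb{(C\limp D)}=\cneg{\emb C}\lpar\emb D$ and $\emb{(C\ltens D)}=\emb C\ltens\emb D$); $\llimp$ and $\rltens$ map to the $\ltens$-rule; and, most importantly, the $n$-ary rule of \eqref{eq:precRule} maps to the $\lseq$-rule of \Cref{fig:NML} with one additional seq-pair, namely the one coming from the principal formula $A\lseq B$ (using $\cneg{\emb{(C\lseq D)}}=\cneg{\emb C}\lseq\cneg{\emb D}$ and $\emb{(A\lseq B)}=\emb A\lseq\emb B$). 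The endsequent $\vdash A$ translates to $\vdash\emb A$, giving $\proves[\NMLm]{\emb A}$.

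For the converse $\proves[\NMLm]{\emb A}\Rightarrow\proves[\INML]A$, I take a cut-free $\NMLm$-proof of $\vdash\emb A$ and run the polarity argument of \eqref{eq:polarity}, which is the sequent-calculus analogue of the proof of \Cref{thm:BVm}. Call a sequent $\vdash\Gamma$ \emph{good} if exactly one formula of $\Gamma$ is positive and all others are negative (equivalently, the $\lpar$ of $\Gamma$ is positive). Since $\emb A$ is positive by \Cref{lem:emb}, the endsequent is good. I then show, by induction on the proof, that goodness of a conclusion forces goodness of its premises, so every sequent in the proof is good. Each good sequent $\vdash\Gamma$ is then read back as a two-sided $\INML$-sequent: by \Cref{lem:emb} its unique positive formula is $\emb D$ for some succedent $D$, and (from the proof of \Cref{lem:emb}) each negative formula is $\cneg{\emb C}$ for some hypothesis $C$. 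Under this reading the four $\NMLm$-rules restricted to good sequents are exactly the one-sided images of the $\INML$-rules computed in the forward direction, so the proof translates into an $\INML$-proof whose endsequent reads as $\vdash A$.

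The heart of the argument, and the main obstacle, is the inductive step showing that goodness propagates upward, since the $\ltens$- and $\lseq$-rules could a priori send the unique positive formula into the wrong premise, or decompose a negative seq-pair with no positive component, producing a premise with two positive formulas (an unpolarisable $\lpar$) or a premise with no positive formula at all. I rule both out with an auxiliary lemma, proved by a straightforward induction on cut-free proofs: \emph{no $\NMLm$-provable sequent has all of its formulas negative}; indeed, for a putative all-negative conclusion each rule forces one of its premises to be again all-negative, contradicting the induction hypothesis. Given a good conclusion, the offending case of the $\ltens$- or $\lseq$-rule would make the sibling premise all-negative, hence unprovable, which forces the unique positive formula into the correct premise and forces exactly one seq-pair to be positive; thus both premises are good. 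This lemma plays, in the sequent setting, the role of the observation in \Cref{thm:BVm} that no inference rule can turn an unpolarisable formula back into a positive one.
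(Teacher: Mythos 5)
Your proof is correct, and its skeleton matches the paper's: the forward direction uses exactly the paper's sequent translation $C_1,\ldots,C_m\vdash D \;\mapsto\; \vdash\cneg{\emb{C_1}},\ldots,\cneg{\emb{C_m}},\emb{D}$ with the same rule-by-rule check (including the observation that the intuitionistic $\lseq$-rule becomes an $\NMLm$ $\lseq$-instance with one extra seq-pair), and the converse is, as in the paper, a polarity analysis of a cut-free $\NMLm$-proof of $\vdash\emb{A}$ followed by a read-back through \Cref{lem:emb}. Where you genuinely differ is in how the polarity invariant is secured. The paper reasons \emph{downward} with a counting invariant (rules never decrease the number of positive formulas unless an unpolarisable formula appears, and unpolarisable formulas persist downward), whereas you reason \emph{upward}, propagating goodness (exactly one positive formula, all others negative) from the endsequent to the axioms, supported by the auxiliary lemma that no all-negative sequent is $\NMLm$-provable. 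That lemma is not in the paper, and it is in fact needed even for the paper's own argument: the paper's claim about two-premise rules fails, as a statement about arbitrary rule instances, for a $\ltens$-instance with premises $\vdash\Gamma,A$ and $\vdash\Delta,B$ where $A$ is positive, $B$ is negative, and $\Delta$ is all-negative --- there the conclusion $\vdash\Gamma,\Delta,A\ltens B$ has strictly fewer positive formulas than the first premise yet contains no unpolarisable formula. Such instances are excluded only because their second premise is unprovable, which is precisely your lemma; and your inductive proof of it is sound, since by \eqref{eq:polarity} a negative $\ltens$ must be positive-tensor-negative while a negative $\lpar$ or $\lseq$ must have both components negative, so an all-negative conclusion of any rule forces an all-negative premise. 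In short, the paper's counting formulation is more compact, while your upward induction plus unprovability lemma is self-contained and makes explicit a case the paper leaves implicit.
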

\begin{proof}
  This follows in the same way as (unit-free) $\MLLx$ is conservative over unit-free $\IMLL$.
  See \Cref{app:assoc} for details.
\qed
\end{proof}

We conclude this section by observing that the cut restricted to associativity of $\lseq$ is enough to obtain from $\INML$ a sequent calculus for $\IBV$.
To this end, we introduce the following \defn{associative-cut} rules:
\begin{equation}\label{eq:asso-cut}
  \adjustbox{max width=.9\textwidth}{$
  \vliinf{\acutl}{}{
    \Gamma \vdash D
  }{
    \Gamma \vdash (A\lprec B) \lprec C 
  }{
    A\lprec (B \lprec C) , \Delta \vdash D
  }
  \qquad
  \vliinf{\acutr}{}{
    \Gamma \vdash D
  }{
    \Gamma \vdash A\lprec (B \lprec C) 
  }{
     (A\lprec B) \lprec C,\Delta \vdash D
  }
  $}
\end{equation}

\begin{restatable}{theorem}{BiVextensNiMLcut}\label{thm:BiVextensNiMLcut}
  Let $A$ be a formula.
  Then $\proves[\BiV]A$ ~iff~ $\proves[\NiML\cup\set{\acutl,\acutr}]A$.
\end{restatable}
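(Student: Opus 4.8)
The plan is to prove the two directions separately, leaning heavily on \Cref{thm:BiVextensNiML}, which already identifies $\NiML$ with $\IBV\setminus\set{\sassr,\sassl}$ and, in the process, extends the translation Lemmas~\ref{lemma:IMLLtoIBV} and~\ref{lemma:IBVtoIMLL} (the rule-to-sequent translation in both directions, together with closure of derivability under positive and negative contexts) to the $\qdr$-, $\sqir$-, $\refrule$-, and $\lseq$-rules. The only rules of $\BiV$ not covered by that correspondence are the two seq-associativity rules $\sassl$ and $\sassr$, so the entire content of the theorem is that $\acutl$ and $\acutr$ supply exactly these two deep, two-polarity associativities.

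For the direction $\proves[\NiML\cup\set{\acutl,\acutr}]A \Rightarrow \proves[\BiV]A$ I would translate the sequent proof into $\BiV$ formula interpretation by formula interpretation, exactly as in the proof of \Cref{thm:BiVextensNiML}: every rule of $\NiML$ preserves $\BiV$-derivability of a formula interpretation. It then remains to check that $\acutl$ and $\acutr$ do too, which is immediate because each is an ordinary cut performed modulo seq-associativity. Given $\BiV$-derivations of the formula interpretations of the two premises of $\acutl$, I first rewrite $(A\lseq B)\lseq C$ into $A\lseq(B\lseq C)$ using $\sassl$ (valid in $\BiV$) and transitivity of $\limp$ (a consequence of \Cref{thm:deduction}), and then compose the two derivations as a cut, with the context bookkeeping handled by \Cref{cor:deduction}; this composition is legitimate since cut is admissible in $\BiV$ by \Cref{cor:iur}. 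The rule $\acutr$ is symmetric, using $\sassr$ instead of $\sassl$.

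For the converse, $\proves[\BiV]A \Rightarrow \proves[\NiML\cup\set{\acutl,\acutr}]A$, I would re-run the ``de-deepening'' induction of the proof of \Cref{thm:BiVextensMiLL} on the number of rules of the $\BiV$-proof, now targeting $\NiML\cup\set{\acutl,\acutr}$. For a bottom-most rule instance $\cC[\odn{B'}{\rrule}{B}{}]$ with conclusion $\cC[B]=A$: if $\rrule\in\IBV\setminus\set{\sassr,\sassl}$, then $B'\vdash B$ (resp.\ $B\vdash B'$ for a $\bsymb$-rule) is derivable in $\NiML$ by the extended Lemma~\ref{lemma:IBVtoIMLL} of \Cref{thm:BiVextensNiML}. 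The genuinely new case is $\rrule\in\set{\sassl,\sassr}$, where $B'$ and $B$ are the two bracketings of $A_1\lseq A_2\lseq A_3$: here a single application of $\acutl$ (resp.\ $\acutr$) with the generalized axiom $\AXrule$ on both premises yields exactly $(A_1\lseq A_2)\lseq A_3\vdash A_1\lseq(A_2\lseq A_3)$ (resp.\ the reverse). In either case I lift the derived sequent through $\cC$ using the context-closure statement of Lemma~\ref{lemma:IBVtoIMLL} (positive $\cC$ for a $\wsymb$-rule, negative $\cC$ for a $\bsymb$-rule; for the undecorated $\sassl,\sassr$ both liftings are available, since both associativity sequents are derivable), conclude $\vdash\cC[B']$ by the induction hypothesis, and finally obtain $\vdash\cC[B]$ by a cut, which belongs to $\NiML$.

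The main obstacle is the mismatch between the deep, bidirectional, two-polarity associativity rewrites of $\BiV$ and the shallow, top-level associative-cut rules: the real work is to observe that the generalized axiom turns $\acutl,\acutr$ into the plain associativity implications, and that the context-closure lemma (already established for $\NiML$ with $\lseq$-contexts in \Cref{thm:BiVextensNiML}) transports them into every context and polarity, which is why both $\acutl$ and $\acutr$ are needed. Everything else is routine bookkeeping; note in particular that we never invoke cut-elimination — the ordinary cut and the associative cuts may remain in the final $\NiML\cup\set{\acutl,\acutr}$ proof, and this is essential, since the associative cuts cannot be eliminable (otherwise $\NiML$ would already prove all of $\BiV$, contradicting \Cref{thm:BiVextensNiML}).
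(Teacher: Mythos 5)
Your overall skeleton matches the paper's: for the right-to-left direction you simulate $\acutl,\acutr$ inside $\IBV$ by seq-associativity plus the cut rule $\biur$ and then invoke Corollary~\ref{cor:iur}, which is exactly what the paper does; and for left-to-right your key observation --- that $\acutl$ (resp.\ $\acutr$) applied to two instances of the generalized axiom $\AXrule$ yields the sequent $(A\lseq B)\lseq C\vdash A\lseq(B\lseq C)$ (resp.\ its converse) --- is precisely the paper's derivation. However, the left-to-right direction of your argument has a genuine gap at the final composition step. You conclude by cutting $\vdash\cC[B']$ against $\cC[B']\vdash\cC[B]$ and assert that this cut ``belongs to $\NiML$''. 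It does not: in Figure~\ref{fig:IMLL} the rule $\cutr$ is listed separately from the rules of $\MiLL$, and the paper proves that $\cutr$ is \emph{admissible} in $\NiML$; it is not a rule of $\NiML$, hence not a rule of $\NiML\cup\set{\acutl,\acutr}$ either. Your closing remark makes the problem explicit: you say the ordinary cuts ``may remain in the final $\NiML\cup\set{\acutl,\acutr}$ proof'', but a derivation containing $\cutr$ is simply not a proof in that system, so what you have actually established is provability in $\NiML\cup\set{\cutr,\acutl,\acutr}$, which is not the statement of the theorem.

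To close the gap you must discharge those ordinary cuts, i.e.\ show that $\cutr$ is admissible in $\NiML\cup\set{\acutl,\acutr}$; admissibility in plain $\NiML$ is not enough, because both proofs being composed (the proof of $\vdash\cC[B']$ supplied by the induction hypothesis, and the contextual lifting of the associativity sequent) may already contain instances of $\acutl,\acutr$. This is exactly where your justification goes astray: you argue that cut-elimination must be avoided ``since the associative cuts cannot be eliminable'', but this conflates the two kinds of cut. Nobody needs to eliminate $\acutl,\acutr$ (indeed they cannot be eliminated, by Theorem~\ref{thm:BiVextensNiML}); what is needed is elimination of the ordinary $\cutr$ \emph{in their presence}, which does hold --- for instance because a cut formula is never principal in an $\acutl/\acutr$ instance, so ordinary cuts permute above the associative cuts and the cut-elimination argument for $\NiML$ then applies --- but this requires an argument that you have explicitly declined to make. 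The paper instead follows the de-deepening scheme of \eqref{eq:de-deeping}, which re-applies cut-elimination at every step of the induction so that the induction hypothesis always returns a proof in the target system; your version, by positively asserting that ordinary cuts stay in the final proof, turns that step into an incorrect claim rather than a tacit one.
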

\begin{proof}
  As before, it suffices to extend Lemmas~\ref{lemma:IMLLtoIBV} and~\ref{lemma:IBVtoIMLL}.
  Note that for simulating $\acutl$ and $\acutr$ in $\IBV$, we use $\biur$ and then apply Corollary~\ref{cor:iur}.
  \qed
\end{proof}

\begin{remark}
  A sequent calculus for $\BV$ could be obtained by adding to the cut-free sequent system $\NML$ with the unit 
  from \cite{acc:man:PN} the one-side version of the rule $\acutr$.
\end{remark}

\section{Conclusion and Future Work}\label{sec:conc}

In this paper, we have defined two logics, $\BiV$ and $\NiML$, which are the intuitionistic counterparts of the non-commutative logics $\BV$ and $\NML$, respectively.
In order to show that both systems are conservative extensions of the multiplicative intuitionistic linear logic $\MiLL$,
for $\BiV$ we have implemented the first splitting proof for an intuitionistic system, and for $\NiML$ we have shown that cut-elimination holds for the sequent system $\NiML$.

In future work, we hope to be able to extend our systems to obtain conservative extensions of intuitionistic logic with a non-commutative connective:
\begin{equation*}\label{eq:new}
\begin{array}{c@{\qquad}c@{\qquad}c}
  \vpz2{\LJ}&  \vmod4{\,\NLJ\,} & \vmod{6}{\,\BLJ\,}
  \\\\
  \vpz1{\MiLL} & \vpz3{\NiML} & \vpz5{\BiV}
\end{array}
\Dedges{pz1/pz2,pz1/pz3,pz3/pz5,pz3/mod4,pz2/mod4}
\Dedges{pz5/mod6,mod4/mod6}
\end{equation*}
These systems could be used as a framework for defining typing disciplines for imperative programming languages in which the sequentiality of the order of instructions is modeled with the new non-commutative connective.

\bibliographystyle{plain}
\bibliography{biblio}

\clearpage

\appendix

\section{Proof for Footnote~\ref{fn:collapse}}\label{app:collapse}

\begin{proposition}\label{prop:collapse}
  If $(\lunit\lseq A)\limp A$ and  $(A\lseq \lunit)\limp A$ were provable in $\IBV$, then $\lseq$ and $\ltens$ would collapse (in the sense that $A\lseq B$ and $A\ltens B$ would be logically equivalent formulas).
\end{proposition}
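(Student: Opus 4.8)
The statement asks for a logical equivalence, so I would split it into the two implications $A\ltens B\limp A\lseq B$ and $A\lseq B\limp A\ltens B$. The first is already available in $\IBV$, being exactly (an instance of) the rule $\wrr$, so nothing new is needed there and all the work concerns the converse $A\lseq B\limp A\ltens B$, which is where the two hypothesized rules are used. As a first step I would record that, together with the rules $\wsudr$ and $\wcudr$ already present in $\IBV$, the assumptions $(\lunit\lseq A)\limp A$ and $(A\lseq\lunit)\limp A$ upgrade the half-units to genuine two-sided units, i.e. $\lunit\lseq A\equiv A\equiv A\lseq\lunit$; by \Cref{cor:deduction} these equivalences may then be used inside arbitrary positive and negative contexts.

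Next I would feed these full units into an instance of the degenerate-linear-functor law \eqref{eq:degfun}, realised in $\IBV$ by the rule $\bqdr$. The natural idea is to pad the two seq-arguments with units, instantiating \eqref{eq:degfun} so that (after applying the full-unit laws) the tensor-of-seqs becomes $(A\lseq\lunit)\ltens(\lunit\lseq B)\equiv A\ltens B$ and the seq-of-tensors becomes $A\lseq B$. The crucial point is that a full $\lunit$ can now be \emph{removed} from either side of a $\lseq$ by the assumed rules, which is precisely what allows a $\lseq$ to be broken back into a $\ltens$ — an operation that the $\sqir$- and $\refrule$-machinery of $\IBV$ never licenses on its own. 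Reading off this instance together with the unit-elimination steps, and using the Deduction Theorem and \Cref{cor:iur} to pass freely between derivations and implications, I would assemble the converse $A\lseq B\limp A\ltens B$; combined with $\wrr$ this gives the claimed equivalence.

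The hard part is exactly this seq-to-tensor step, i.e. deriving the \emph{up}-version of the $\refrule$ (the converse of $\wrr$) that is deliberately excluded from $\SBiV$ (see \Cref{fig:SIBV}). In $\IBV$, and even in $\SBiV$, every connective-changing rule runs from tensor to seq, and the only rules able to expose a $\ltens$ underneath a top-level $\lseq$ are the unit-eliminations, which require a $\lunit$ sitting immediately inside that seq. With only half-units such a $\lunit$ can never be produced from a bare $A\lseq B$, which is exactly why $\IBV$ itself does not collapse; the hypothesized rules are what remove this barrier. I would therefore spend most of the effort verifying that the constructed derivation is correctly polarised — recalling from \Cref{lemma:TFA} that the left argument of each $\limp$ flips polarity — and that it is genuinely non-circular, so that the derivation of $A\lseq B\limp A\ltens B$ does not implicitly presuppose the very collapse it is meant to establish.
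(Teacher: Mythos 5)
Your proposal follows the same route as the paper's own proof: the direction $(A\ltens B)\limp(A\lseq B)$ is dispatched by the $\refrule$-rules (together with $\widr$ to build the implication), the hypotheses are upgraded to full unit equivalences usable inside arbitrary contexts via \Cref{lemma:TFA} and \Cref{cor:deduction}, the seq-arguments are padded with units and fed into the law \eqref{eq:degfun}, and cut/up-rule elimination (\Cref{thm:uprules}, \Cref{cor:iur}) is invoked to land back in $\IBV$.

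There is, however, a genuine gap in the key step, and it is worth naming precisely because it survives unexamined in your write-up. The law \eqref{eq:degfun} is \emph{directed}: it takes a tensor of seqs to a seq of tensors. Your padded instance is therefore $((A\lseq\lunit)\ltens(\lunit\lseq B))\limp((A\ltens\lunit)\lseq(\lunit\ltens B))$, which after applying the (now full) unit laws on both sides reads $(A\ltens B)\limp(A\lseq B)$ --- exactly the trivial direction that $\IBV$ already proves, not the converse you claim to assemble. To extract $(A\lseq B)\limp(A\ltens B)$ from this pattern you would need the \emph{converse} of \eqref{eq:degfun}, a passage from a seq of tensors to a tensor of seqs, and no rule of $\IBV$ or $\SIBV$ provides it; the hypothesized rules do not provide it either, since they only delete a $\lunit$ sitting inside a $\lseq$ and never convert a $\lseq$ into a $\ltens$. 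The only unit-paddings presenting $A\lseq B$ as a tensor of seqs are trivial ones such as $(A\lseq B)\ltens(\lunit\lseq\lunit)$, which \eqref{eq:degfun} maps straight back to $A\lseq B$. A sanity check showing that no repair along these lines can succeed: classical $\BV$ has full units for $\lseq$, validates \eqref{eq:degfun}, and admits cut, yet $(a\lseq b)\limp(a\ltens b)$ (i.e.\ $(\cneg a\lseq\cneg b)\lpar(a\ltens b)$) is not provable there, so ``full units $+$ \eqref{eq:degfun} $+$ cut'' cannot by themselves yield the collapse. You should also be aware that the paper's own derivation in \Cref{app:collapse} exhibits the very same reversal: its $\bqdr$-step is applied with premise and conclusion exchanged relative to the rule as stated in \Cref{fig:IBV} (contrast with the correctly oriented $\bqdr$-instance in \Cref{fig:derive-up}), so reproducing the paper's approach here does not close the gap --- it inherits it.
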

\begin{proof}
  The implication $(A \ltens B) \limp (A \lseq B)$ already holds in $\BV$ and $\IBV$ (it suffices to apply a $\brr$ and a $\widr$).
  For the other implication $(A\lseq B) \limp (A\ltens B)$, assume we have  $(\lunit\lseq A)\limp A$ and $(A\lseq \lunit)\limp A$ for every $A$. By \Cref{lemma:TFA}, there should be negative derivations $\dD_{\lunit\lseq}$ and $\dD_{\lseq\lunit}$ in $\SIBV$ with premise $A$ and conclusion $\lunit\lseq A$ and $A\lseq \lunit$ respectively, for any formula $A$.
  Then, we could construct a proof of $(A\lseq B) \limp (A\ltens B)$ in $\IBV$ by applying up-rules elimination (\Cref{thm:uprules}) to the following derivation in $\SIBV$:
  $$\small
  \odn{\lunit}{\idr}{
    \odnP{
      \oddbP[\dD_{\lseq\lunit}]{A}{A\lseq \lunit}
      \ltens 
      \oddbP[\dD_{\lunit\lseq}]{B}{\lunit\lseq B}
    }{\bqdr}{
      \odtP{A\ltens \lunit}{\tuur}{A}{}
      \lseq
      \odtP{\lunit\ltens B}{\tuur}{B}{}
    }{\brule}
    \limp
    (A\ltens B)
  }{\wrule}
  $$
\qed
\end{proof}

\section{Proofs for Section~\ref{sec:splitting}}
\label{app:splitting}

\splitting*
\begin{proof}
  We proceed by induction on the given derivation. If a bottommost rule in that derivation is active only inside $K$ or $A$ or $B$, then we can apply immediately the induction hypothesis. Therefore, we need only to consider the non-trivial cases as shown in 
\Cref{fig:splitting}: , and we show how to apply the inductive hypothesis on the premises of the rules.
  \begin{enumerate}
    \item
    The bottom of the derivation is of the following form
    $$\small
      \odt{
        K_1\limp
          \odnP{(K_2\limp(A_2\ltens B_2))\ltens (A_1 \ltens B_1)}{\wtsr}{K_2\limp((A_2\ltens B_2)\ltens (A_1 \ltens B_1))}{\wrule}
      }{}{
        (K_1\ltens K_2)\limp((A_1\ltens A_2)\ltens (B_1 \ltens B_2))
      }{}
    $$
    and we can use the inductive hypothesis to get formulas $K_L$ and $K_R$ such that there are derivations of the following form
    $$\small
      \oddb[\dD_1]{K_L\ltens K_R }{K_1}
    \quand
      \odt{\odpr{K_L \limp (K_2\limp (A_2 \ltens B_2))}}{}{(K_L\ltens K_2) \limp (A_2\ltens B_2)}{}
    \quand
      \odpr{K_R\limp(A_1\ltens B_1)}
    \mydot
    $$
    By applying inductive hypothesis on $K_R\limp(A_1\ltens B_1)$ and on $(K_L\ltens K_2) \limp (A_2\ltens B_2)$, we find formulas $K_{A_1}$, $K_{A_2}$, $K_{B_1}$ and $K_{B_2}$ such that:
    $$\small
      \oddb[\dD_R]{K_{A_1}\ltens K_{B_1}}{K_R}
      \quand
      \oddb[\dD_L]{K_{A_2}\ltens K_{B_2}}{K_L\ltens K_2}
      \quand
      \odpr[\dD_X]{K_{X}\limp X}
    $$
    for all $X\in\set{A_1,A_2,B_1,B_2}$.
    We conclude by letting $K_A=K_{A_1}\ltens K_{A_2}$ and $K_B=K_{B_1}\ltens K_{B_2}$ since we have a negative derivation of the following form
    $$\small
    \od{\odo{
      \odo{\odo{\odh{
          \odtP{K_A}{\defr}{K_{A_1} \ltens K_{A_2}}{}
          \ltens 
          \odtP{K_B}{\defr}{K_{B_1} \ltens K_{B_2}}{}
        }}{}{
        \oddrP[\dD_R]{K_{A_1}\ltens K_{B_1}}{K_R}
        \ltens
        \oddbP[\dD_L]{K_{A_2}\ltens K_{B_2}}{K_L\ltens K_2}
      }{}}{\tassr}{
        \oddbP[\dD_1]{K_L \ltens K_R }{K_1}
        \ltens K_2
      }{}
    }{\defr}{K}{}}
    $$
    and derivations of the following form with $X\in\set{A,B}$.
    \begin{equation}\label{eq:splitting:1}
      \small
    \od{
      \odo{
        \odi{
          \odh{
            \odprP[\dD_{X_1}]{K_{X_1}\limp X_1}
            \ltens
            \odprP[\dD_{X_2}]{K_{X_2}\limp X_2}
          }
        }{\wtsr}{
          K_{X_1}\limp 
            \odtP{
              X_1\ltens (K_{X_2}\limp X_2)
            }{\curryrule}{
              K_{X_2}\limp (X_1\ltens X_2)
            }{}
        }{\wrule}
      }{\curryrule}{
          \odtP{K_{X_1}\ltens K_{X_2}}{\defr}{K_X}{}
        \limp 
          \odtP{X_1 \ltens X_2}{\defr}{X}{}
      }{}
    }
    \mydot
    \end{equation}

  \item
    The bottom of the derivation is of the following form
    $$\small
      \odt{
        K_1\limp 
          \odnP{
            (A_1\ltens B_1)\ltens((A_2\ltens B_2)\limp K_2)
          }{\wlsr}{
            ((A_1\ltens B_1)\limp(A_2\ltens B_2))\limp K_2
          }{\wrule}
      }{}{
        ((A_1\ltens A_2)\limp(B_1\ltens B_2))\limp (K_1\ltens K_2)
      }{}
    $$
    and we can use the inductive hypothesis to get formulas $K_L$ and $K_R$ such that there are derivations of the following form
    $$\small
      \oddb[\dD_1]{K_L\ltens K_R }{K_1}
    \quand
      \odpr{K_L\limp (A_1\ltens B_1)}
    \quand
      \odt{\odpr{K_R \limp ((A_2\limp B_2)\limp K_2)}}{}{
        (A_2 \limp B_2)\limp (K_R \limp K_2)
      }{}
    $$
    By applying inductive hypothesis on $K_L\limp (A_1\ltens A_2)$ and on $(A_2 \limp B_2)\limp (K_R \limp K_2)$, we found formulas $K_{A_1}$, $K_{A_2}$, $K_{B_1}$ and $K_{B_2}$ such that:
    $$\small
      \oddb[\dD_L]{K_{A_1}\ltens K_{B_1}}{K_L}
      \quand
      \oddr[\dD_R]{K_{A_2}\ltens K_{B_2}}{K_R\ltens K_2}
      \quand
      \odpr[\dD_X]{K_{X}\limp X}
      \quand
      \odpr[\dD_{B_2}]{B_2 \limp K_{B_2}}
    $$
    for all $X\in\set{A_1,A_2,B_1}$.
    We conclude by letting $K_A=K_{A_1}\ltens K_{A_2}$ and $K_B=K_{B_1}\ltens K_{B_2}$, since we have a derivation of the following form
    $$\small
    \od{\odo{
      \odo{\odh{
          \odtP{K_A}{\defr}{\oddb[\dD_L]{K_{A_1}\ltens K_{B_1}}{K_L}}{}
          \limp  
          \odtP{K_B}{\defr}{\oddr[\dD_R]{K_{A_2}\limp K_{B_2}}{K_R\limp K_2}}{}
        }
      }{\curryrule}{
        \oddbP[\dD_1]{K_L \ltens K_R }{K_1}
        \limp K_2
      }{}
    }{\defr}{K}{}}
    $$
    plus a derivation for $K_A \limp A$ as in \eqref{eq:splitting:1}, and a derivation for $B\limp K_B $ as shown below, as required.
    \begin{equation}\label{eq:splitting:2}
      \small
      \od{
        \odo{
          \odo{
            \odh{\odpr[\dD_{B_2}]{B_2\limp K_{B_2}}}
          }{\ludr}{
              \left(\od{
                \odo{
                  \odi{
                    \odh{\oddrP[\dD_{B_1}]{\lunit}{K_{B_1}\limp B_1} 
                    \limp 
                    B_2}
                  }{\blsr}{
                    K_{B_1}\ltens (B_1\limp B_2)
                  }{}  
                }{\tcomr}{
                  (B_1\limp B_2)\ltens K_{B_1}
                }{}
              }\right)
            \limp 
            K_{B_2}
          }{}
        }{\curryrule}{
            \odtP{B_1 \limp B_2}{\defr}{B}{}
          \limp 
            \odtP{K_{B_1}\limp  K_{B_2}}{\defr}{K_B}{}
        }{}
      }
    \end{equation}

    \item
    The bottom of the derivation is of the following form
    $$\small
      \odt{
        \odnP{
          ((K_1\limp (A_1\ltens B_1))\limp (A_2\limp B_2))
        }{\blsr}{
          (K_1\ltens ((A_1\ltens B_1)\limp (A_2\limp B_2))) 
        }{\brule}
        \limp K_2
      }{}{
        (A_1\ltens A_2)\limp(B_1\limp B_2)\limp (K_1\limp K_2)
      }{}
    $$
    and we can use the inductive hypothesis to get formulas $K_L$ and $K_R$ such that there are derivations of the following form
    $$\small
      \oddr[\dD_2]{K_L\limp  K_R }{K_2}
      \quand
      \od{\odo{
        \odp{}{
          K_L \limp (K_1\limp (A_1\ltens B_1))
        }{}
      }{\uncurry}{
        (K_L \ltens K_1)\limp (A_1\ltens B_1)
      }{}
      }
      \quand
      \od{\odp{}{(A_2\limp B_2)\limp K_R}{}}
    $$
    By applying inductive hypothesis on $K_L\limp (K_1\limp (A_1\ltens B_1))$ and on $(A_2 \limp B_2)\limp K_R$, we found formulas $K_{A_1}$, $K_{A_2}$, $K_{B_1}$ and $K_{B_2}$ such that:
    $$\small
      \oddb[\dD_L]{K_{A_1}\ltens K_{B_1}}{K_L\ltens K_1}
      \quand
      \oddr[\dD_R]{K_{A_2}\limp K_{B_2}}{K_R}
      \quand
      \odpr[\dD_X]{K_{X} \limp X}
      \quand
      \odpr[\dD_{B_2}]{B_2 \limp K_{B_2}}
    $$
    for all $X\in\set{A_1,A_2,B_1}$.
    We conclude by letting $K_A=K_{A_1}\ltens K_{A_2}$ and $K_B=K_{B_1}\limp K_{B_2}$, since we have a derivation of the following form
    $$\small
    \od{
      \odo{
        \odi{
          \odo{
            \odo{
              \odh{
                \odtP{K_A}{\defr}{K_{A_1} \ltens K_{A_2}}{}
                \limp
                \odtP{K_B}{\defr}{K_{B_1} \limp K_{B_2}}{}
              }
            }{\uncurry}{
              \odtP{(K_{A_1} \ltens K_{A_2})\ltens K_{B_1}}{}{
                  (K_{A_1} \ltens K_{B_1})\ltens  K_{A_2}
                }{}
              \limp K_{B_2}
            }{}
          }{\curry}{
            \odtP{
              \oddb[\dD_L]{K_{A_1}\ltens K_{B_1}}{K_L\ltens K_1}
            }{\tcomr}{
              K_1\ltens K_L
            }{}
            \limp 
            \oddrP[\dD_R]{K_{A_2}\limp K_{B_2}}{K_R}
          }{}
        }{\wlsr}{
          K_1
          \limp 
          \oddrP[\dD_2]{K_L \ltens K_R }{K_2}
        }{\wrule}
      }{\defr}{K}{}
    }
    $$
    plus a derivation for $K_A \limp A$ of the same form of as the one in \eqref{eq:splitting:1} and a derivation for $B\limp K_B$ as the one in \eqref{eq:splitting:2}, as required.

    \item
    The bottom of the derivation is of the following form
    $$\small
      \odt{
          \odnP{
            (A_1\ltens B_1)\limp ((A_2\limp B_2)\ltens K_2)
          }{\btsr}{
            ((A_1\ltens B_1)\limp (A_2\limp B_2))\ltens K_2
          }{\brule}
        \limp K_1
      }{}{
        ((A_1\ltens A_2)\limp(B_1\limp B_2))\limp (K_2\limp K_1)
      }{}
    $$
    and we can use the inductive hypothesis to get formulas $K_L$ and $K_R$ such that there are derivations of the following form
    $$\small
      \oddr[\dD_1]{K_L\limp  K_R }{K_1}
    \quand
      \odpr{K_L \limp (A_1\ltens B_1)}
    \quand
      \odt{
        \odpr{((A_1\limp B_2)\ltens K_2)\limp K_R}
      }{}{
        (A_2\limp B_2)\limp(K_2\limp K_R)
      }{}
    $$
    By applying inductive hypothesis on $K_L\limp (A_1\ltens B_1)$ and on $(A_2 \limp B_2)\limp K_R$, we find formulas $K_{A_1}$, $K_{A_2}$, $K_{B_1}$ and $K_{B_2}$ such that:
    $$\small
      \oddb[\dD_L]{K_{A_1}\ltens K_{B_1}}{K_L}
      \quand
      \oddr[\dD_R]{K_{A_2}\limp K_{B_2}}{K_2\limp K_R}
      \quand
      \od{\odp{\dD_X}{K_{X} \limp X}{}}
      \quand
      \od{\odp{\dD_{B_2}}{B_2 \limp K_{B_2}}{}}
    $$
    for all $X\in\set{A_1,A_2,B_1}$.
    We conclude by letting $K_A=K_{A_1}\ltens K_{A_2}$ and $K_B=K_{B_1}\limp K_{B_2}$, since we have a derivation of the following form
    $$\small
    \od{
      \odo{
        \odo{
          \odo{
            \odo{
              \odo{
                \odh{
                  \odtP{K_A}{\defr}{K_{A_1} \ltens K_{A_2}}{}
                  \limp
                  \odtP{K_B}{\defr}{K_{B_1} \limp K_{B_2}}{}
                }
              }{\uncurry}{
                \left(\od{
                  \odo{
                    \odo{
                      \odh{(K_{A_1} \ltens K_{A_2})\ltens K_{B_1}}
                    }{\tassr}{
                      K_{A_1} \ltens \odnP{K_{A_2}\ltens K_{B_1}}{\tcomr}{K_{B_1}\ltens K_{A_2}}{}
                    }{}
                  }{\tassr}{
                    (K_{A_1} \ltens K_{B_1})\ltens K_{A_2}
                  }{}
                }\right)
                \limp K_{B_2}
              }{}
            }{\curry}{
              \oddbP[\dD_L]{K_{A_1}\ltens K_{B_1}}{K_L}
              \limp 
              \oddrP[\dD_R]{K_{A_2}\limp K_{B_2}}{K_2\limp K_R}
            }{}
          }{\uncurry}{
            \odnP{K_2\ltens K_L}{\tcomr}{K_L\ltens K_2}{}
            \limp  K_R 
          }{}
        }{\curry}{
          K_2
          \limp
          \oddrP[\dD_1]{K_L\limp  K_R }{K_1}
        }{}
      }{\defr}{K}{}
    }
    $$
    plus a derivation for $K_A \limp A$ of the same form of as the one in \eqref{eq:splitting:1} and a derivation for $B\limp K_B$ as the one in \eqref{eq:splitting:2}, as required.

    \item
    We only discuss the case on the left where $K_2$ and $K_3$ are present.
    If one of these two formulas is absent, then it suffices to consider a different instance of the rule $\wlsq$ (if $K_2$ does not occur), and $\wrsq$ (if $K_3$ does not occur). If both of them do not occur, then it suffices to consider an instance of $\wrr$.
    The case on the right is similar because of associativity of $\lseq$.
    
    Without loss of generality, we assume that the bottom of the derivation is of the following form
    $$\small
      \odt{
        K_1\limp 
          \odnP{
            (K_2\limp A_1)\lseq(K_3\limp( A_3\lseq B))
          }{\qdr}{
            (K_2\lseq K_3)\limp(A_1\lseq(A_2\lseq B))
          }{\wrule}
      }{}{
        (K_1\ltens(K_2\lseq K_3))\limp((A_1\lseq A_2)\lseq B)
      }{}
    $$
    and we can use the inductive hypothesis to get formulas $K_L$ and $K_R$ such that there are derivations of the following form
    $$\small
      \oddb[\dD_1]{K_L\lseq K_R }{K_1}
    \quand
      \odt{
        \odpr[\dD_{A_1}']{K_L \limp (K_2\limp A_1)}
      }{}{
        (K_L \ltens K_2)\limp A_1
      }{}
    \quand
      \odt{
        \odpr{K_R \limp (K_3\limp (A_2\lseq B))}
      }{}{
        (K_R \ltens K_3)\limp (A_2\lseq B)
      }{}
    $$
    By applying inductive hypothesis on $(K_R \ltens K_3)\limp (A_2\lseq B)$, we find formulas $K_{A_2}$, $K_{B}$ such that:
    $$\small
      \oddb[\dD_R]{K_{A_2}\lseq K_{B}}{K_R\ltens K_3}
      \quand
      \odpr[\dD_{A_2}]{K_{A_2} \limp A_2}
      \quand
      \odpr[\dD_{B}]{K_B\limp B}
    $$
    We conclude by letting $K_A=(K_L \ltens K_2)\lseq K_{A_2}$, since we have a derivation for $K_B\limp B$ shown above, and the two derivations below: 
    $$
    \small
    \od{
      \odi{
        \odo{\odh{
          ((K_L\ltens K_2)\lseq K_{A_2})\lseq K_B
        }}{\sassr}{
          (K_L\ltens K_2) 
          \lseq 
          \oddbP[\dD_R]{K_{A_2}\lseq K_{B}}{K_R\ltens K_3}
        }{}
      }{\bqdr}{
        \oddbP[\dD_1]{K_L\lseq K_R }{K_1}\ltens (K_2\lseq K_3)
      }{\brule}
    }
    \qquad
    \od{
      \odi{
        \odi{\odh{\lunit}}{\wsudr}{
          \odtP{
            \oddr[\dD_{A_1}']{\lunit}{K_L \limp (K_2\limp A_1)}
          }{\uncurry}{
            (K_L \ltens K_2)\limp A_1
          }{}
          \lseq
          \oddrP[\dD_{A_2}]{\lunit}{K_{A_2} \limp A_2}
        }{}
      }{\wqdr}{
        ((K_L \ltens K_2)\lseq K_{A_2})\limp(A_1\lseq A_2)
      }{\wrule}
    }
    $$
    as required.

    \item
    As in the previous case, we only discuss the case on the left in which $K_1$ and $K_2$ are present.
    If one of these two formulas is absent, then it suffices to consider a different instance of the rule $\blsq$ (if $K_2$ does not occur), or $\brsq$ (if $K_3$ does not occur). If both of them do not occur, then it suffices to consider an instance of $\brr$.
    The case on the right is similar because of associativity of $\lseq$.

    We can assume that the bottom of the derivation is of the following form
    $$\small
      \odt{
          \odnP{
            (A_1\ltens K_1)\lseq ((A_2\lseq B)\ltens K_2)
          }{\qdr}{
            (A_1\lseq(A_2\lseq B))\limp( K_1\lseq K_2)
          }{\brule}
        \limp K_3
      }{}{
        ((A_1\lseq A_2)\lseq B)\limp ((K_1\lseq K_2)\limp K_3)
      }{}
    $$
    and we can use the inductive hypothesis to get formulas $K_L$ and $K_R$ such that there are derivations of the following form
    $$\small
      \oddr[\dD_1]{K_L\lseq K_R }{K_3}
    \quand
      \odt{
        \odpr[\dD_{A_1}']{(A_1\ltens K_1)\limp K_L}
      }{\curry}{
        A_1\limp(K_1\limp K_L)
      }{}
    \quand
      \odt{
        \odpr{((A_2\lseq B) \ltens K_2)\limp K_R}
      }{\curry}{
        (A_2\lseq B)\limp (K_2\limp K_R)
      }{}
    $$
    By applying inductive hypothesis on $(A_2\lseq B)\limp (K_2\limp K_R)$, we find formulas $K_{A_2}$ and $K_{B}$ such that:
    $$\small
      \oddb[\dD_R]{K_{A_2}\lseq K_{B}}{K_2\limp K_R}
      \quand
      \odpr[\dD_{A_2}]{A_2\limp K_{A_2}}
      \quand
      \odpr[\dD_{B}]{B \limp K_B}
      \mydot
    $$
    We conclude by letting $K_A=(K_{A_1}\limp K_{L})\lseq K_{A_2}$, since we have a derivation for $K_B\limp B$ shown above, and derivations of the following form
    $$\small
    \od{
      \odo{
        \odi{
          \odo{
            \odh{
              \odtP{K_A}{\defr}{(K_{A_1}\limp K_{L})\lseq K_{A_2}}{}
              \lseq K_B
            }
          }{\sassl}{
            (K_1\limp K_L)
            \lseq
            \oddbP[\dD_R]{K_{A_2}\lseq K_{B}}{K_2\limp K_R}
          }{}
        }{\wqdr}{
          (K_1\lseq K_2)
          \limp
          \oddrP[\dD_1]{K_L\lseq K_R }{K_3}
        }{\wrule}
      }{\defr}{K}{}
    }
    \qquad
      \od{\odi{
        \odi{\odh{\lunit}}{\wsudr}{
          \odtP{
            \oddr[\dD_{A_1}']{\lunit}{(A_1\ltens K_1)\limp K_L}
          }{\curry}{
            A_1\limp(K_1\limp K_L)
          }{}
          \lseq
          \oddrP[\dD_{A_2}]{\lunit}{A_2\limp K_{A_2}}
        }{\wrule}
      }{\wqdr}{
        \odtP{A_1\lseq A_2}{\defr}{A}{}
        \limp 
        \odtP{(K_{A_1}\limp K_{L})\lseq K_{A_2}}{\defr}{K_A}{}
      }{\wrule}
    }
    $$
    as required.

    \item
    As in the previous cases, we only discuss the case on the left in which $K_1$ and $K_2$ are present.
    If one of these two formulas does not occur, then it suffices to consider a different instance of $\wlsq$ (if $K_2$ does not occur), and $\wrsq$ (if $K_3$ does not occur). If both of them do not occur, then it suffices to consider an instance of $\wrr$.
    The rightmost case is similar because of associativity of $\lseq$.

    We can assume that the bottom of the derivation is of the following form
    $$\small
      \odt{
        K_1\limp 
          \odnP{
            (A_1\limp K_2)\lseq((A_2\lseq B)\limp K_3)
          }{\qdr}{
            ((A_1\lseq A_2)\lseq B) \limp (K_2\lseq K_3)
          }{\wrule}
      }{}{
        ((A_1\lseq A_2)\lseq B) \limp (K_1\limp  (K_2\lseq K_3))
      }{}
    $$
    and we can use the inductive hypothesis to get formulas $K_L$ and $K_R$ such that there are derivations of the following form
    $$\small
      \oddb[\dD_1]{K_L\lseq K_R }{K_1}
    \quand
      \odt{
        \odpr[\dD_{A}']{K_L \limp (A_1\limp K_2)}
      }{}{
        A_1\limp (K_L\limp K_2)
      }{}
    \quand
      \odt{
        \odpr{K_R\limp((A_2\lseq B)\limp K_3)}
      }{}{
        (A_2\lseq B)\limp (K_R \limp K_3)
      }{}
      \mydot
    $$
    By applying inductive hypothesis on $(A_2\lseq B)\limp (K_R \limp K_3)$, we have formulas $K_{A_2}$ and $K_{B}$ such that:
    $$\small
      \oddr[\dD_R]{K_{A_2}\lseq K_{B}}{K_R\limp K_3}
      \quand
      \odpr[\dD_{A_2}]{A_2\limp K_{A_2}}
      \quand
      \odpr[\dD_{B}]{B \limp K_B}
      \mydot
    $$
    We conclude by letting $K_A=(K_L\limp K_2)\lseq K_{A_2}$, since we have a derivation for $K_B\limp B$ shown above, and derivations of the following form
    $$\small
    \od{
      \odi{
        \odo{\odh{
          \odtP{K_A}{\defr}{(K_L\limp K_2)\lseq K_{A_2}}{}\lseq K_B
        }}{\sassl}{
          (K_L\limp K_2)
          \lseq
          \oddrP[\dD_R]{K_{A_2}\lseq K_{B}}{K_R\limp K_3}
        }{}
      }{\wqdr}{
        \oddbP[\dD_1]{K_L\lseq K_R }{K_1}\limp (K_2\lseq K_3)
      }{\wrule}
    }
    \quad
    \od{
      \odi{
        \odi{\odh{\lunit}}{\wsudr}{
          \left(\od{
            \odo{
              \odo{
                \odh{
                  \oddr[\dD_{A}']{\lunit}{K_L \limp (A_1\limp K_2)}  
                }
              }{\uncurry}{
                \odtP{K_L \ltens A_1}{\tcomr}{
                  A_1 \ltens K_L
                }{}
                \limp K_2
              }{}
            }{\curry}{
              A_1 \limp (K_L\limp K_2)
            }{}
          }\right)
          \lseq
          \oddrP[\dD_{A_2}]{\lunit}{A_2\limp K_{A_2}}
        }{\wrule}
      }{\bqdr}{
        \odnP{A_1\lseq A_2}{}{A}{}
        \limp 
        \odnP{(K_L\limp K_2)\lseq K_{A_2}}{}{K_A}{}
      }{\brule}
    }
    $$
    as required.
  \end{enumerate}
\end{proof}

\contextReduction*
\begin{proof}
  By induction on the structure of each context --- see \eqref{eq:contexts}.
  We here only discuss here the cases for the positive contexts, since the arguments for the same cases for the negative contexts are analogous, with some minor differences in the (polarities of the) rules used.
  \begin{itemize}
    \item 
    If $\cP=H\limp \ctx $ we conclude immediately by letting $H=K$ and $\dD_X$ being the identity derivation.
    
    \item 
    If $\cP=H\limp(J\ltens \cPp)$, then we can apply splitting to get derivations 
    $$\small
      \oddb[\dD_H]{H_J\ltens H_P}H
    \quand
      \odpr[\dD_J]{H_J\limp J}
    \quand
      \odpr{H_P\limp \cPp[A]}
    $$
    We can then apply the inductive hypothesis on $H_P\limp \cPp[A]$, giving us derivations
    \begin{equation}\label{eq:contextReduction:1}\small
      \oddr[\dD_P]{K\limp X}{H_P \limp \cPp[X]}
    \quand
      \odpr[\dD_A]{K\limp A}
    \qquad \mbox{for any formula $X$.}
    \end{equation}
    We conclude since we have the following derivation for any formula $X$:
    $$\small
    \od{
      \odo{
        \odo{
          \odh{\oddr[\dD_P]{K\limp X}{
            H_P 
            \limp 
            \odnP{\oddrP[\dD_J]{\lunit}{H_J\limp J}\ltens \cPp[X]
            }{\wtsr}{
              H_J\limp (J \ltens \cPp[X])
            }{\wrule}
          }}
        }{\uncurry}{
          \oddbP[\dD_H]{H_J\ltens H_P}H
          \limp 
          (J\ltens \cPp[X])
        }{}
      }{\defr}{
        \cP[X]
      }{}
    }
    $$
    
    \item 
    The case $\cP=H\limp(\cPp\ltens J)$ is similar to the previous one.

    \item
    If $\cP=H\limp (J\lseq \cPp)$, then we can apply splitting to get derivations 
    $$\small
      \oddr[\dD_H]{H_J\lseq H_P}H
    \quand
      \odpr[\dD_J]{H_J\limp J}
    \quand  
      \odpr{H_P\limp \cPp[A]}
    $$
    and we can apply the inductive hypothesis on $H_P\limp \cPp[A]$, giving us derivations as in \Cref{eq:contextReduction:1}.
    We conclude since we have the following derivation for any formula $X$:
    $$\small
      \od{
        \odo{
          \odi{\odi{\odh{\oddr[\dD_P]{K\limp X}{H_P \limp \cPp[X]}}
            }{\wsudr}{
              \oddrP[\dD_J]{\lunit}{H_J\limp J} \lseq (H_P \limp \cPp[X])
            }{\wrule}
          }{\wqdr}{
            \oddbP[\dD_H]{H_J\lseq H_P}H
            \limp 
            (J\lseq \cPp[X])
          }{\wrule}
        }{\defr}{
          \cP[X]
        }{}
      }
    $$

    \item 
    The case $\cP=H\limp (\cPp\lseq  J)$ is similar to the previous one.

    \item 
    If $\cP=(\cPp\limp J)\limp H$, then we can apply splitting to get derivations
    $$\small
      \oddr[\dD_H]{H_P\limp H_J}H
    \quand
      \odpr[\dD_J]{J\limp H_J}
    \quand
      \odpr{H_P\limp \cPp[A]}
    $$
    and we can apply the inductive hypothesis on $H_P\limp \cPp[A]$, giving us derivations as in \eqref{eq:contextReduction:1}.
    We conclude since we have the following derivation for any formula $X$:
    $$\small
      \od{
        \odo{
          \odo{
            \odo{
              \odh{
                \oddr[\dD_P]{K\limp X}{
                  H_P 
                  \limp 
                  \left(\od{
                    \odi{
                      \odo{\odh{\cPp[X]}
                      }{\tudr}{
                        \cPp[X]\ltens \oddrP[\dD_J]\lunit{J\limp H_J}
                      }{}
                    }{\wlsr}{
                      (\cPp[X]\limp J)\limp H_J
                    }{\wrule}
                  }\right)
                }
              }
            }{\uncurry}{
              \odtP{
                H_P \ltens (\cPp[X]\limp J)
              }{\tcomr}{
                (\cPp[X]\limp J)\ltens H_P
              }{}
              \limp H_J
            }{}
          }{\curry}{
            (\cPp[X]\limp J)\limp \oddrP[\dD_H]{H_P\limp H_J}{H}
          }{}
        }{\defr}{
          \cP[X]
        }{}
      }
    $$

    \item 
    If $\cP=(J\limp \cN)\limp H$, then we can apply splitting to get derivations
    $$\small
      \oddr[\dD_H]{H_J\limp H_P}{H}
    \quand
      \odpr[\dD_J]{H_J\limp J}
    \quand
      \odpr{\cN[A]\limp H_P}
    $$
    and we can apply the inductive hypothesis on $\cN[A]\limp H_P$, giving us the following derivations:
    \begin{equation}\label{eq:contextReduction:2}
      \small
      \oddr[\dD_P]{K\limp X}{\cN[X]\limp H_P}
    \quand
      \odpr[\dD_A]{K\limp A}
    \qquad \mbox{for any formula $X$.}
    \end{equation}
    We conclude since we have the following derivation for any formula $X$:
    $$\small
    \oddr[\dD_P]{K\limp A}{
      \od{
        \odo{
          \odo{
            \odh{
              \left(\od{
                \odo{
                  \odi{
                    \odo{\odh{\cN[X]}}{\ludr}{
                    \oddrP[\dD_J]{\lunit}{H_J\limp J}
                    \limp
                    \cN[X]
                    }{}
                  }{\blsr}{
                    H_J\ltens (J\limp \cN[X])
                  }{\brule}
                }{\tcomr}{
                  (J\limp \cN[X])\ltens H_J
                }{}
              }\right)
              \limp 
              H_P
            }
          }{\curry}{
            (J\limp \cN[X])
            \limp 
            \oddrP[\dD_H]{H_J\limp H_P}H
          }{}
        }{\defr}{
          \cP[X]
        }{}
      }
    }
    $$

    \item 
    If $\cP=(\cN\lseq J)\limp H$, then we can apply splitting to get derivations
    $$\small
      \oddr[\dD_H]{H_P\lseq H_J}H
    \quand
      \odpr[\dD_J]{J\limp H_J}
    \quand
      \odpr{\cN[A]\limp H_P}
    $$
    and we can apply the inductive hypothesis on $\cN[A]\limp H_P$, giving us derivations as in \eqref{eq:contextReduction:2}.
    We conclude since we have the following derivation for any formula $X$:
    $$\small
      \od{
        \odo{
          \odi{\odi{\odh{X\limp K}}{\wcudr}{
              \oddrP[\dD_P]{X\limp K}{\cN[X]\limp H_P}
              \lseq 
              \oddrP[\dD_J]{\lunit}{J\limp H_J}
            }{\wrule}
          }{\wqdr}{
            (\cN[X]\lseq J)
            \limp 
            \oddrP[\dD_H]{H_P\lseq H_J}H
          }{\wrule}
        }{\defr}{
          \cP[X]
        }{}
      }
    $$

    \item 
      The case $\cP=(J\lseq \cN)\limp H$ is similar to the previous one.
      \qed
  \end{itemize}
\end{proof}

\uprules*
\begin{proof}
  We provide a procedure to remove the up rules from a derivation in $\SBiV$ proceeding top-down, showing that whenever there is a derivation of the conclusion of the up rule, then there is a derivation of its premise.
  \begin{itemize}
    \item 
    In the case of $\rrule_\uparrow^\bullet\in \set{\tuur,\luur,\bsudr,\bcudr}$,
    we only show the case $\tuur$, since the other are similar.
    We have a derivation of the form {\small$\odpr{\cN[\odn{\lunit \ltens A}{\rrule_\uparrow^\bullet}{A}{\brule}]}$} and we apply \Cref{lemma:contextReduction} to $\cN[\lunit \ltens A]$ to get derivations
    \begin{equation}
      \small
      \oddr[\dD_X]{X\limp K}{\cN[X]}
      \quand
      \odpr[\dD']{(\lunit\ltens A)\limp K}
    \end{equation}
    for any formula $X$.
    The Splitting Lemma~\ref{lemma:splitting} applied to $(\lunit\ltens A)\limp K$ gives us derivations: 
    $$\small
    \oddr[\dD_K]{\lunit \ltens K_A}{K}
    \quand
    \odpr[\dD_{K_A}]{A\limp K_A}
    \quand
    \oodh{\lunit}
    \mydot
    $$
    We conclude by constructing the following derivation:
    $$\small
    \oddr[\dD_A]{
      \odpr[\dD_{K_A}]{
        A 
        \limp 
        \odtP{K_A}{\tudr}{\oddr[\dD_{K}]{\lunit\ltens K_A}{K}}{}
      }
    }{\cN[A]}
    $$

    \ 

    \item 
    In the case of $\baiur$, we have a derivation of the form {\small$\odpr{\cN[\odn{a\limp a}{\baiur}{\lunit}{\brule}]}$}.
    We apply \Cref{lemma:contextReduction} to $\cN[a\limp a]$ to get a formula $K$ such that there are derivations
    \begin{equation}\label{eq:upElim:1}
      \small
      \oddr[\dD_X]{X\limp K}{\cN[X]}
      \quand
      \odpr[\dD_a]{(a\limp a)\limp K}
    \end{equation}
    for any formula $X$.
    The Splitting Lemma~\ref{lemma:splitting} applied to $(a\limp a)\limp K$ gives us derivations: 
    $$\small
    \oddr[\dD_K]{K_L\limp K_R}{K}
    \quand
    \odpr{K_L\limp a}
    \quand
    \odpr{a\limp K_R}
    \mydot
    $$
    To which we apply the Atomic Splitting Lemma~\ref{lemma:atomicSplitting}, and we can conclude by constructing the following derivation:
    $$\small
    \od{
      \odo{
        \odi{\odh{\lunit}}{\waidr}{
          \oddr[\dD_K]{
            \oddbP[\dD_L]{a}{K_L}
            \limp 
            \oddrP[\dD_R]{a}{K_R}
          }{K}
        }{\wrule}
      }{\ludr}{
        \oddr[\dD_\lunit]{\lunit\limp K}{\cN[\lunit]}
      }{}
    }
    $$
    where 
    $\dD_L$ and $\dD_R$ are obtained applying \Cref{lemma:atomicSplitting} to $K_L\limp a$ and $a\limp K_R$, and where $\dD_\lunit$ is the instantiation of the derivation $\dD_X$ form with $X=\lunit$.
    
    \item 
    In the case of $\wqur$ applied in a positive context, we have a derivation of the form 
    {\small $\odpr{\cP[\odn{(A\lseq C)\ltens(B\lseq D)}{\wqur}{(A\ltens B)\lseq (C\ltens D)}{\wrule}]}$}.
    We apply \Cref{lemma:contextReduction} to the conclusion to get a formula $K$ and the following derivations:
    \begin{equation}\label{eq:upElim:3}
      \oddr[\dD_X]{K\limp X}{\cP[X]}
      \quand
      \odpr[\dD_a]{((A\ltens B)\lseq (C\ltens D))\limp K}
      \mydot
    \end{equation}
    The fact that $\proves[\BiV](A\lseq C)\ltens(B\lseq D)\limp K$ allows us to apply the Splitting \Cref{lemma:splitting} to get the following derivations:
    $$
    \oddb[\dD_K]{K_L\ltens K_R}{K}
    \quand
    \odpr{K_L\limp (A\lseq C)}
    \quand
    \odpr{K_R\limp (B\lseq D)}
    \mydot
    $$
    We conclude because of the existence of the following derivation:
    $$\small
    \od{
      \odd{
        \odo{\odi{
          \odo{\odh{\lunit}}{\wsudr}{
            \oddrP[\dD_A]{\lunit}{K_A\limp A}\lseq \oddrP[\dD_C]{\lunit}{K_C\limp C}
          }{}
        }{\wqdr}{
          (K_A\lseq K_C) 
          \limp
            \odnP{
                \left(\od{\odi{\odo{\odh A}{\tudr}{\oddrP[\dD_B]{\lunit}{K_B \limp B} \ltens A}{}
                  }{\wlsr}{
                    K_B \limp (B\ltens A)
                  }{\wrule}
                }\right)
              \lseq 
                \left(\od{\odi{\odo{\odh C}{\tudr}{\oddrP[\dD_D]{\lunit}{K_D \limp D} \ltens C}{}
                  }{\wlsr}{
                    K_D \limp (D\ltens C)
                  }{\wrule}
                }\right)
            }{\wqdr}{
              (K_B\lseq K_D)\limp ((B\ltens A)\lseq(D\ltens C))
            }{\wrule}
        }{\wrule}}{\curryrule}{
          \oddbP[\dD_K]{
            \oddbP{K_A\lseq K_C}{K_L}
            \ltens 
            \oddbP{K_B\lseq K_D}{K_R}
          }{K}
          \limp 
          \left(
            \odtP{B\ltens A}{\defr}{A \ltens B}{}
            \lseq
            \odtP{D\ltens C}{\defr}{C \ltens D}{}
          \right)
        }{}
      }{\dDp_{(A\ltens B)\lseq(C\ltens D)}}{
        \cP[(A\ltens B)\lseq(C\ltens D)]
      }{}
    }
    $$
    where the existence of derivations 
    $\dD_A$, $\dD_C$, and $\dD_L$ \resp{$\dD_B$, $\dD_D$, and $\dD_R$}
    are guaranteed by \Cref{lemma:splitting} applied to 
    $K_L\limp (A\lseq C)$ \resp{$(B\lseq D)\limp K_R$}, 
    and where $\dD_P$ is the instantiation of the derivation $\dD_X$ with $P=(A\lseq C)\ltens(B\lseq D)$.

    \item 
    In the case of $\bqur$ applied in a negative context, we have a (positive) derivation of the form 
    {\small $\odpr{\cN[\odn{(A\lseq C)\limp(B\lseq D)}{\bqur}{(A \limp B) \lseq (C\limp D)}{\brule}]}$}.
    We apply \Cref{lemma:contextReduction} to the conclusion to get a formula $K$ and the following derivations:
    \begin{equation}\label{eq:upElim:2}
      \oddb[\dD_X]{X\limp K}{\cN[X]}
      \quand
      \odpr[\dD_a]{((A \limp B) \lseq (C\limp D))\limp K}
      \mydot
    \end{equation}
    The fact that $\proves[\BiV](A\lseq C)\limp(B\lseq D)\limp K$ allows us to apply the splitting lemma \Cref{lemma:splitting} to get the following derivations:
    $$
    \oddr[\dD_K]{K_L\limp K_R}{K}
    \quand
    \odpr{K_L\limp (A\lseq C)}
    \quand
    \odpr{(B\lseq D)\limp K_R}
    \mydot
    $$
    We conclude because of the existence of the following derivation:
    $$\small
    \od{
      \odd{
        \odo{\odi{
          \odi{\odh{\lunit}}{\wsudr}{
            \oddrP[\dD_B]{\lunit}{B\limp K_B}\lseq \oddrP[\dD_D]{\lunit}{D\limp K_D}
          }{\wrule}
        }{\wqdr}{
          \odnP{
            \left(\od{
              \odi{\odo{\odh B}{\ludr}{\oddrP[\dD_A]{\lunit}{K_A\limp A} \limp B}{}}{\wlsr}{(A\limp B) \ltens K_A}{\wrule}
            }\right)
            \lseq 
            \left(\od{
              \odi{\odo{\odh D}{\ludr}{\oddrP[\dD_C]{\lunit}{K_C\limp C} \limp D}{}}{\wlsr}{(C\limp D) \ltens K_C}{\wrule}
            }\right)
          }{\bqdr}{
            ((A\limp B)\lseq(C\limp D))\ltens(K_A\lseq K_C)
          }{\brule}
          \limp
          (K_B\lseq K_D)
        }{\wrule}}{\curry}{
          ((A\limp B)\lseq(C\limp D))
          \limp 
          \oddrP[\dD_K]{
            \oddrP[\dD_L]{K_A\lseq K_C}{K_L}
            \limp 
            \oddrP[\dD_R]{K_B\lseq K_D}{K_R}
          }{K}
        }{}
      }{\dDn_{(A\limp B)\lseq (C\limp D)}}{
        \cN[(A\limp B)\lseq (C\limp D)]
      }{}
    }
    $$
    where the existence of derivations 
    $\dD_A$, $\dD_C$, and $\dD_L$ \resp{$\dD_B$, $\dD_D$, and $\dD_R$}
    are guaranteed by \Cref{lemma:splitting} applied to 
    $K_L\limp (A\lseq C)$ \resp{$(B\lseq D)\limp K_R$}, 
    and where $\dD_P$ is the instantiation of the derivation $\dD_X$ with $P=(A\lseq C)\limp(B\lseq D)$.
\qed
    
  \end{itemize}
\end{proof}

\section{Details for  \Cref{sec:assoc}}
\label{app:assoc}

\BiVextensNiML*
\begin{proof}
  For the $\qdr$-rules, we have the following derivations.
  $$\adjustbox{max width=\textwidth}{$
  \vlderivation{
    \vlin{\rlimp}{}{
      (A\limp B)\lseq(C\limp D)\vdash(A\lseq C)\limp (B\lseq D)
    }{
      \vliin{\lseq}{}{
        (A\limp B)\lseq(C\limp D),A\lseq C\vdash B\lseq D
      }{
        \vliin{\llimp}{}{
          A\limp B, A\vdash B
        }{
          \vlin{\AXrule}{}{A\vdash A}{\vlhy{}}
        }{
          \vlin{\AXrule}{}{B\vdash B}{\vlhy{}}
        }
      }{
        \vliin{\llimp}{}{
          C\limp D, C\vdash D
        }{
          \vlin{\AXrule}{}{C\vdash C}{\vlhy{}}
        }{
          \vlin{\AXrule}{}{D\vdash D}{\vlhy{}}
        }
      }
    }
  }
  \qquad
  \vlderivation{
    \vlin{\lltens}{}{
      (A\lseq C)\ltens (B\lseq D)\vdash (A\ltens B)\lseq(C\ltens D)
    }{
      \vliin{\lseq}{}{
        A\lseq C,B\lseq D\vdash (A\ltens B)\lseq(C\ltens D)
      }{
        \vliin{\rltens}{}{
          A, B\vdash A\ltens B
        }{
          \vlin{\AXrule}{}{A\vdash A}{\vlhy{}}
        }{
          \vlin{\AXrule}{}{B\vdash B}{\vlhy{}}
        }
      }{
        \vliin{\rltens}{}{
          C, D\vdash C\ltens D
        }{
          \vlin{\AXrule}{}{C\vdash C}{\vlhy{}}
        }{
          \vlin{\AXrule}{}{D\vdash D}{\vlhy{}}
        }
      }
    }
  }
  $}$$
  For the $\sqir_{\mathsf{L}}$-rules, we have the following derivations:
  $$\small
    \vlderivation{
      \vlin{\rlimp}{}{
        (A\limp B)\lseq C \vdash A\limp(B\lseq C)
      }{
        \vliin{\lseq}{}{(A\limp B)\lseq C , A\vdash B\lseq C}{
          \vliin{\llimp}{}{
            A\limp B, A\vdash B
          }{
            \vlin{\AXrule}{}{A\vdash A}{\vlhy{}}
          }{
            \vlin{\AXrule}{}{B\vdash B}{\vlhy{}}
          }
        }{
          \vlin{\AXrule}{}{C\vdash C}{\vlhy{}}
        }
      }
    }
  \qquad\qquad
    \vlderivation{
      \vlin{\lltens}{}{
        A\ltens(B\lseq C) \vdash (A\ltens B)\lseq C
      }{
        \vliin{\lseq}{}{
          A,B\lseq C \vdash (A\ltens B)\lseq C
        }{
          \vliin{\rltens}{}{
            A,B \vdash A\ltens B
          }{
            \vlin{\AXrule}{}{A\vdash A}{\vlhy{}}
          }{
            \vlin{\AXrule}{}{B\vdash B}{\vlhy{}}
          }
        }{
          \vlin{\AXrule}{}{C\vdash C}{\vlhy{}}
        }
      }
    }
  $$
  Similar derivations can be given for the $\sqir_{\mathsf{R}}$-rules.
  Finally, we have the following derivation for the $\refrule$-rules.
  $$\small
  \vlderivation{
    \vlin{\lltens}{}{
      A \ltens B \vdash A \lseq B
    }{
      \vliin{\lseq}{}{
        A, B \vdash A \lseq B
      }{
        \vlin{\AXrule}{}{A\vdash A}{\vlhy{}}
      }{
        \vlin{\AXrule}{}{B\vdash B}{\vlhy{}}
      }
    }
  }
  $$
  Conversely, after \Cref{lemma:IMLLtoIBV}, it suffices to prove that if all premises of the rule $\lseq$ are derivable in $\IBV$, then we can derive the conclusion.
  This is shown using the following derivation.
  $$
  \adjustbox{max width=\textwidth}{$
    \od{
    \odo{
      \odi{\odh{
        \odtP{
          \odtP{
            \Gamma^{\ltens} \ltens A_1\ltens\cdots\ltens A_n
          }{\tcomr}{
            A_1\ltens\cdots\ltens A_n\ltens \Gamma^{\ltens}
          }{}
          \limp A
        }{\uncurry}{
          (A_1\ltens\cdots\ltens A_n)\limp (\Gamma^{\ltens} \limp A)
        }{}
        \lseq
        \odtP{
          \odtP{
            \Delta^{\ltens} \ltens B_1\ltens\cdots\ltens B_n
          }{\tcomr}{
            B_1\ltens\cdots\ltens B_n\ltens \Delta^{\ltens} 
          }{}
          \limp A
        }{\uncurry}{
          (B_1\ltens\cdots\ltens B_n)\limp (\Delta^{\ltens} \limp B)
        }{}
      }}{\wqdr}{
        \left(\od{
          \odi{\odh{(A_1\ltens\cdots\ltens A_n)\lseq (B_1\ltens\cdots\ltens B_n)}}{(n-1)\times\bqdr}{
            (A_1\lseq B_1)\ltens\cdots\ltens (A_n\lseq B_n)
          }{\brule}
        }\right)
        \limp 
        \odnP{
          (\Gamma^{\ltens} \limp A)\lseq (\Delta^{\ltens} \limp B)
        }{\wqdr}{
          \odnP{\Gamma^{\ltens} \lseq \Delta^{\ltens}}{\brr}{
            \Gamma^{\ltens} \ltens \Delta^{\ltens}
          }{\brule}
          \limp (A\lseq B)
        }{\wrule}
      }{\wrule}
    }{\curry}{
      \odtP{
        (A_1\lseq B_1)\ltens\cdots\ltens (A_n\lseq B_n)\ltens (\Gamma^{\ltens} \ltens \Delta^{\ltens})
      }{n\times \tcomr}{
        \Gamma^{\ltens} \ltens \Delta^{\ltens}
        \ltens(
        A_1\lseq B_1)\ltens\cdots\ltens (A_n\lseq B_n)
      }{}
      \limp 
      (A\lseq B)
    }{}
    }
    $}
  $$
  
\end{proof}

\clearpage

\NMLextensNiML*
\begin{proof}
  First, let $\pi_A$ be a proof of $A$ in $\INML$. We construct a proof $\emb\pi_A$ of $\emb A$ in $\NML$ by translating each sequent $B_1,\ldots B_n\vdash C$ in $\pi_A$ into the sequent $\vdash(\emb{B_1})\lneg,\ldots,(\emb{B_n})\lneg,\emb{C}$.
  In fact, this translation automatically translates every instance of a rule $\rrule\in\INML$ in $\pi_A$ into a (correct) instance of a rule $\emb{\rrule}\in\NML$ --- see \eqref{eq:precRule} and Figures~\ref{fig:IMLL} and~\ref{fig:NML}. Note that $A$ is unit-free, and therefore the two rules $\lunitl$ and $\lunitr$ do not occur in~$\pi_A$.

  Conversely, consider an arbitrary (cut-free) proof $\pi$ in $\NML$.
  Every axiom in $\pi$ contains one positive and one negative formula.
  Each inference rule in $\pi$ that has one premise either keeps the number of positive formulas in the sequent constant, or increases the number of positive formulas in the sequent, or introduces an unpolarisable formula in the sequent.
  For each inference rule in $\pi$ with two premises, either the number of positive formulas in the conclusion is equal or bigger than the maximum of the number of positive formulas in the premises, or the conclusion contains an unpolarisable formula.
  And if a sequent in $\pi$ contains an unpolarisable formula, then so does every sequent below.

  This means that whenever the conclusion of $\pi$ contains exactly one positive formula and no unpolarisable formulas, then so does every line in $\pi$. Then, every $\NML$-sequent in $\pi$ has the shape $\vdash\bB_1,\ldots,\bB_n,\wC$. By Lemma~\ref{lem:emb}, we can transform this into an $\INML$-sequent $B'_1,\ldots,B'_n\vdash C'$, such that $\emb{(B_i')}=(\bB_i)\lneg$ for $i=1,\ldots, n$ and $\emb{(C')}=\wC$. Furthermore, under this transformation, every rule instance $\rrule$ in $\pi$ becomes a valid rule instance $\rrule^\sharp$ in $\INML$.\footnote{This also proves that $\MLLx$ is a conservative extension of $\IMLL$. This should be folklore knowledge, but we could not find a proof of this in the literature.}
  \qed
\end{proof}

From the above proof, we immediately have the following result, which should be folklore knowledge, but we could not find a proof of this in the literature.
\begin{corollary}
  Unit-free $\MLLx$ is a conservative extension of unit-free $\IMLL$. 
\end{corollary}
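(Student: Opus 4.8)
The plan is to observe that this corollary is exactly the $\lseq$-free instance of \Cref{thm:NMLextensNiML}, so essentially the same argument applies verbatim once the $\lseq$-connective and its rule are deleted. Concretely, $\MLLx$ is the one-sided system of \Cref{fig:NML} with the $\lprec$-rule removed, and unit-free $\IMLL$ is the sequent calculus of \Cref{fig:IMLL} with the $\lprec$-rule of \eqref{eq:precRule} removed. Moreover, the embedding $\emb{(\cdot)}$ of \eqref{eq:IBVemb} sends a $\lseq$-free formula to a $\lseq$-free $\BVm$-formula, i.e.\ an $\MLLx$-formula, and \Cref{lem:emb} specialises to the statement that such a formula is positive iff it lies in the image of $\emb{(\cdot)}$. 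So the whole proof reduces to checking that none of the machinery used for $\NMLm$ versus $\INML$ relied essentially on the presence of $\lseq$.

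For the left-to-right direction, given a proof $\pi_A$ of a $\lseq$-free formula $A$ in $\IMLL$, I would translate each two-sided sequent $B_1,\ldots,B_n\vdash C$ occurring in $\pi_A$ into the one-sided sequent $\vdash(\emb{B_1})\lneg,\ldots,(\emb{B_n})\lneg,\emb C$. As in the proof of \Cref{thm:NMLextensNiML}, this turns every instance of an $\IMLL$-rule into a correct instance of the corresponding $\MLLx$-rule, and since $A$ is unit-free the rules $\lunitl$ and $\lunitr$ never occur; the result is a proof $\emb{\pi_A}$ of $\emb A$ in $\MLLx$.

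For the converse, I would take an arbitrary cut-free proof $\pi$ of $\emb A$ in $\MLLx$ (cut-elimination for $\MLLx$ being standard) and run the polarity bookkeeping already used for $\NMLm$: every axiom carries one positive and one negative formula; every one-premise rule keeps the number of positive formulas constant, increases it, or creates an unpolarisable formula; every two-premise rule makes the number of positive formulas in the conclusion at least the maximum over the premises, or creates an unpolarisable formula; and once present, an unpolarisable formula persists in every sequent below. Since $\emb A$ is positive and contains no unpolarisable subformula, every sequent of $\pi$ must have the shape $\vdash\bB_1,\ldots,\bB_n,\wC$, which by \Cref{lem:emb} is the image under $\emb{(\cdot)}$ of a unique $\IMLL$-sequent $B'_1,\ldots,B'_n\vdash C'$. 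Under this correspondence each rule instance of $\pi$ becomes a valid $\IMLL$-rule instance, yielding a proof of $A$ in $\IMLL$.

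I do not expect any genuine obstacle: dropping $\lseq$ only removes cases, and the polarity invariants that do the real work in \Cref{thm:NMLextensNiML} become strictly simpler (there is no $\lprec$-rule to account for and no associativity to worry about). The single point that should be checked rather than assumed is that the one-sided/two-sided translation of rules stays inside the $\lseq$-free fragment in both directions, which is immediate since no rule of $\MLLx$ or of $\IMLL$ introduces a $\lseq$.
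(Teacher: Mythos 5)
Your proposal is correct and takes essentially the same approach as the paper: the paper obtains this corollary immediately from the proof of \Cref{thm:NMLextensNiML} (as flagged in the footnote there), whose two-sided/one-sided translation and polarity-invariant bookkeeping you replay verbatim in the $\lseq$-free fragment. The only difference is that you spell out the specialisation explicitly, which the paper leaves implicit.
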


\BiVextensNiMLcut*
\begin{proof}
  For the left-to-right direction, we can use \Cref{thm:BiVextensNiML}, and it suffices to show that the rules $\sassl$ and $\sassr$ are derivable in $\NiML\cup\set{\acutl,\acutr}$.
  For $\sassr$ we have the following derivation.
  $$
    \vlderivation{
        \vliin{\acutr}{}{
          (A\lprec B)\lprec C  \vdash A \lprec (B\lprec C)
        }{
          \vlin{\AXrule}{}{(A\lprec B)\lprec C \vdash (A\lprec B)\lprec C}{\vlhy{}}
        }{
          \vlin{\AXrule}{}{A \lprec (B\lprec C) \vdash A \lprec (B\lprec C)}{\vlhy{}}
        }
    }
  $$
  A similar derivation can be defined for $\sassl$ using the rule $\acutr$.

  For the right-to-left direction, we show that if the formula interpretation of both premises of the rule $\acutr$ are derivable in $\IBV$, then also the formula interpretation of the conclusion is provable in $\IBV$ (using $\sassr$).
  $$
  \od{
    \odo{
      \odi{
        \odh{
          \left(
            \Gamma^{\ltens} 
            \limp 
            (A\lprec (B \lprec C))
          \right)
          \ltens
          \odtP{
            \left(
              \odnP{(A\lprec B) \lprec C}{\sassr}{
                A\lprec (B\lprec C)
              }{}  
              \ltens 
              \Delta^{\ltens}
            \right) 
            \limp  D
          }{\curry}{
            (A\lprec (B\lprec C))  \limp (\Delta^{\ltens} \limp  D)
          }{}
        }
      }{\wlsr}{
        \left(\od{
          \odo{
            \odi{
              \odh{
                (
                  \Gamma^{\ltens} 
                  \limp 
                  (A\lprec (B\lprec C))
                )
                \limp 
                (A\lprec (B\lprec C))
              }
            }{\blsr}{
              \Gamma^{\ltens} 
              \ltens
              \odnP{(A\lprec (B\lprec C))\limp (A\lprec (B\lprec C))}{\baiur}{\lunit}{\brule}
            }{\brule}
          }{\tuur}{
            \Gamma^{\ltens} 
          }{}
        }\right)
        \limp 
        (\Delta^{\ltens} \limp  D)
      }{\wrule}
    }{\uncurry}{
      (\Gamma^{\ltens} \ltens \Delta^{\ltens}) \limp  D
    }{}
  }
  $$
  where $\Gamma^{\ltens}$ \resp{$\Delta^{\ltens}$} are formulas obtained by conjunction ($\ltens$) of all formulas in $\Gamma$ \resp{$\Delta$}.
  A similar argument applies for $\acutl$.
  \qed
\end{proof}

\section{Remarks on \Cref{sec:IBVtoBV}}
\label{app:IBVtoBV}

With the results of \Cref{sec:assoc}, we can give another proof of \Cref{thm:BVm}, which makes use of the sequent calculus instead of the deep inference system. The first observation is that a similar result as \Cref{thm:BiVextensNiML} also holds for $\NML$ and $\BV$. For this, let us define the \defn{system $\hBV$}, to be $\BVm$ (as defined in \Cref{fig:BVm}) where $\feq$ is replaced by $\hfeq$, which is defined via:
\begin{equation}
    \begin{array}{c@{\qquad}c}
      A\ltens (B\ltens C) \hfeq (A\ltens B)\ltens C
      &
      A\ltens B \hfeq B\ltens A
      \\ \\[-1ex]
      A\lpar (B\lpar C) \hfeq (A\lpar B)\lpar C
      &
      A\lpar B \hfeq B\lpar A 
    \end{array}
\end{equation}
i.e., $\feq$ with associativity for $\lseq$ removed. We now have the following:

\begin{theorem}
  Let $A$ be a $\BVm$-formula. Then ~$\proves[\NMLm]A$~ iff ~$\proves[\hBV]A$~.
\end{theorem}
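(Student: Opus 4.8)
The plan is to prove this as the classical, one-sided, unit-free analogue of \Cref{thm:BiVextensNiML}, following the same two-directional strategy relating a sequent calculus to a deep-inference system. I would fix the \emph{formula interpretation} of an $\NMLm$-sequent $\vdash A_1,\ldots,A_n$ to be $A_1\lpar\cdots\lpar A_n$ (under any bracketing, all of which are $\hfeq$-equivalent since $\lpar$ is associative and commutative in $\hfeq$), and read $\proves[\hBV]A$ as the existence of a deep-inference proof in $\hBV$ with conclusion $A$.

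For the direction $\proves[\NMLm]A\Rightarrow\proves[\hBV]A$, I would translate a cut-free $\NMLm$-proof rule by rule into $\hBV$, interpreting each sequent by its $\lpar$-interpretation. The $\axrule$, $\lpar$, and $\ltens$ cases are the standard simulations, using the atomic $\airule_\downarrow$-rules, the associativity and commutativity of $\lpar$ built into $\hfeq$, and the $\swir$- and $\mixr$-rules of \Cref{fig:BVm}. The only genuinely new case is the $\lprec$-rule of \Cref{fig:NML}: its conclusion $\vdash\Gamma,\Delta,A_1\lprec B_1,\ldots,A_n\lprec B_n$ is obtained from the two premises by iterating $\qdr$ to split one big seq into the $n$ seqs $A_i\lprec B_i$, together with $\refrule$ and the switch, exactly as in the large derivation at the end of the proof of \Cref{thm:BiVextensNiML}. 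The crucial observation is that this simulation never rebrackets a seq, so it stays inside $\hfeq$ and never appeals to $\sassl$ or $\sassr$ --- which is precisely why the target system is $\hBV$ and not $\BVm$.

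For the converse $\proves[\hBV]A\Rightarrow\proves[\NMLm]A$, I would use a \emph{de-deepening} argument mirroring \Cref{thm:BiVextensMiLL}. First I would establish two ingredients: (i) for every inference rule $\rrule$ of $\hBV$ with premise $B'$ and conclusion $B$, the sequent $\vdash\cneg{B'},B$ is derivable in $\NMLm$ extended with $\cutr$ --- for the $\qdr$-, $\sqir$-, and $\refrule$-rules these are exactly the classical one-sided readings of the derivations used in \Cref{lemma:IBVtoIMLL}; and (ii) this derivability lifts through contexts, i.e.\ if $\vdash\cneg{B'},B$ is derivable then so is $\vdash\cneg{(\cC[B'])},\cC[B]$ for every context $\cC$, proved by induction on $\cC$. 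Then, given an $\hBV$-proof of $A$, I peel off a bottommost rule instance turning $\cC[B']$ into $\cC[B]=A$; the induction hypothesis gives $\proves[\NMLm]\cC[B']$, a single $\cutr$ against the sequent from (ii) yields a proof of $\vdash A$ with cut, and cut-elimination for $\NMLm$ (which holds exactly as for $\NML$ in \cite{acc:man:PN}, analogously to \Cref{thm:cutElimIMLL}) removes it.

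The main obstacle, and the reason the statement is phrased with $\hBV$ rather than $\BVm$, is keeping the non-associativity of $\lseq$ synchronised on both sides. I expect the delicate point to be verifying in the forward direction that the simulation of the multi-premise $\lprec$-rule produces only the non-associative seq-structure permitted by $\hfeq$, and dually checking in the backward direction that none of the $\hBV$-rules forces $\NMLm$ to prove an instance of seq-associativity (which it cannot). Since $\sassl$ and $\sassr$ are absent from $\hBV$ by construction and the single $\lprec$-rule of $\NMLm$ is inherently non-associative, the two systems match; had we kept full $\BVm$, the correspondence would break precisely at seq-associativity.
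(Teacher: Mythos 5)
Your proposal is correct and follows exactly the route the paper takes: its proof of this theorem is literally the one-line remark that it is ``similar to the proof of \Cref{thm:BiVextensNiML}, using one-sided sequents $\vdash \cneg{A}, B$ instead of $A \vdash B$'', i.e.\ rule-by-rule simulation of the $\lprec$-rule via $\qdr$, $\refrule$ and switch in one direction, and de-deepening through contexts with $\cutr$ plus cut-elimination for $\NMLm$ in the other. Your observation that the simulation of the multi-premise $\lprec$-rule only ever splits a single top-level seq and hence never invokes $\sassl$ or $\sassr$ is precisely the point that makes $\hBV$ (rather than $\BVm$) the right target system.
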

\begin{proof}
  The proof is similar to the proof of \Cref{thm:BiVextensNiML}, using one-sided sequents $\vdash \cneg A,B$ instead of $A\vdash B$ (see also~\cite{gug:str:01} and~\cite{acc:man:PN}).
  \qed
\end{proof}

\begin{theorem}
  Let $A$ be a $\BVm$-formula. Then ~$\proves[\BVm]A$~ iff ~$\proves[\hBV\cup\set{\sassr,\sassl}]A$~.
\end{theorem}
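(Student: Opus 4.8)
The plan is to treat the passage from $\BVm$ to $\hBV\cup\set{\sassr,\sassl}$ as merely relocating the associativity of $\lseq$: in $\BVm$ it is built into the congruence $\feq$, whereas in $\hBV\cup\set{\sassr,\sassl}$ the congruence $\hfeq$ retains only associativity and commutativity of $\ltens$ and $\lpar$, and $\lseq$-associativity is supplied instead by the two explicit deep-inference rules $\sassl$ and $\sassr$, which realize the two orientations of $(X\lseq Y)\lseq Z = X\lseq(Y\lseq Z)$.

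For the right-to-left direction I would observe that every rule of $\hBV\cup\set{\sassr,\sassl}$ is already a rule of $\BVm$: since $\hfeq\subseteq\feq$, each $\hfeq$-step is an $\feq$-step, and each instance of $\sassl$ or $\sassr$ rewrites a formula into an $\feq$-equivalent one, hence is an instance of the $\feq$-rule. Consequently any proof in $\hBV\cup\set{\sassr,\sassl}$ is, read literally, a proof in $\BVm$.

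The content lies in the left-to-right direction, which I would reduce to the following lemma: \emph{if $A\feq B$ for $\BVm$-formulas $A$ and $B$, then there is a derivation from $A$ to $B$ using only the rules $\hfeq$, $\sassl$, and $\sassr$.} To prove it, recall that $\feq$ is the congruence generated by the equations listed in \Cref{fig:BVm} together with associativity of $\lseq$; hence a witness of $A\feq B$ is a finite chain of basic rewrites, each applying one such equation inside some context. A rewrite using associativity or commutativity of $\ltens$ or $\lpar$ is realized by a single $\hfeq$-step, and a rewrite replacing an occurrence of $(X\lseq Y)\lseq Z$ by $X\lseq(Y\lseq Z)$, or vice versa, is realized by one application of $\sassl$, respectively $\sassr$, in the same context. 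Concatenating these pieces gives the desired derivation; it is essential here that both orientations are available and that $\sassl,\sassr$ are deep rules applicable in an arbitrary context.

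Given the lemma, I would rewrite a $\BVm$-proof rule by rule. Each rule instance of $\BVm$ is applied modulo $\feq$, that is, it forms a block $C\feq C'\rightsquigarrow D'\feq D$ in which $C'\rightsquigarrow D'$ is the $\hBV$-form of the rule applied in context and the outer steps are $\feq$-equivalences (an instance of the $\feq$-rule itself is such a block with a trivial inner step). Replacing every maximal block of $\feq$-equivalences by the derivation furnished by the lemma, and keeping the identically named $\hBV$ rule instances in place, yields a proof with the same conclusion in $\hBV\cup\set{\sassr,\sassl}$. The main obstacle is exactly the lemma: one must check that $\feq$ factors cleanly as ``$\hfeq$ together with context-closed, two-directional $\lseq$-associativity,'' which holds because $\lseq$ is a connective distinct from $\ltens$ and $\lpar$ whose sole defining equation is associativity, so the two families of basic rewrites never interact.
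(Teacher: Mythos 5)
Your proposal is correct and follows exactly the route of the paper's own (one-sentence) proof: the paper simply observes that the rules $\sassr$, $\sassl$ can always be used in place of the equation $A\lseq(B\lseq C)\feq(A\lseq B)\lseq C$ and vice versa, which is precisely the factorization of $\feq$ into $\hfeq$ plus context-closed, two-directional $\lseq$-associativity that your lemma makes explicit. Your write-up just supplies the routine details (chains of basic rewrites, closure under contexts) that the paper deems trivial.
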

\begin{proof}
  This is trivial, as the rules $\sassr$, $\sassl$ can alway be used instead of the equality       $A\lseq (B\lseq C) \feq (A\lseq B)\lseq C$, and vice versa.
\qed
\end{proof}

Note that the two rules $\sassr$ and $\sassl$ are applicable to $\IBV$-formulas and $\BVm$-formulas. With some abuse of notation, we can also allow them to be applied in a sequent calculus, as rewrite rules inside an arbitrary formula context. We can thus define two new proof systems:
\begin{equation}
  \begin{array}{rcl}
    \NMLp&~=~&\NMLm\cup\set{\sassr,\sassl}\\[1ex]
    \INMLp&~=~&\INML\cup\set{\sassr,\sassl}
  \end{array}
\end{equation}
\begin{corollary}\label{cor:BVm}
    Let $A$ be a $\BVm$-formula. Then ~$\proves[\BVm]A$~ iff ~$\proves[\NMLp]A$~.
\end{corollary}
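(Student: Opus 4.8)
The plan is to obtain the result by chaining the two theorems immediately preceding it, the only genuine work being to check that the equivalence between $\NMLm$ and $\hBV$ survives the addition of the two seq-associativity rewrites $\sassr$ and $\sassl$. First I would invoke the second of those theorems, which gives $\proves[\BVm]A$ iff $\proves[\hBV\cup\set{\sassr,\sassl}]A$ for every $\BVm$-formula $A$. This reduces the corollary to establishing
$$\proves[\hBV\cup\set{\sassr,\sassl}]A \quiff \proves[\NMLm\cup\set{\sassr,\sassl}]A\mydot$$
the right-hand side being $\proves[\NMLp]A$ by the definition of $\NMLp$.

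Next I would prove this last equivalence by lifting the first theorem, $\proves[\NMLm]A$ iff $\proves[\hBV]A$. The key observation is that its proof proceeds through two mutually inverse translations between $\NMLm$-proofs and $\hBV$-derivations which do not alter the formulas occurring in a proof, but merely recast a (one-sided) sequent proof as a deep-inference derivation and back. In both systems the rules $\sassr$ and $\sassl$ are one and the same thing, namely a rewrite of a subformula $A\lseq(B\lseq C)$ into $(A\lseq B)\lseq C$ (or its converse) inside an arbitrary formula context. Hence an application of $\sassr$ or $\sassl$ anywhere in a proof of one system is transported verbatim, as the identically-shaped rewrite, by the translation into the other. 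Moreover, the backward (deep-inference-to-sequent) direction of the first theorem rests on the polarity analysis classifying every $\BVm$-formula as positive, negative, or unpolarisable; since $\sassr$ and $\sassl$ only reassociate $\lseq$, and the polarity of a seq-formula is determined by the polarities of its immediate components, these rewrites preserve the classification, so the analysis carries over unchanged. Carrying the connective-rule translations of the first theorem across as before, and transporting each $\sassr$/$\sassl$ step to the corresponding rewrite, yields the two-way simulation between $\hBV\cup\set{\sassr,\sassl}$ and $\NMLp$, and the corollary follows by composing with the second theorem.

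The main obstacle I expect is bookkeeping at the sequent/deep-inference interface: as the preceding remark stresses, in the sequent calculus an $\sassr$ or $\sassl$ step is only an ``abuse of notation'', i.e.\ a congruence rewrite performed on a formula occurrence inside a sequent, whereas in $\hBV$ it is a bona fide inference rule applied in formula context. I would therefore make precise that a rewrite of a given subformula occurrence in a sequent of an $\NMLp$-proof corresponds, under the translation, to the rewrite of the matching subformula occurrence in the $\hBV$-derivation, and conversely. Once this correspondence is spelled out, no new cut-elimination or splitting argument is needed beyond what the two cited theorems already supply, since seq-associativity is a congruence that every remaining rule respects.
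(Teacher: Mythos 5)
Your overall decomposition is the one the paper uses: its entire proof reads ``Immediately from the previous two theorems,'' i.e., chain $\proves[\BVm]A \iff \proves[\hBV\cup\set{\sassr,\sassl}]A$ with a lifting of the equivalence $\proves[\NMLm]A \iff \proves[\hBV]A$ to the systems extended by $\sassr$ and $\sassl$; and your observation that a $\sassr$/$\sassl$ step is literally the same deep rewrite in both systems, hence transports verbatim, is the right core of that lifting.

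The gap is in how you justify the backward (deep-inference-to-sequent) half of the lifting, and it stems from a misreading of how the equivalence $\proves[\NMLm]A \iff \proves[\hBV]A$ is actually proved. That proof is not a pair of ``mutually inverse translations'' that merely recast proofs: only the sequent-to-deep direction is a rule-by-rule simulation. In the other direction no such recasting is possible, because deep rules act inside arbitrary contexts while sequent rules act only at the root; the paper instead uses the de-deeping argument of \Cref{thm:BiVextensMiLL} (cf.~\eqref{eq:de-deeping}): each deep rule instance $\cC[B']\to\cC[B]$ yields a cut-free sequent proof of the corresponding implication, which is cut against the inductively obtained sequent proof of $\cC[B']$, after which the cut is eliminated. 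Moreover, the polarity analysis you appeal to (positive/negative/unpolarisable) plays no role in this theorem: in the paper it powers the classical-over-intuitionistic conservativity results (\Cref{lem:emb}, \Cref{thm:NMLextensNiML}, \Cref{thm:INMLstar}), which are beside the point here, since both $\NMLp$ and $\hBV\cup\set{\sassr,\sassl}$ are classical systems over $\BVm$-formulas. The obligation you actually need to discharge is that cut admissibility survives the addition of $\sassr,\sassl$ to $\NMLm$: in the inductive de-deeping step the proof of $\cC[B']$ now lives in $\NMLp$, so the final cut must be eliminated in the presence of rewrite steps. This does hold --- a rewrite commutes with cut, and when it touches the cut formula one applies the dual rewrite to the negated occurrence in the other premise, using $\cneg{(A\lseq B)}=\cneg A\lseq\cneg B$, so that reassociation dualizes to reassociation --- but your concluding claim that no new cut-elimination argument is needed is precisely the step that is missing.
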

\begin{proof}
  Immediately from the previous two theorems.
  \qed
\end{proof}

\begin{corollary}\label{cor:IBV}
    Let $A$ be a $\IBV$-formula. Then ~$\proves[\IBV]A$~ iff ~$\proves[\INMLp]A$~.
\end{corollary}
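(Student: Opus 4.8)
The plan is to obtain Corollary~\ref{cor:IBV} from Theorem~\ref{thm:BiVextensNiMLcut}, which already states $\proves[\IBV]A$ iff $\proves[\NiML\cup\set{\acutl,\acutr}]A$. Since $\INMLp=\NiML\cup\set{\sassr,\sassl}$ by definition, it suffices to show that the two extensions $\NiML\cup\set{\acutl,\acutr}$ and $\NiML\cup\set{\sassr,\sassl}$ prove the same formulas, i.e.\ that the reassociating-cut rules $\acutl,\acutr$ and the reassociation rewrites $\sassr,\sassl$ are interderivable over $\NiML$. (This mirrors the way Corollary~\ref{cor:BVm} is assembled from its two underlying equivalences, with Theorem~\ref{thm:BiVextensNiML} playing the role of the base equivalence.)

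First I would record the easy inclusion, which is already contained in the left-to-right part of the proof of Theorem~\ref{thm:BiVextensNiMLcut}: there $\sassr$ and $\sassl$ are shown derivable in $\NiML\cup\set{\acutl,\acutr}$. Replacing every $\sassr$- and $\sassl$-step of an $\INMLp$-proof by the corresponding derivation turns it into a proof in $\NiML\cup\set{\acutl,\acutr}$; with Theorem~\ref{thm:BiVextensNiMLcut} this gives $\proves[\INMLp]A \Rightarrow \proves[\IBV]A$. For the reverse inclusion I would simulate $\acutr$ inside $\INMLp$: from its premises $\Gamma\vdash A\lseq(B\lseq C)$ and $(A\lseq B)\lseq C,\Delta\vdash D$, applying the rewrite $\sassr$ to the succedent of the first premise yields $\Gamma\vdash (A\lseq B)\lseq C$, after which a single cut on $(A\lseq B)\lseq C$ delivers $\Gamma,\Delta\vdash D$; the rule $\acutl$ is handled symmetrically using $\sassl$. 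Hence $\acutl,\acutr$ are derivable from $\sassr,\sassl$ together with $\cutr$, and the whole reverse inclusion reduces to the admissibility of $\cutr$ in $\INMLp$.

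The main obstacle is precisely this last point. I would prove admissibility of $\cutr$ in $\INMLp=\NiML\cup\set{\sassr,\sassl}$ by extending the cut-elimination argument for $\NiML$ with the new cases introduced by the rewrite rules. These cases are mild because $\sassr$ and $\sassl$ only reparenthesize nested occurrences of $\lseq$: a rewrite step standing immediately above a cut either acts in a subformula disjoint from the cut formula, in which case it commutes past the cut, or it reassociates the cut formula itself, in which case the resulting principal cut on a reassociated seq-formula reduces to cuts on its immediate $\lseq$-components exactly as in the $\lseq$-case already treated in the cut-elimination for $\NiML$. (A semantic shortcut through $\IBV$ --- translate forward, use admissibility of $\biur$ from Corollary~\ref{cor:iur}, translate back --- would be circular here, since the backward translation itself relies on cut being admissible in $\INMLp$, so I would keep the syntactic argument.) Combining the two inclusions with Theorem~\ref{thm:BiVextensNiMLcut} then yields $\proves[\IBV]A$ iff $\proves[\INMLp]A$, as required.
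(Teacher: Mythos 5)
Your reduction of the corollary to cut admissibility in $\INMLp$ is where the proposal breaks down: the claim that this is a ``mild'' extension of the cut-elimination for $\NiML$ is not tenable, and the reduction you sketch for the key case is wrong. Suppose the left premise of a cut proves $\Gamma\vdash (A\lseq B)\lseq C$ and was obtained by a $\sassr$-step from $\Gamma\vdash A\lseq(B\lseq C)$, while the right premise uses $(A\lseq B)\lseq C$ in its antecedent. The $\lseq$-rule of $\NiML$ decomposes that occurrence into the components $A\lseq B$ and $C$, whereas the proof sitting above the rewrite can only supply the components $A$ and $B\lseq C$; the two decompositions do not align, so the cut does \emph{not} ``reduce to cuts on its immediate $\lseq$-components exactly as in the $\lseq$-case already treated'' --- that reduction requires the same decomposition on both sides. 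This mismatch is precisely Tiu's obstruction to shallow systems for an associative $\lseq$ (the reason $\NML$ drops associativity at all), and overcoming it needs a splitting/inversion argument of the kind the paper develops for $\IBV$ in deep inference, not a local patch. The natural alternative reduction --- transferring the rewrite onto the cut-formula occurrence in the other premise --- preserves the size of the cut formula and merely moves a rewrite step from one premise to the other, so standard rank/height measures do not decrease and the procedure can ping-pong; you give no measure that copes with this. A smaller flaw of the same kind sits in your ``easy'' inclusion: a $\sassr/\sassl$-step of $\INMLp$ may be applied deep inside a formula, or to an antecedent formula, and such a step cannot be locally ``replaced by the corresponding derivation'' from the proof of \Cref{thm:BiVextensNiMLcut}, which only derives the shallow sequents $(A\lseq B)\lseq C\vdash A\lseq(B\lseq C)$ and its converse; this direction is better done by translating $\INMLp$-proofs into $\IBV$ directly, mapping deep rewrite steps to the deep rules $\sassr,\sassl$ of $\IBV$.

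The paper's own proof takes a different and shorter route that never passes through the asso-cut system or through cut admissibility for $\INMLp$: it combines \Cref{thm:BiVextensNiML}, i.e.\ $\proves[\NiML]A$ iff $\proves[\IBV\setminus\set{\sassr,\sassl}]A$, with the observation (made for $\BVm$ versus $\hBV\cup\set{\sassr,\sassl}$ just above the corollary) that $\IBV$ and $\INMLp$ arise from these two systems by adding the \emph{same} rules $\sassr,\sassl$ --- as deep inference rules on one side, as deep rewrite rules inside sequents on the other --- so that associativity steps translate one-to-one in both directions and only the associativity-free fragments need the machinery already established. If you insist on your decomposition via \Cref{thm:BiVextensNiMLcut}, you must supply a genuine syntactic cut-elimination proof for $\INMLp$, which is substantially harder than the one for $\NiML$; as you yourself observe, the semantic detour through $\IBV$ is unavailable to you at that point.
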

\begin{proof}
  Immediately from \Cref{thm:BiVextensNiML} and the argument above.
  \qed
\end{proof}

We now have the following:

\begin{theorem}\label{thm:INMLstar}
  Let $A$ be a unit-free formula. Then ~$\proves[\INMLp]A$~ iff ~$\proves[\NMLp]\emb A$~.
\end{theorem}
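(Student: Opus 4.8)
The plan is to adapt the proof of \Cref{thm:NMLextensNiML} essentially verbatim, since $\INMLp$ and $\NMLp$ are obtained from $\INML$ and $\NMLm$ by adjoining the \emph{same} pair of reassociation rewrites $\sassr,\sassl$. The only genuinely new content is to check that these two rules commute with the embedding $\emb{(\cdot)}$ (used for the left-to-right direction) and preserve the polarity bookkeeping of \eqref{eq:polarity} (used for the right-to-left direction). Two observations make this routine. First, both systems are cut-free by construction: $\NMLm$ contains no cut rule, and neither does the base sequent calculus underlying $\INML$. Second, since $A$ is unit-free and every rule enjoys the subformula property, no occurrence of $\lunit$ can arise in a proof, so the unit rules $\lunitl,\lunitr$ never play a role.

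For the left-to-right direction I would start from a proof $\pi_A$ of $A$ in $\INMLp$ and translate, exactly as in \Cref{thm:NMLextensNiML}, each two-sided sequent $B_1,\ldots,B_n\vdash C$ into the one-sided $\NMLm$-sequent $\vdash(\emb{B_1})\lneg,\ldots,(\emb{B_n})\lneg,\emb{C}$; this turns every $\INML$-rule instance into a valid $\NMLm$-rule instance. The new cases are the rewrites $\sassr$ and $\sassl$. Here I would use that both $\emb{(\cdot)}$ and $\cneg{(\cdot)}$ send a seq-formula $X\lseq Y$ to a seq-formula whose two immediate arguments are the images of $X$ and $Y$, in the same order, because $\emb{(X\lseq Y)}=\emb X\lseq\emb Y$ and $\cneg{(X\lseq Y)}=\cneg X\lseq\cneg Y$. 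Thus every seq-node of $B_i$ or of $C$ is carried to a seq-node of the translated formula, and an application of $\sassr$ (resp.\ $\sassl$) to a seq-subformula is mapped to an application of the \emph{same} rewrite on the corresponding seq-subformula of the translated sequent. Hence $\emb{\pi_A}$ is a valid $\NMLp$-proof of $\emb A$.

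For the right-to-left direction I would take a proof $\pi$ of $\emb A$ in $\NMLp$ (automatically cut-free) and re-run the polarity argument of \Cref{thm:NMLextensNiML}. The invariant to maintain along every branch is ``exactly one positive formula and no unpolarisable formula''. For the $\NMLm$-rules this is already done; for $\sassr$ and $\sassl$ it follows from \eqref{eq:polarity}, where $X\lseq Y$ is positive (resp.\ negative) precisely when both $X$ and $Y$ are. Consequently $A\lseq(B\lseq C)$ and $(A\lseq B)\lseq C$ always have the same polarity, so reassociation changes neither the polarity of the rewritten formula nor the number of positive formulas in the sequent, and it cannot turn a polarisable formula into an unpolarisable one. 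Therefore every sequent of $\pi$ has the shape $\vdash\bB_1,\ldots,\bB_n,\wC$, which by \Cref{lem:emb} transforms back into an $\INMLp$-sequent $B_1',\ldots,B_n'\vdash C'$ with $\emb{(B_i')}=(\bB_i)\lneg$ and $\emb{(C')}=\wC$; each rule instance, including $\sassr$ and $\sassl$, becomes a valid $\INMLp$-instance, yielding the desired proof of $A$.

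I do not expect a serious obstacle: the statement is a near-immediate specialisation of \Cref{thm:NMLextensNiML}. One might instead hope to chain the already-proved equivalences $\proves[\INMLp]A\iff\proves[\IBV]A$ (\Cref{cor:IBV}), $\proves[\NMLp]{\emb A}\iff\proves[\BVm]{\emb A}$ (\Cref{cor:BVm}) and $\proves[\IBVm]A\iff\proves[\BVm]{\emb A}$ (\Cref{thm:BVm}); but closing that chain would require the conservativity of $\IBV$ over its unit-free fragment $\IBVm$, which we have not established, so I prefer the direct route. The one point that must be recorded carefully there is the dual behaviour of $\sassr,\sassl$ under negation, namely that a reassociation performed on a formula that ends up \emph{negated} by the translation still maps to a legitimate reassociation; this is guaranteed by the DeMorgan identity $\cneg{(X\lseq Y)}=\cneg X\lseq\cneg Y$, which leaves both the tree shape and the left-to-right order of $\lseq$ unchanged. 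Once this is in place, both directions reduce to the bookkeeping already carried out for \Cref{thm:NMLextensNiML}.
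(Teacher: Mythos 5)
Your proposal is correct and takes essentially the same approach as the paper, whose proof consists precisely of the remark that the argument for \Cref{thm:NMLextensNiML} carries over verbatim once one observes that $\sassr$ and $\sassl$ do not change the polarity of the formula they are applied to. The extra details you supply (commutation of the reassociation rewrites with $\emb{(\cdot)}$ and with DeMorgan negation, and the preservation of the ``one positive, no unpolarisable formula'' invariant) are exactly the bookkeeping the paper leaves implicit, and your decision to avoid the chain through \Cref{thm:BVm} is sound, since that theorem is itself derived from this one.
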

\begin{proof}
  This proof is almost literally the same as the proof of \Cref{thm:NMLextensNiML}. We only need to observe that the two rules  $\sassr$ and $\sassl$ do not change the polarity of the formula they are applied to.
  \qed
\end{proof}

We now immediately have our desired result:

\thmBVm*
\begin{proof}
  For a unit-free formula $A$, we have ~$\proves[\IBV]A$~ iff ~$\proves[\IBVm]A$. Therefore, the theorem follows from \Cref{thm:INMLstar} and \Cref{cor:BVm} and \Cref{cor:IBV}.
\qed
\end{proof}


\end{document}